\documentclass[10pt,a4paper]{article}

\usepackage{amsmath,amssymb,amsthm}
\usepackage{enumitem}
\usepackage{booktabs}
\usepackage{array}
\usepackage[hidelinks]{hyperref}
\usepackage[T1]{fontenc}
\newtheorem{theorem}{Theorem}[section]
\newtheorem{lemma}[theorem]{Lemma}
\newtheorem{proposition}[theorem]{Proposition}
\newtheorem{corollary}[theorem]{Corollary}
\theoremstyle{definition}
\newtheorem{definition}[theorem]{Definition}
\theoremstyle{remark}
\newtheorem{remark}[theorem]{Remark}

\newcommand{\cmp}{\operatorname{cmp}}
\newcommand{\obs}{\operatorname{obs}}
\newcommand{\qcmp}{\operatorname{qcmp}}
\newcommand{\D}{\mathcal D}
\newcommand{\SCL}{\operatorname{SCL}}
\newcommand{\Obj}{\mathsf O}
\newcommand{\Om}{\Omega}
\newcommand{\blank}{\square}
\newcommand{\relto}{\mathrel{\rightsquigarrow}}

\title{Finite-Monoid Compression in Syntactic Concept Lattices: Arity Hierarchies and a Pseudovariety Trichotomy}
\author{Takayuki Kuriyama\\
Independent Researcher, Tokyo, Japan\\
\texttt{growup.kuriyama@gmail.com}}
\date{}

\begin{document}
\maketitle

\begin{abstract}
Clark's syntactic concept lattice (SCL) records two-sided distributional
structure, and Wurm extended it to tuples of arbitrary finite arity.
We study \(\cmp_f(L)\), the minimum image size of a finite-monoid observation
that preserves guarded tuple substitution through arity \(f\) on the
principal layer.

For regular languages, we characterize \(\cmp_f(L)\) exactly as the least
cardinality of the codomain of an \(f\)-separating relational morphism from
the pointed syntactic monoid.  Let \(\operatorname{ch}(\mathbf V)\) denote
the least arity at which these compression numbers stabilize uniformly over
a pseudovariety \(\mathbf V\).  Our main result is the following trichotomy
of possible uniform heights:
\(\operatorname{ch}(\mathbf V)\in\{1,2,\infty\}, \qquad \operatorname{ch}(\mathbf V)=\infty \Longleftrightarrow \operatorname{Synt}(\{ab\})\in\mathbf V.\)
Thus no finite uniform compression height \(3,4,\ldots\) occurs.

The infinite case is sharp: inside
\(\langle\operatorname{Synt}(\{ab\})\rangle\), every boundary
\(d\to d+1\) admits unbounded compression gaps, and arbitrary finite strict
prefixes of the arity hierarchy are realizable.  On the finite side,
commutative monoids and bands stabilize at arity one, while every completely
regular syntactic monoid stabilizes by arity two; finite group kernels show
that the binary bound is sharp.

At unary arity, every nonempty finite simple graph is realized by an explicit
length-three language, yielding an exact chromatic-number formula and
NP-completeness of deciding \(\cmp_1(L)\le3\) for explicitly listed
length-three languages.  The structural boundary between compression heights
one and two remains open.
\end{abstract}

\noindent\textbf{Keywords:} syntactic concept lattice; finite monoids; relational morphisms; pseudovarieties; tuple arity; graph coloring.

\noindent\textbf{2020 Mathematics Subject Classification:}
20M35 (primary); 20M07, 68Q70, 68Q45, 68Q17.

\section{Introduction}

\subsection{Finite compositional compression}

For a regular language, exact two-sided contextual behavior is captured by its
syntactic monoid $T_L$, the canonical finite monoid recognizing $L$.  The size
of this monoid is a standard algebraic descriptional measure: syntactic-monoid
size and its relation to automaton size have been studied systematically in
regular-language complexity~\cite{HolzerKonig2004}.  This paper starts from a
more selective question.  If exact recognition is replaced by one fixed
compositional task---guarded substitution of words and tuples into accepting
contexts---how much of $T_L$ must actually be retained?  Thus the invariant
studied below is best viewed as a task-relative relaxation of syntactic
complexity rather than as a competing recognition invariant.  Throughout the
paper, ``compression'' refers to reduction of this finite observation
codomain; it does not refer to compression of words or semigroup expressions
by straight-line programs.

There is a classical precedent for expecting such a relaxation not to be a
quotient problem.  In the minimization of incompletely specified finite-state
machines, compatibility of states need not be transitive, and the standard
reduction therefore proceeds through compatible covers rather than ordinary
equivalence-class minimization; Paull and Unger developed the classical
maximal-compatible/minimal-closed-cover framework~\cite{PaullUnger1959}, and
state reduction for incompletely specified machines is NP-complete%
~\cite{Pfleeger1973}.  The analogy here is structural, not an identification
of the two problems: our finite object is a monoid rather than a transition
cover, and multiplicative compositionality imposes constraints absent from the
classical machine problem.  But the same obstruction pattern reappears:
partial specifications naturally produce reflexive symmetric compatibility
without transitivity, so quotienting may be too rigid.  Minimum separating
DFAs provide a complementary algorithmic precedent in compositional
verification, where the size of a separator controls the size of a reusable
finite assumption~\cite{ChenFarzanClarkeTsayWang2009,
VazquezPargaGarciaLopez2016}.

The Clark--Wurm syntactic-concept construction supplies the missing ingredient:
it generates the partial specification \emph{internally from the language}.
Clark's syntactic concept lattice $\SCL_1(L)$ lifts syntactic algebra from
individual words to the Galois incidence between words and their accepting
two-sided contexts~\cite{ClarkSCL}; Wurm extends the construction to tuples of
every finite arity~\cite{WurmSCLTuples}.  Words and tuples enter these lattices
through their principal concepts.  Two distinct principal concepts create a
genuine substitution conflict precisely when their accepting intents overlap:
they can occur in one common accepting context, yet some other context
distinguishes them.  Definition~\ref{scl:def:safety} records this geometric predicate once, and the
Safety Interface Theorem~\ref{alg:thm:safety-interface} proves that its
representative-level principal-SCL, guarded word/context, congruence, and
canonical relational-morphism forms are all equivalent.  The subsequent
selector lemma upgrades the canonical relational interface to optimization over
arbitrary relational morphisms.

This gives a conceptual pipeline from exact syntactic information to
Clark--Wurm principal conflicts, and from there to a minimum finite
compositional separator.
The middle step is what makes the specification intrinsic rather than
externally supplied.  The last step is task-relative: we do not try to preserve
the whole concept lattice, only the distinctions required for substitution of
actual words and tuples.  In this sense the term ``canonical'' refers to the
conflict specification generated by the Clark--Wurm geometry, not to a claim
that there is a unique compression notion for all closed concepts.

For regular languages the resulting constraints admit an exact finite
algebraic reduction.  At arity one, equality of singleton-generated concepts
is syntactic congruence~\cite[Proposition~3.19]{ClarkSCL}, so the unary
principal layer is canonically $T_L$.  Higher SCL levels generate additional
separation constraints, while the unary principal monoid supplies a reusable
compositional palette.  We minimize the number of monoid values required
simultaneously through arity $f$; the optimum is the \emph{finite-monoid
compression number} $\cmp_f(L)$.  An optimal observation need not be constant
on syntactic classes, so the correct finite algebraic semantics is generally a
relational morphism of the pointed syntactic monoid rather than an ordinary
quotient.  This is the multiplicative analogue of the cover-like phenomenon
above, not a reduction of ISFSM minimization to our setting.

Exact recognition is always a witness, hence
\(\cmp_1(L)\le \cmp_2(L)\le\cdots\le |T_L|.\)
The chain quantifies how much exact syntactic information survives as the
required substitution arity increases.  It is already nontrivial at arity one:
for $L=\{a,a^3\}$, the contextual-conflict graph is 2-colorable but no
two-element monoid coloring is compatible with concatenation, and
$\cmp_1(L)=3$.  More sharply, Section~\ref{sec:conflict} gives the unary
family $K_n=\{a^n,a^{n+2}\}$, $n\ge2$, for which
\[
\begin{aligned}
 \chi(\Gamma_{K_n})=2<3=\cmp_f(K_n)&<n+2=\qcmp_f(K_n)\\
 &<|T_{K_n}|=n+4 \qquad(f\ge1).
\end{aligned}
\]
Thus passing from arbitrary finite compression to quotients of the syntactic
monoid can cost an unbounded number of states; the multivalued relational
interface is not merely a technical reformulation.  At the opposite structural
extreme, for a group-kernel language $K_\eta=\eta^{-1}(1_G)$ we later obtain
$\cmp_1=1$ and $\cmp_f=[G:Z(G)]$ for $f\ge2$.  Finally, every nonempty finite simple graph is
realized exactly at the unary level by an explicit length-three language.  These
examples show that $\cmp_f$ is neither syntactic-monoid size, SCL size, nor
ordinary graph coloring: it measures the least \emph{compositional} finite
resource preserving the SCL-generated substitution specification.

\subsection{Arity hierarchy and main theorem}

The organizing question is how the least compositional resource changes when
higher tuple arities are required.  Write
\(M_{ab}:=\operatorname{Synt}(\{ab\})\) and
\(\mathbf V_{ab}:=\langle M_{ab}\rangle\), where
\(M_{ab}=\{1,a,b,ab,0\}\).
At the level of individual languages, the hierarchy can remain arbitrarily
informative: inside this single five-element-generated aperiodic
pseudovariety, every boundary \(d\to d+1\) admits an unbounded multiplicative
increase in compression.  More strongly, for every finite \(m\) one language
can realize
\(\cmp_1(L)<\cmp_2(L)<\cdots<\cmp_m(L),\)
and the successive ratios can simultaneously exceed arbitrary prescribed
finite thresholds.  A triangular gluing principle isolates the mechanism
behind this finite-prefix universality.

Remarkably, uniformizing over a pseudovariety produces the opposite
phenomenon: the arbitrarily long local hierarchy collapses to a
three-valued global spectrum.  Let
\(\operatorname{ch}(\mathbf V)\) be the least arity after which \(\cmp_f\)
stabilizes uniformly for all regular languages with syntactic monoid in
\(\mathbf V\).  Our main structural theorem is
\[
 \boxed{\operatorname{ch}(\mathbf V)\in\{1,2,\infty\}},
\]
with the exact characterization
\(\operatorname{ch}(\mathbf V)=\infty \Longleftrightarrow M_{ab}\in\mathbf V \Longleftrightarrow \mathbf V_{ab}\subseteq\mathbf V.\)
Thus finite uniform heights \(3,4,\ldots\) do not occur.

The proof separates the infinite and finite regimes by the same five-element
obstruction.  If \(M_{ab}\in\mathbf V\), every-boundary amplification inside
\(\mathbf V_{ab}\) forces infinite height.  Conversely, if
\(M_{ab}\notin\mathbf V\), a divisor argument shows that
\(\mathbf V\subseteq\mathbf{Com} \qquad\text{or}\qquad \mathbf V\subseteq\mathbf{CR}.\)
Commutative syntax collapses already at arity one.  The main technical input
on the finite side is stronger: for every regular language with completely
regular syntactic monoid, every binary-separating relational morphism is
separating at every finite arity, so
\(\cmp_2=\cmp_3=\cdots\).  Finite bands give a distinct noncommutative
height-one regime, while finite group kernels make the binary bound sharp:
for \(K_\eta=\eta^{-1}(1_G)\),
\(\cmp_1(K_\eta)=1, \qquad \cmp_f(K_\eta)=[G:Z(G)]\quad(f\ge2).\)
The remaining general boundary between heights one and two is open.

The unary optimization also has a direct computational face.  Every nonempty
finite simple graph \(G\) is realized by an explicit length-three language
\(L_G\) with
\(\cmp_1(L_G)=\chi(G),\)
and no additional nonisolated conflict vertices.  Consequently deciding
\(\cmp_1(L)\le3\) is NP-complete for explicitly listed length-three
languages, and the corresponding problem for an explicitly given finite
pointed syntactic monoid is NP-complete as well.  Thus the same invariant
exhibits arbitrary local conflict geometry, strong arity amplification, and a
rigid pseudovariety-level stabilization spectrum.

\subsection{Related work and scope of contribution}

Clark introduced the syntactic concept lattice as a distributional algebra,
related it to syntactic congruence and the universal automaton, and established
its minimal/terminal recognition role~\cite{ClarkSCL}.  Wurm extended the
construction to finite and infinite tuple arities, proved the finite-arity
finiteness criterion, and characterized finiteness of the infinite-tuple SCL
\cite{WurmSCLTuples}; related automata-theoretic semantics appear in
\cite{WurmJLLI2017}, and Kuznetsov develops recent SCL semantics for infinitary
action logic~\cite{Kuznetsov2026}.  Clark--Yoshinaka give a different
SCL-style algebraic extension toward multiple context-free grammars
\cite{ClarkYoshinaka2014}.  To the best of our knowledge, none of these works
minimizes a single finite monoid resource preserving overlap-sensitive principal substitution conflicts
across tuple arities; tuple SCLs themselves are not claimed as new here.

The optimization also sits between several classical minimization traditions.
Syntactic-monoid size is an established algebraic descriptional measure%
~\cite{HolzerKonig2004}; our chain
$\cmp_1(L)\le\cmp_2(L)\le\cdots\le |T_L|$ asks how far that exact resource
can be compressed when only guarded substitution must be preserved.  At a
more abstract level, minimization under a non-transitive compatibility relation
is classical in incompletely specified finite-state machines: Paull--Unger
introduced maximal compatibles and closed covers~\cite{PaullUnger1959}, and
the associated state-reduction problem is NP-complete~\cite{Pfleeger1973}.
We use this only as a structural precedent.  Their compatible cover is not our
relational morphism; the latter must respect monoid multiplication, and our
compatibility constraints are generated by a language rather than supplied as
an incomplete machine specification.

Minimum consistent and separating automata provide a second, algorithmic
precedent for optimizing the size of a finite separator%
~\cite{PittWarmuth1993,ChenFarzanClarkeTsayWang2009,
VazquezPargaGarciaLopez2016,LaumenSnelVaandrager2026}; in compositional
verification, a small separating DFA serves as a reusable finite assumption.
Regular inference also admits direct graph-coloring formulations%
~\cite{CostaFlorencioVerwer2014}.  Our optimization differs in two ways: the
separation specification is generated internally by principal-intent overlap,
and the separator must carry a monoid multiplication reusable across tuple
arities.  The example $L=\{a,a^3\}$ isolates the second difference: its unary
conflict graph is 2-colorable, but $\cmp_1(L)=3$.

Relational morphisms are classical finite-semigroup tools.  Here they arise
canonically because a compression map need not be constant on syntactic
classes: the correct finite semantics is therefore a relation from the
syntactic monoid to the compression codomain rather than, in general, a
quotient map.  This fiber viewpoint is adjacent to the classical use of
relational morphisms in pointlike theory, where simultaneous membership in
fibers records which source elements cannot be separated by a prescribed class
of finite targets~\cite{Almeida1994,RhodesSteinberg2009}.  We are not solving a
pointlike-set problem: our relation is constrained by the pointed unsafe-tuples
of one language, and the objective is the \emph{least cardinality of a single
separating codomain}.  Proposition~\ref{conf:prop:strict-ladder-family} below
also shows that this relational freedom is quantitatively essential: restricting
to syntactic quotients incurs an unbounded additive cost even for unary finite
languages.

The pseudovariety-level invariant $\operatorname{ch}(\mathbf V)$ has a
separate methodological precedent.  Algebraic complexity theory studies
programs uniformly over classes of finite monoids%
~\cite{BarringtonTherien1988}; more recent work identifies tameness phenomena
and resource hierarchies inside particular monoid classes%
~\cite{GrosshansMcKenzieSegoufin2017}.  No technical reduction is claimed: our
resource is codomain cardinality and our parameter is tuple arity.  The
parallel is that a local resource hierarchy is being uniformized over an
algebraically defined source class.

The guarded substitution condition originates in distributional grammatical
inference: see Clark--Eyraud for substitutable CFGs and Yoshinaka for
multidimensional substitutability
\cite{ClarkEyraud2007,Yoshinaka2009,Yoshinaka2011}.  A companion MCFG
manuscript by the author~\cite{KuriyamaMCFG2026} uses the same numerical
minimum, under the notation \(\obs_f\), as a learning-theoretic information
parameter.  Accordingly, the numerical invariant itself is not claimed as
new here.  The new contribution is not the numerical minimum itself but its
intrinsic finite-semigroup structure: the exact relational-morphism
characterization for regular languages, the resulting arity hierarchy and its
amplification phenomena, the completely-regular saturation theorem, and the
pseudovariety-level compression-height trichotomy.  No learning,
reconstruction, or grammar-rank result from the companion manuscript is used
in any proof below.

Finally, the pseudovariety and Rees-matrix background is standard
\cite{Pin1986,Almeida1994,RhodesSteinberg2009,Howie1995,Petrich1969,
Petrich1974}, and \(\operatorname{Synt}(\{ab\})\) occurs among the classical
minimal noncommutative aperiodic generators~\cite{MargolisPin1984}.
Compatible tolerances are classical
\cite{ZelinkaTolerance1975,Pondelicek1978,Loganathan1986,KumaresanTolerance1984}.  In modern
terminology, Barber--Ru\v{s}kuc study reflexive compatible relations as
diagonal subsemigroups and adapt Rees-matrix congruence structure to describe
them~\cite{BarberRuskucDirectSquare,BarberRuskucDiagonal2026}.  Those general results do not by themselves yield the saturation theorem
of Section~\ref{sec:cr-saturation}: our fiber-overlap relation is a
special symmetric diagonal subsemigroup, but the additional \emph{pointed}
binary-separation condition relative to \(P\) forces row/column rigidity and a
unique multiplicative central defect.  The new conclusion is the all-arity
stabilization theorem, not a classification of diagonal subsemigroups.

\subsection{Organization}

Sections~\ref{sec:scl-geometry}--\ref{sec:algebraic-conflict} develop the SCL,
word/context, congruence, and relational-morphism interfaces, ending with the
commutative height-one bound.  Section~\ref{sec:vab-arity} proves
every-boundary amplification, finite-prefix universality, and introduces
compression height.  Section~\ref{sec:cr-saturation} proves the completely
regular arity-two saturation theorem.  Section~\ref{sec:global-arity-trichotomy}
combines these ingredients with the structural divisor argument to obtain the
global trichotomy and records sharp group, band, and benchmark regimes.
Section~\ref{sec:graph-universality} gives graph realization and
NP-completeness, and Section~\ref{sec:discussion} closes with the remaining
problems.

\section{Syntactic Concept Geometry and Finite-Monoid Compression}
\label{sec:scl-geometry}

\paragraph{Notation guide.}
The most frequently used symbols are collected here for reference.
\begin{center}
\small
\begin{tabular}{@{}l p{0.72\linewidth}@{}}
\toprule
Symbol & Meaning\\
\midrule
\(\D_L^{(d)}(\vec x)\) & accepting \(d\)-tuple contexts of \(\vec x\)\\
\(\Obj_d(L),\Om_d(L)\) & principal \(d\)-ary SCL object concepts and their overlap graph\\
\(\Gamma_L,\Gamma_L^{\mathrm{syn}}\) & word-level and syntactic unary conflict graphs\\
\(\mathfrak U_d(L),\mathcal U_d(T,P)\) & semantic and finite syntactic unsafe-pair relations\\
\(\Phi_d(\mathbf t)\) & finite syntactic profile of a \(d\)-tuple\\
\(\operatorname{Safe}_f(h;L)\) & safety of a compression map through arity \(f\)\\
\(\cmp_f(L),\qcmp_f(L)\) & unrestricted and quotient-restricted finite-monoid compression numbers\\
\(\operatorname{ch}(\mathbf V)\) & uniform SCL compression height of a pseudovariety\\
\bottomrule
\end{tabular}
\end{center}

\subsection{The Clark--Wurm finite-tuple tower}

Fix \(L\subseteq\Sigma^*\) and \(d\ge1\).  Following Wurm
\cite{WurmSCLTuples}, the arity-\(d\) object set is \((\Sigma^*)^d\), the
context set is \((\Sigma^*)^{d+1}\), and the incidence relation is
\[
(w_1,\ldots,w_d)\ I_d\ (x_0,\ldots,x_d)
\quad\Longleftrightarrow\quad
x_0w_1x_1\cdots w_dx_d\in L.
\]
We use the equivalent fixed-order hole notation
\(E=x_0\blank_1x_1\blank_2\cdots x_{d-1}\blank_dx_d\), for which
\(E[\vec w]=x_0w_1x_1\cdots w_dx_d\).
Thus a ``tuple context'' always means this order-preserving Clark--Wurm
context; empty tuple components are allowed.  No permutation-enriched context
notion is used anywhere in the paper.
For \(A\subseteq(\Sigma^*)^d\) and \(C\subseteq(\Sigma^*)^{d+1}\), put
\[
A^\uparrow=\{\mathbf x:\mathbf w I_d\mathbf x\text{ for every }\mathbf w\in A\},
\qquad
C^\downarrow=\{\mathbf w:\mathbf w I_d\mathbf x\text{ for every }\mathbf x\in C\}.
\]
The closed object sets \(A^{\uparrow\downarrow}\), equivalently the formal
concepts \((A^{\uparrow\downarrow},A^\uparrow)\), form Wurm's
\(\SCL_d(L)\).  Componentwise concatenation of tuples, followed by closure,
gives its monoid operation; residuals then make it a residuated lattice
\cite{WurmSCLTuples}.

For a tuple \(\mathbf w\), write
\(
 \gamma_d(\mathbf w)
 :=(\{\mathbf w\}^{\uparrow\downarrow},\{\mathbf w\}^\uparrow)
\)
for its \emph{object concept}, and let
\(
 \Obj_d(L):=\{\gamma_d(\mathbf w):\mathbf w\in(\Sigma^*)^d\}
\)
be the principal/object layer.  At \(d=1\), Clark's result
\cite[Proposition~3.19]{ClarkSCL} gives
\(
 \gamma_1(u)=\gamma_1(v)
 \quad\Longleftrightarrow\quad
 u\equiv_L v.
\)
Because singleton concepts multiply by
\(\gamma_1(u)\circ\gamma_1(v)=\gamma_1(uv)\), we obtain the canonical monoid
identification
\begin{proposition}[Unary principal monoid]
\label{scl:prop:unary-principal}
For every language \(L\), the monoid of unary object concepts is canonically
\(
 \Obj_1(L)\cong \Sigma^*/{\equiv_L}.
\)
For regular \(L\), this is the finite syntactic monoid \(T_L\).
\end{proposition}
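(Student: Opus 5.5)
The plan is to define the map \(\Sigma^*/{\equiv_L}\to\Obj_1(L)\), \([u]\mapsto\gamma_1(u)\), and check that it is a well-defined monoid isomorphism.

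First, well-definedness and injectivity come at once from Clark's result quoted above, \(\gamma_1(u)=\gamma_1(v)\Leftrightarrow u\equiv_L v\). For completeness I would re-derive it directly. We have \(\gamma_1(u)=\gamma_1(v)\) iff \(\{u\}^{\uparrow\downarrow}=\{v\}^{\uparrow\downarrow}\). Since \((\cdot)^\uparrow\) and \((\cdot)^\downarrow\) form a Galois connection, this holds iff \(\{u\}^\uparrow=\{v\}^\uparrow\). That equality says exactly that \(xuy\in L\Leftrightarrow xvy\in L\) for all contexts \((x,y)\), which is the definition of \(u\equiv_L v\). Surjectivity holds by the definition of \(\Obj_1(L)\) as the set of all \(\gamma_1(u)\).

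Second, I would verify that the map respects multiplication. The SCL product of concepts with extents \(A,B\) is \((AB)^{\uparrow\downarrow}\), so I must show \((\{u\}^{\uparrow\downarrow}\{v\}^{\uparrow\downarrow})^{\uparrow\downarrow}=\{uv\}^{\uparrow\downarrow}\).

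The inclusion \(\supseteq\) is clear: \(uv\) lies in the product set, and closure is monotone.

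For \(\subseteq\), it suffices to show that every context \((x,y)\) accepting \(uv\) also accepts every \(u'v'\) with \(u'\in\{u\}^{\uparrow\downarrow}\) and \(v'\in\{v\}^{\uparrow\downarrow}\). Since \(xuvy\in L\), the pair \((x,vy)\) lies in \(\{u\}^\uparrow\), so \(xu'vy\in L\). Then \((xu',y)\) lies in \(\{v\}^\uparrow\), so \(xu'v'y\in L\). Hence \(\{uv\}^\uparrow\subseteq(\{u\}^{\uparrow\downarrow}\{v\}^{\uparrow\downarrow})^\uparrow\), and applying \(\downarrow\) reverses this inclusion as required.

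This two-step context-shifting argument is the only nonformal step, and I expect it to be the main point needing care. The unit \(\gamma_1(\varepsilon)\) is the identity of the product by the same computation.

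Finally, \(\Obj_1(L)\) is a submonoid of \(\SCL_1(L)\) isomorphic to \(\Sigma^*/{\equiv_L}\). The isomorphism is canonical because it is induced by \(u\mapsto\gamma_1(u)\), with no choices involved. For regular \(L\), Myhill's theorem gives that \(\equiv_L\) has finite index and that \(\Sigma^*/{\equiv_L}\) is the syntactic monoid \(T_L\).
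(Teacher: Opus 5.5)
Your proposal is correct and follows the same route as the paper. The paper obtains the identification from Clark's criterion \(\gamma_1(u)=\gamma_1(v)\Leftrightarrow u\equiv_L v\) together with the singleton product rule \(\gamma_1(u)\circ\gamma_1(v)=\gamma_1(uv)\); you re-derive both facts (the Galois-connection step and the two-step context-shifting argument for the product) where the paper cites or asserts them, which is a sound but inessential addition.
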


\subsection{Principal-intent overlap}

For a tuple \(\vec x\in(\Sigma^*)^d\), write
\(
 \D_L^{(d)}(\vec x):=\{(u_0,\ldots,u_d):u_0x_1u_1\cdots x_du_d\in L\}.
\)
This is exactly the intent of the principal concept \(\gamma_d(\vec x)\).
For \(d=1\) we write \(\D_L(x)\).

\begin{definition}[Principal-overlap graph]
For \(d\ge1\), let \(\Om_d(L)\) be the simple graph with vertex set
\(\Obj_d(L)\), where distinct object concepts \(A,B\) are adjacent iff
\(
 \operatorname{Int}(A)\cap\operatorname{Int}(B)\ne\varnothing.
\)
\end{definition}

Thus adjacency means that two distributionally different tuple concepts are
licensed by at least one common accepting context.  A finite monoid
homomorphism \(h:\Sigma^*\to M\) induces the componentwise color
\(h^{(d)}(w_1,\ldots,w_d):=(h(w_1),\ldots,h(w_d))\in M^d\)
on tuples.  The safety condition asks which of these componentwise colors may
be identified without collapsing a principal-overlap conflict.

\begin{definition}[Finite-monoid safety through arity \(f\)]
\label{scl:def:safety}
Let \(h:\Sigma^*\to M\) be a homomorphism into a finite monoid and let
\(f\ge1\).  We write
\(\operatorname{Safe}_f(h;L)\)
if, for every \(1\le d\le f\), no pair of tuples generating distinct
adjacent vertices of \(\Om_d(L)\) receives the same componentwise
\(h\)-type.  Equivalently, for any two distinct adjacent principal concepts,
no choice of one generating tuple for each yields equal componentwise
\(h\)-types.
Synonymously, we say that \(L\) is \((f,h)\)-tuple-substitutable.  These are
not two notions: throughout the paper they are two names for the same safety
predicate, the first emphasizing the SCL geometry and the second the guarded
substitution task.
\end{definition}

Since the intent of \(\gamma_d(\vec x)\) is exactly
\(\D_L^{(d)}(\vec x)\), an edge of \(\Om_d(L)\) is precisely an
overlap-without-equivalence pair: the two tuples share at least one accepting
context but are distinguished by some other context.  Thus
\(\operatorname{Safe}_f(h;L)\) can already be read operationally as saying
that equality of finite \(h\)-type may identify such tuples only when their
complete tuple distributions coincide.  For regular languages this same
witness has congruence and finite relational forms.  We keep the
surface-specific terminology---\emph{safe} for compression maps and
\emph{separating} for congruences and relational morphisms---and record the
exact equivalences in the Safety Interface Theorem,
Theorem~\ref{alg:thm:safety-interface}.

\begin{remark}[Why the principal layer is the substitution layer]
\label{scl:rem:principal-canonical}
The restriction to principal concepts is not an arbitrary truncation of the
SCL.  The objects actually inserted into a tuple context are words and tuples
of words, and their images in the Clark--Wurm lattice are exactly the
principal concepts \(\gamma_d(\vec x)\).  General closed concepts represent
Galois-closed aggregates of tuples rather than additional compositional inputs.
Thus Definition~\ref{scl:def:safety} identifies the precise conflict
geometry forced by guarded tuple substitution.  Our claim of canonicity is
relative to this task: we do not claim that principal-overlap compression is
the unique meaningful compression notion for the full concept lattice.
Table~\ref{tab:scl-dictionary} summarizes the resulting dictionary.
\end{remark}

\begin{table}[t]
\centering
\caption{Dictionary between principal SCL geometry and tuple-substitution language.}
\label{tab:scl-dictionary}
\begin{tabular}{@{}ll@{}}
\toprule
\textbf{SCL language} & \textbf{Word/context language}\\
\midrule
principal object concept \(\gamma_d(\vec x)\) & tuple distribution class of \(\vec x\)\\
intent of \(\gamma_d(\vec x)\) & \(\D_L^{(d)}(\vec x)\)\\
distinct principal concepts & unequal tuple distributions\\
overlapping intents & a shared accepting tuple context\\
edge of \(\Om_d(L)\) & an unsafe tuple pair\\
finite-monoid compression & coordinatewise conflict separation\\
\bottomrule
\end{tabular}
\end{table}

\subsection{Compression number}

Define
\(
 \cmp_f(L)
 :=\min\{|\operatorname{im}(h)|:h\text{ is safe through arity }f\},
\)
with value \(\infty\) if no finite witness exists.  We call this the
\emph{finite-monoid compression number} of the principal SCL geometry through
arity \(f\).

A useful comparison ignores compositional compatibility entirely.

\begin{remark}[Relation to the companion MCFG observation parameter]
\label{rem:obs-cmp-reindex}
The companion MCFG manuscript~\cite{KuriyamaMCFG2026} writes
\(\obs_f(L)\) for the minimum image size of a finite observation making \(L\)
orientation-aware \((f,h)\)-tuple-substitutable.  Its named contexts are
partitioned into permutation sectors, whereas the Clark--Wurm convention
used here fixes the surface order of the tuple coordinates.

The companion condition is sectorwise: substitutability is tested within
each fixed permutation sector, not by comparing contexts belonging to
different sectors.  For every \(\sigma\in S_d\), reindexing
\((x_1,\ldots,x_d) \longmapsto (x_{\sigma(1)},\ldots,x_{\sigma(d)})\)
identifies the \(\sigma\)-sector with the fixed-order contexts used here.
Since the fixed-order safety condition quantifies over all tuples, it applies
equally to every such reindexed pair.  Moreover, applying the same
permutation to both tuples preserves componentwise equality of their
\(h\)-values.

Consequently the safety conditions agree sector by sector, and whenever the
two notations are applied to the same language,
\(\obs_f(L)=\cmp_f(L).\)
We retain the symbol \(\cmp_f\) because the present paper studies this number
as an SCL compression invariant rather than as a learning-side information
parameter.
\end{remark}

\begin{proposition}[Geometric lower bound]
\label{scl:prop:chromatic-lower}
For every language \(L\) and every \(d\ge1\),
\(
 \left\lceil\chi(\Om_d(L))^{1/d}\right\rceil\le \cmp_d(L).
\)
\end{proposition}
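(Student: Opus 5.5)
The plan is to turn an optimal safe homomorphism into a proper vertex coloring of \(\Om_d(L)\) and count colors. If \(\cmp_d(L)=\infty\) there is nothing to prove, so fix a homomorphism \(h:\Sigma^*\to M\) with \(\operatorname{Safe}_d(h;L)\) and \(|\operatorname{im}(h)|=\cmp_d(L)=:k\). The candidate coloring palette is \(\operatorname{im}(h)^d\), which has exactly \(k^d\) elements. It then suffices to show \(\chi(\Om_d(L))\le k^d\). Since \(\chi\) is an integer, \(\chi(\Om_d(L))^{1/d}\le k\) gives \(\lceil\chi(\Om_d(L))^{1/d}\rceil\le k\), because \(k\) is itself an integer.

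To define the coloring, choose for each vertex \(A\in\Obj_d(L)\) one generating tuple \(\mathbf w_A\) with \(\gamma_d(\mathbf w_A)=A\); this is possible since every object concept is principal by definition. Set \(c(A):=h^{(d)}(\mathbf w_A)\in\operatorname{im}(h)^d\).

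For properness, take adjacent \(A\ne B\) in \(\Om_d(L)\). Definition~\ref{scl:def:safety}, at arity \(d\le d\), says that no choice of generating tuples for two distinct adjacent principal concepts has equal componentwise \(h\)-types. In particular \(h^{(d)}(\mathbf w_A)\ne h^{(d)}(\mathbf w_B)\), so \(c(A)\ne c(B)\). Thus \(c\) is a proper coloring with at most \(k^d\) colors, and \(\chi(\Om_d(L))\le k^d\).

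The only subtle point is the choice of representatives. The coloring depends on the chosen \(\mathbf w_A\), but the ``every choice of generating tuple'' formulation of safety makes any choice work, so no well-definedness issue arises.

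A second point is that \(\Om_d(L)\) may be infinite for nonregular \(L\). I will read \(\chi\) as the least cardinal admitting a proper coloring, and the argument above applies verbatim. In particular, a finite \(\cmp_d(L)\) forces a finite chromatic number, so the inequality is meaningful in all cases, and the case \(\chi=\infty\) is consistent with \(\cmp_d=\infty\).

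I expect no step to be a real obstacle; the argument is a direct unpacking of the safety definition.
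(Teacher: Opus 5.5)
Your proof is correct and is essentially the paper's argument. You color each principal concept by the componentwise \(h\)-type of a chosen generating tuple, get properness directly from the safety predicate, bound the number of colors by \(m^d\), and take the integer ceiling. Your treatment of the infinite-\(\chi\) case also matches the paper's closing remark that a finite witness forces finite chromatic number.
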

\begin{proof}
Let \(h\) be a witness with \(m\) image values.  Assign to each object concept
\(A\in\Obj_d(L)\) any componentwise \(h\)-value realized by a tuple generating
\(A\).  This choice need not be canonical.  Nevertheless, if adjacent concepts
received the same vector in \(M^d\), the chosen representatives would have the
same principal intents as those concepts.  Their tuple distributions would
therefore be distinct but overlapping, so the representatives would form an
unsafe pair collapsed in every coordinate, contradicting safety.  Hence this
is a proper coloring using at most \(m^d\) colors.  Thus \(\chi(\Om_d(L))\le m^d\).  Since \(m\) is a
positive integer, \(\lceil\chi(\Om_d(L))^{1/d}\rceil\le m\).  Minimizing
over witnesses gives the claim.  The same argument shows that if
\(\chi(\Om_d(L))\) is infinite, then no finite witness exists and hence
\(\cmp_d(L)=\infty\).
\end{proof}

\subsection{Finite SCL reduction for regular languages}

Assume \(L\) is regular, let \(\eta:\Sigma^*\twoheadrightarrow T\) be its
syntactic morphism, and let \(P=\eta(L)\).  For \(k\ge1\), write
\(\eta^{(k)}:(\Sigma^*)^k\to T^k\) for the componentwise extension.  Define
the finite formal context
\(
 \mathcal K_d(T,P)=\bigl(T^d,T^{d+1},I_{d,T,P}\bigr),
\)
where
\(
 (t_1,\ldots,t_d)\ I_{d,T,P}\ (q_0,\ldots,q_d)
 \quad\Longleftrightarrow\quad
 q_0t_1q_1\cdots t_dq_d\in P.
\)
The maps \(\eta^{(d)}\) on objects and \(\eta^{(d+1)}\) on contexts are surjective
and preserve incidence in both directions.  Hence duplicate rows and columns
are the only information lost.

\begin{proposition}[Finite concept reduction]
\label{scl:prop:finite-reduction}
For every regular \(L\) and finite \(d\), \(\SCL_d(L)\) is isomorphic to the
concept lattice of \(\mathcal K_d(T_L,P)\).  In particular the principal
intent of \(\mathbf t=(t_1,\ldots,t_d)\in T^d\) is
\(
 \Phi_d(\mathbf t)
 =\{(q_0,\ldots,q_d)\in T^{d+1}:q_0t_1q_1\cdots t_dq_d\in P\}.
\)
\end{proposition}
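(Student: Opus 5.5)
The plan is to prove Proposition~\ref{scl:prop:finite-reduction} by a general fact about formal contexts, applied to the surjections \(\eta^{(d)}\) and \(\eta^{(d+1)}\). The fact is this. Let \((G,M,I)\) and \((G',M',I')\) be formal contexts, and let \(\alpha:G\to G'\) and \(\beta:M\to M'\) be surjections with
\[
g\,I\,m\iff \alpha(g)\,I'\,\beta(m).
\]
Then the concept lattices of the two contexts are isomorphic. I will check the incidence condition first. For \(\vec w\in(\Sigma^*)^d\) and \(\vec x\in(\Sigma^*)^{d+1}\), the word \(E[\vec w]=x_0w_1x_1\cdots w_dx_d\) lies in \(L\) iff \(\eta(E[\vec w])\in P\), because \(L\) is recognized by its syntactic morphism, so \(L=\eta^{-1}(P)\). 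By multiplicativity, \(\eta(E[\vec w])=\eta(x_0)\eta(w_1)\cdots\eta(w_d)\eta(x_d)\). So the incidence \(I_d\) on words corresponds exactly to \(I_{d,T,P}\) on their images. Surjectivity of each \(\eta^{(k)}\) holds because \(\eta\) is onto.

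Next I transport the derivation operators. For \(C'\subseteq T^{d+1}\), I will show \((\beta^{-1}C')^\downarrow=\alpha^{-1}(C'^\downarrow)\), and dually \((\alpha^{-1}A')^\uparrow=\beta^{-1}(A'^\uparrow)\). Both follow directly from the incidence equivalence. I also need \((\alpha A)^\uparrow\) to be the preimage-free description of \(A^\uparrow\): for \(A\subseteq G\), the equivalence gives \(A^\uparrow=\beta^{-1}((\alpha A)^{\uparrow})\).

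With these identities in hand, I will define the map on extents by \(A\mapsto\alpha(A)\), with inverse \(A'\mapsto\alpha^{-1}(A')\). Every closed set \(A=A^{\uparrow\downarrow}\) is saturated, meaning \(A=\alpha^{-1}\alpha(A)\): tuples with the same image have the same intent, so they belong to the same closed sets. Surjectivity of \(\beta\) gives \(\beta\beta^{-1}=\mathrm{id}\), and so \(\alpha(A)\) is closed in \(\mathcal K_d(T,P)\). Conversely, preimages of closed sets are closed by the first identity. The two maps are mutually inverse and inclusion-preserving, which yields the lattice isomorphism.

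I expect the main obstacle to be the saturation and closedness bookkeeping, and the way to handle it is to keep surjectivity in view throughout. If the paper intends "isomorphic" to include the residuated monoid structure, I will also check that the map respects multiplication. Componentwise concatenation followed by closure maps to componentwise multiplication in \(T^d\) followed by closure, because \(\alpha\) is a homomorphism and commutes with closure. The formula for the principal intent then follows at once. The intent of \(\gamma_d(\vec w)\) is \(\D_L^{(d)}(\vec w)=\beta^{-1}(\Phi_d(\eta^{(d)}\vec w))\), which is identified with \(\Phi_d(\mathbf t)\) for \(\mathbf t=\eta^{(d)}(\vec w)\).
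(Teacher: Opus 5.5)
Your proposal is correct and follows essentially the same route as the paper. Like the paper, you transport the derivation operators along the surjections \(\eta^{(d)}\) and \(\eta^{(d+1)}\), observe that every closed extent is the saturated preimage of a unique closed extent of \(\mathcal K_d(T,P)\), and use surjectivity for the converse; one small attribution slip is that closedness of \(\alpha(A)\) comes from \(A=\alpha^{-1}\bigl((\alpha A)^{\uparrow\downarrow}\bigr)\) together with \(\alpha\alpha^{-1}=\mathrm{id}\) (surjectivity of \(\alpha\)), whereas surjectivity of \(\beta\) is what you already used for \((\beta^{-1}C')^\downarrow=\alpha^{-1}(C'^\downarrow)\).
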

\begin{proof}
For concrete tuples \(\mathbf w\) and contexts \(\mathbf x\), incidence is
equivalent to incidence of \(\eta^{(d)}(\mathbf w)\) and
\(\eta^{(d+1)}(\mathbf x)\) in \(\mathcal K_d(T,P)\).  Both maps are
surjective.  If \(\uparrow_T,\downarrow_T\) denote the Galois operators of the
finite context, then for all object sets \(A\subseteq(\Sigma^*)^d\) and context
sets \(C\subseteq(\Sigma^*)^{d+1}\),
\(A^\uparrow = (\eta^{(d+1)})^{-1}\!\bigl((\eta^{(d)}(A))^{\uparrow_T}\bigr), \qquad C^\downarrow = (\eta^{(d)})^{-1}\!\bigl((\eta^{(d+1)}(C))^{\downarrow_T}\bigr).\)
Hence every closed extent and intent of the infinite context is saturated with
respect to the relevant syntactic map and is the inverse image of a unique
closed extent or intent in the finite context; surjectivity gives the converse.
The two concept lattices are therefore isomorphic.
\end{proof}

The remaining sections turn this SCL formulation into congruence and
relational-morphism optimization and then classify its arity behavior.

\section{Word-Level Conflict and Compositional Compression}
\label{sec:conflict}

All monoids in this paper have an identity, and all homomorphisms preserve it.
Fix a finite alphabet \(\Sigma\).  We retain the fixed-order Clark--Wurm tuple
contexts and filling notation \(E[\vec x]\) from
Section~\ref{sec:scl-geometry}.  The tuple distribution
\(\D_L^{(d)}(\vec x)\) is therefore equivalently the set of tuple contexts
\(E\) with \(E[\vec x]\in L\).

No new safety predicate is introduced at the word level.  We use the synonym
``\(L\) is \((f,h)\)-tuple-substitutable'' from
Definition~\ref{scl:def:safety}.  The point of the present section is to expose
the resulting semantic conflict clauses explicitly; Theorem~\ref{alg:thm:safety-interface}
will later identify the same predicate with its finite syntactic and relational
forms.

\begin{proposition}[Recognition-to-compression relaxation]
\label{conf:prop:recognition-relaxation}
Let \(h:\Sigma^*\to M\) be a monoid morphism recognizing \(L\), so that
\(L=h^{-1}(P)\) for some \(P\subseteq M\). Then \(h\) witnesses
\((f,h)\)-tuple substitutability for every finite \(f\). Consequently, if \(L\)
is regular, then
\(
\cmp_f(L)\le |T_L|
\qquad(f\ge1),
\)
where \(T_L\) is the syntactic monoid of \(L\).
\end{proposition}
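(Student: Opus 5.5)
The plan is to verify the safety predicate of Definition~\ref{scl:def:safety} directly from recognition, and then specialize to the syntactic morphism. Fix \(d\le f\) and two tuples \(\vec x=(x_1,\ldots,x_d)\), \(\vec y=(y_1,\ldots,y_d)\) with equal componentwise \(h\)-types, i.e.\ \(h(x_i)=h(y_i)\) for all \(i\). We will show that \(\D_L^{(d)}(\vec x)=\D_L^{(d)}(\vec y)\). Then \(\gamma_d(\vec x)=\gamma_d(\vec y)\), since an object concept is determined by its intent. Hence no pair of tuples generating \emph{distinct} vertices of \(\Om_d(L)\), adjacent or not, can share an \(h\)-type, which is exactly \(\operatorname{Safe}_f(h;L)\).

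The key step is a one-line computation. For any context \(E=x_0\blank_1u_1\cdots\blank_du_d\) with \(u_0=x_0\), multiplicativity of \(h\) gives
\[
h(E[\vec x])=h(u_0)h(x_1)h(u_1)\cdots h(x_d)h(u_d)=h(u_0)h(y_1)h(u_1)\cdots h(y_d)h(u_d)=h(E[\vec y]).
\]
Since \(L=h^{-1}(P)\), we get \(E[\vec x]\in L\iff E[\vec y]\in L\). So the two tuple distributions coincide. This argument uses no bound on \(d\), so it yields safety through every finite arity \(f\) at once.

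For the consequence, we take \(h=\eta:\Sigma^*\twoheadrightarrow T_L\), the syntactic morphism, with \(P=\eta(L)\). It recognizes \(L\) because \(L\) is a union of syntactic classes, so \(L=\eta^{-1}(\eta(L))\). Since \(\eta\) is surjective, \(|\operatorname{im}(\eta)|=|T_L|\), which is finite for regular \(L\). The definition of \(\cmp_f(L)\) as a minimum over safe witnesses then gives \(\cmp_f(L)\le|T_L|\).

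There is no real obstacle here. The only point needing care is the identification of a principal concept with its intent, \(\operatorname{Int}(\gamma_d(\vec x))=\D_L^{(d)}(\vec x)\), which the paper has already recorded. With it, equal distributions mean equal vertices, so the adjacency hypothesis is never even needed.
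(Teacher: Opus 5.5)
Your proof is correct and follows essentially the same route as the paper: equal componentwise \(h\)-types give \(h(E[\vec x])=h(E[\vec y])\) for every tuple context, so recognition forces equal tuple distributions at every arity, and the bound \(\cmp_f(L)\le|T_L|\) follows by applying this to the surjective syntactic morphism. One cosmetic slip: write the context as \(E=u_0\blank_1u_1\cdots\blank_du_d\) instead of introducing \(x_0\), which clashes with your tuple notation.
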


\begin{proof}
Suppose \(h^{(d)}(\vec x)=h^{(d)}(\vec y)\) for some finite \(d\). For every
tuple context
\(E=u_0\blank_1u_1\cdots\blank_du_d\),
compatibility of \(h\) with concatenation gives
\(
h(E[\vec x])=h(E[\vec y]).
\)
Since membership in \(L\) depends only on the \(h\)-value, we have
\(E[\vec x]\in L\) if and only if \(E[\vec y]\in L\). Hence the two tuple
distributions are equal, so \(h\) is safe at every finite arity. For regular \(L\), apply the statement to the syntactic morphism
\(\eta_L:\Sigma^*\twoheadrightarrow T_L\).
\end{proof}

\begin{remark}[Exact recognition versus sufficient compression]
Proposition~\ref{conf:prop:recognition-relaxation} makes the relaxation
literal: exact recognizers form a restricted feasible class for the
compression problem. The invariant \(\cmp_f\) therefore measures how much
finite compositional information can be forgotten when the task is weakened
from exact recognition to contextual-substitution safety through arity \(f\).
The examples below show that this relaxation can remain arbitrarily large even
when all finite arities are required, while in other families the hierarchy
stabilizes already at arity two.  For kernel languages of centerless finite
groups, the binary level even recovers the full syntactic-monoid size.
\end{remark}

\begin{definition}[Contextual conflict / unsafe tuple pair]
For \(d\ge1\), put
\(\mathfrak U_d(L) :=\left\{ (\vec x,\vec y): \D_L^{(d)}(\vec x)\cap\D_L^{(d)}(\vec y)\neq\varnothing, \ \D_L^{(d)}(\vec x)\neq\D_L^{(d)}(\vec y) \right\}.\)
Its elements are the \emph{contextual conflicts}, or \emph{unsafe
arity-\(d\) tuple pairs}, of \(L\).
\end{definition}

The two requirements in this definition have distinct roles. If the
complete distributions are equal, substitution is semantically safe and no
separation is needed. If the distributions are disjoint, there is no
successful context in which the two tuples are simultaneously licensed, so
the guarded substitution principle never compares them. The unsafe pairs are
therefore exactly the overlap-without-equivalence configurations that must be
resolved. Equivalently, \(h\) witnesses \(\cmp_f\) precisely when every
\((\vec x,\vec y)\in\mathfrak U_d(L)\), \(d\le f\), satisfies \(h(x_1)\neq h(y_1)\ \vee\cdots\vee\ h(x_d)\neq h(y_d).\) 

\subsection{The arity-one conflict graph}

\begin{definition}[Contextual-conflict graph]
For a language \(L\subseteq\Sigma^*\), let \(\Gamma_L\) be the simple
graph with vertex set \(\Sigma^*\) and an edge \(\{x,y\}\) for each
distinct unsafe unary pair, equivalently when
\(
\D_L(x)\cap\D_L(y)\neq\varnothing
\quad\text{and}\quad
\D_L(x)\neq\D_L(y).
\)
\end{definition}

\begin{theorem}[Compositional-coloring characterization]
\label{conf:thm:coloring}
For every language \(L\),
\[
\cmp_1(L)
=
\min\left\{
|\operatorname{im}(h)|:
\begin{array}{l}
h:\Sigma^*\to M\text{ is a finite-monoid}\\
\text{homomorphism and a proper coloring of }\Gamma_L
\end{array}
\right\}.
\]
In particular, \(\chi(\Gamma_L)\le\cmp_1(L).\) \end{theorem}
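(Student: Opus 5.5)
The plan is to unwind both sides of the equality into the same predicate on homomorphisms \(h:\Sigma^*\to M\), so that the two minima range over literally the same feasible set. By definition, \(\cmp_1(L)\) is the minimum of \(|\operatorname{im}(h)|\) over finite-monoid homomorphisms \(h\) with \(\operatorname{Safe}_1(h;L)\). It therefore suffices to show that, for every such \(h\),
\[
\operatorname{Safe}_1(h;L)\iff h \text{ is a proper coloring of } \Gamma_L ,
\]
that is, \(h(x)\ne h(y)\) whenever \(\{x,y\}\) is an edge of \(\Gamma_L\).

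I would prove the equivalence by translating through the principal layer at arity one. Recall three facts: vertices of \(\Om_1(L)\) are object concepts \(\gamma_1(x)\); the intent of \(\gamma_1(x)\) is exactly \(\D_L(x)\); and distinct concepts are adjacent iff their intents intersect.

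So for words \(x,y\), the concepts \(\gamma_1(x)\) and \(\gamma_1(y)\) are distinct and adjacent precisely when \(\D_L(x)\ne\D_L(y)\) and \(\D_L(x)\cap\D_L(y)\ne\varnothing\). Distinctness of concepts is equivalent to distinctness of intents, since a formal concept is determined by its intent. This is exactly the edge condition for \(\{x,y\}\) in \(\Gamma_L\); note that \(x\ne y\) is automatic once the distributions differ.

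The safety clause of Definition~\ref{scl:def:safety} at \(d=1\) quantifies over all pairs of generating words of adjacent distinct vertices, and those pairs are exactly the edges of \(\Gamma_L\). Hence safety says precisely that \(h(x)\ne h(y)\) on every edge, which is properness of the coloring. The two minima are then over identical sets of \(h\), which gives the displayed formula, including the convention that both sides are \(\infty\) when the set is empty.

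For the inequality \(\chi(\Gamma_L)\le\cmp_1(L)\), take an optimal \(h\). It is a proper coloring of \(\Gamma_L\) with \(|\operatorname{im}(h)|\) colors, so \(\chi(\Gamma_L)\le|\operatorname{im}(h)|\). If \(\cmp_1(L)=\infty\), the inequality is trivial.

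Alternatively, one could note that \(\Gamma_L\) is a blow-up of \(\Om_1(L)\): vertices with equal distributions are nonadjacent twins. Then \(\chi(\Gamma_L)=\chi(\Om_1(L))\), and Proposition~\ref{scl:prop:chromatic-lower} at \(d=1\) gives the bound. The direct argument is simpler, so I would use it.

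The only delicate point, which I expect to be the main obstacle, is the bookkeeping that identifies edges of \(\Om_1(L)\), taken over all choices of representatives, with edges of \(\Gamma_L\). This rests on the intent identification \(\operatorname{Int}(\gamma_1(x))=\D_L(x)\) and on the fact that concepts are equal iff their intents are equal. No regularity or finiteness of \(T_L\) is needed anywhere, so the statement holds for every language.
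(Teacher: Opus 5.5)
Your proposal is correct and follows essentially the same route as the paper. At arity one, safety means that no unsafe pair (overlapping but unequal distributions) receives a single \(h\)-value, which is exactly properness of \(h\) as a coloring of \(\Gamma_L\); dropping the homomorphism constraint then yields \(\chi(\Gamma_L)\le\cmp_1(L)\). Your explicit pass through the concepts \(\gamma_1(x)\) and their intents \(\D_L(x)\) only spells out the identification the paper treats as immediate, which it later records as (i)\(\Leftrightarrow\)(iii) of the Safety Interface Theorem.
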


\begin{proof}
At arity one, a homomorphism witnesses tuple substitutability exactly when no
unsafe pair receives one value. This is exactly the proper-coloring condition
for \(\Gamma_L\). Forgetting the homomorphism constraint leaves an ordinary
proper coloring, giving the inequality.
\end{proof}

Thus ordinary coloring measures the conflict geometry alone, while
\(\cmp_1\) measures the cost of coloring that geometry in a way compatible
with concatenation. The difference between the two quantities is the price of
compositionality.

\begin{remark}[A strict price of compositionality]
\label{conf:rem:strict-price}
Let \(L=\{a,a^3\}\subseteq\{a\}^*\).  Its only nontrivial unary conflict
edges are \(\{\varepsilon,a^2\}\) and \(\{a,a^3\}\), so
\(\chi(\Gamma_L)=2\).  Nevertheless \(\cmp_1(L)=3\).

A one-element image cannot be a witness, since it collapses both conflict
edges.  Thus it remains to exclude witnesses with exactly two image values.
Such a homomorphism is determined by \(x=h(a)\), and necessarily
\(x\ne1\).  The two possibilities for a two-element monoid generated by
\(x\) are the group case \(x^2=1\) and the idempotent case \(x^2=x\).
The former identifies \(\varepsilon\) with \(a^2\), while the latter gives
\(x^3=x\) and identifies \(a\) with \(a^3\).  Thus no two-element
compression map separates both conflict edges.  Conversely, mapping \(a\) to a
generator of the cyclic group \(C_3\) separates both edges.  Hence
\(\chi(\Gamma_L)=2<3=\cmp_1(L).\)
This is the smallest kind of phenomenon that ordinary graph coloring misses:
the extra cost comes solely from requiring the coloring to be compositional.
\end{remark}

\subsection{Regular languages: from coloring to syntactic semantics}

For regular \(L\), let \(\eta_L:\Sigma^*\twoheadrightarrow T_L\) be the syntactic morphism. A compression map need not recognize \(L\) and need
not be constant on syntactic classes; it is required only to separate
contextual conflicts.

Because unary context distributions are constant on syntactic classes, the
infinite graph \(\Gamma_L\) has a finite semantic core. Define the
\emph{syntactic conflict graph} \(\Gamma_L^{\mathrm{syn}}\) on \(T_L\) by
joining distinct elements \(s,t\) whenever (equivalently, for any
representatives \(x,y\) with \(\eta_L(x)=s\) and \(\eta_L(y)=t\)) the
words \(x,y\) form an unsafe unary pair.

\begin{proposition}[Finite conflict core of a regular language]
\label{conf:prop:syntactic-conflict-graph}
For regular \(L\), the graph \(\Gamma_L\) is obtained from
\(\Gamma_L^{\mathrm{syn}}\) by replacing each syntactic vertex by an
independent set of word representatives. In particular, \(\chi(\Gamma_L)=\chi(\Gamma_L^{\mathrm{syn}}).\) \end{proposition}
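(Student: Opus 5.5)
The plan is to show that the edge relation of \(\Gamma_L\) depends only on syntactic classes, and that syntactically equivalent words are never adjacent. These two facts exhibit \(\Gamma_L\) as the blow-up of \(\Gamma_L^{\mathrm{syn}}\) along the surjection \(\eta_L\), with each vertex replaced by an independent set. The chromatic equality then follows from standard blow-up properties.

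\textbf{Step 1: distributions factor through \(\eta_L\).} Invariance is immediate from the definition of syntactic congruence. If \(\eta_L(x)=\eta_L(x')\), then for every context \((u,v)\) we have \(uxv\in L\iff ux'v\in L\), so \(\D_L(x)=\D_L(x')\). Hence whether \(\D_L(x)\cap\D_L(y)\ne\varnothing\) and \(\D_L(x)\ne\D_L(y)\) depends only on \((\eta_L(x),\eta_L(y))\). This justifies the ``equivalently, for any representatives'' clause in the definition of \(\Gamma_L^{\mathrm{syn}}\). It also shows that for words \(x,y\), \(\{x,y\}\) is an edge of \(\Gamma_L\) iff \(\eta_L(x)\ne\eta_L(y)\) and \(\{\eta_L(x),\eta_L(y)\}\) is an edge of \(\Gamma_L^{\mathrm{syn}}\).

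\textbf{Step 2: independence within a fibre.} Two distinct words in one fibre \(\eta_L^{-1}(s)\) have equal distributions, so they are never adjacent. Each fibre is nonempty because \(\eta_L\) is surjective. Together with Step 1, this makes \(\Gamma_L\) exactly the lexicographic blow-up of \(\Gamma_L^{\mathrm{syn}}\) by independent sets.

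\textbf{Step 3: chromatic numbers.} For \(\chi(\Gamma_L)\le\chi(\Gamma_L^{\mathrm{syn}})\), compose a proper coloring \(c\) of \(\Gamma_L^{\mathrm{syn}}\) with \(\eta_L\). An edge \(\{x,y\}\) maps to an edge \(\{\eta_L(x),\eta_L(y)\}\), so \(c\circ\eta_L\) is proper. For the reverse inequality, choose one representative \(x_s\) for each \(s\in T_L\). The induced subgraph of \(\Gamma_L\) on \(\{x_s\}\) is isomorphic to \(\Gamma_L^{\mathrm{syn}}\), so \(\chi(\Gamma_L^{\mathrm{syn}})\le\chi(\Gamma_L)\).

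There is no real obstacle here. The only point needing care is Step 1, namely that both conditions defining an edge (overlap and inequality of distributions) are class invariants. This is exactly the content of syntactic congruence.
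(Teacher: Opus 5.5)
Your proposal is correct and follows the paper's proof: class-invariance of the unary distributions makes each fiber an independent set and makes adjacency between distinct classes all-or-nothing, so \(\Gamma_L\) is a blow-up of \(\Gamma_L^{\mathrm{syn}}\). Your Step~3 simply writes out the two colorings behind the paper's one-line claim that blowing up vertices into nonempty independent sets preserves the chromatic number: pulling back along \(\eta_L\) in one direction, and restricting to one representative per class in the other.
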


\begin{proof}
Words in one syntactic class have identical two-sided context distributions,
so they are never adjacent. For two distinct syntactic classes, both
intersection and inequality of the context distributions depend only on the
classes, hence either every pair of representatives is adjacent or none is.
Thus \(\Gamma_L\) is a blow-up of the finite syntactic conflict graph, and
blowing up vertices into nonempty independent sets preserves chromatic number.
\end{proof}

\begin{corollary}[Unary SCL overlap equals syntactic conflict]
\label{conf:cor:omega-syntactic}
For regular \(L\), under the canonical identification
\(\Obj_1(L)\cong T_L\) of Proposition~\ref{scl:prop:unary-principal},
\(
 \Om_1(L)=\Gamma_L^{\mathrm{syn}}.
\)
Consequently
\(
 \chi(\Gamma_L)=\chi(\Gamma_L^{\mathrm{syn}})=\chi(\Om_1(L)).
\)
\end{corollary}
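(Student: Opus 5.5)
The statement to prove is Corollary~\ref{conf:cor:omega-syntactic}: under the identification of Proposition~\ref{scl:prop:unary-principal}, the graphs \(\Om_1(L)\) and \(\Gamma_L^{\mathrm{syn}}\) coincide. The plan is to show they have the same vertex set and the same adjacency relation. The chromatic-number chain will then follow from Proposition~\ref{conf:prop:syntactic-conflict-graph}.

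\emph{Vertices.} By Proposition~\ref{scl:prop:unary-principal}, \(\gamma_1(u)\mapsto\eta_L(u)\) is a well-defined bijection \(\Obj_1(L)\to T_L\). Well-definedness and injectivity both come from Clark's result that \(\gamma_1(u)=\gamma_1(v)\) iff \(u\equiv_L v\). Surjectivity holds because \(\eta_L\) is onto. So the two graphs share the vertex set \(T_L\).

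\emph{Edges.} Take distinct \(s,t\in T_L\) with representatives \(x,y\), so \(\eta_L(x)=s\) and \(\eta_L(y)=t\). Their vertices in \(\Om_1(L)\) are \(\gamma_1(x)\) and \(\gamma_1(y)\), whose intents are \(\{x\}^\uparrow=\D_L(x)\) and \(\{y\}^\uparrow=\D_L(y)\). A unary context \((u_0,u_1)\) is exactly a two-sided context \(u_0\blank u_1\). Hence \(\{s,t\}\) is an edge of \(\Om_1(L)\) iff \(\D_L(x)\cap\D_L(y)\ne\varnothing\).

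In \(\Gamma_L^{\mathrm{syn}}\), the pair \(\{s,t\}\) is an edge iff \(\D_L(x)\cap\D_L(y)\ne\varnothing\) and \(\D_L(x)\ne\D_L(y)\). The second condition is automatic: since \(s\ne t\), we have \(x\not\equiv_L y\), so \(\D_L(x)\ne\D_L(y)\). The two adjacency conditions therefore agree. Neither depends on the choice of representatives, because \(\D_L\) is constant on syntactic classes, which is the parenthetical remark in the definition of \(\Gamma_L^{\mathrm{syn}}\).

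\emph{Chromatic numbers.} The first equality \(\chi(\Gamma_L)=\chi(\Gamma_L^{\mathrm{syn}})\) is Proposition~\ref{conf:prop:syntactic-conflict-graph}. The second equality \(\chi(\Gamma_L^{\mathrm{syn}})=\chi(\Om_1(L))\) holds because the two graphs are identical by the argument above.

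There is no real obstacle. The only point needing care is that distinctness of vertices in \(\Om_1(L)\) already encodes the inequality of distributions that \(\Gamma_L^{\mathrm{syn}}\) requires explicitly. This is exactly Clark's Proposition~3.19, so it must be cited rather than reproved.
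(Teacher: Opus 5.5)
Your proposal is correct and follows essentially the same route as the paper. It identifies unary principal concepts with syntactic classes and uses that the intent of \(\gamma_1(x)\) is \(\D_L(x)\), so adjacency in \(\Om_1(L)\) is exactly the unsafe-pair relation on distinct syntactic classes; the chromatic equalities then come from Proposition~\ref{conf:prop:syntactic-conflict-graph}. One small correction to your closing remark: the fact that distinct vertices force \(\D_L(x)\ne\D_L(y)\) needs no citation, since it is immediate from the definition of \(\equiv_L\) (equivalently, a concept is determined by its intent), so Clark's result is needed only for the vertex identification.
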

\begin{proof}
The intent of the unary principal concept represented by a word \(x\) is
\(\D_L(x)\).  Definition~\ref{scl:def:safety} identifies adjacency in
\(\Om_1(L)\) exactly with the unsafe relation between the corresponding
syntactic classes.  The remaining chromatic equality is
Proposition~\ref{conf:prop:syntactic-conflict-graph}.
\end{proof}

\begin{definition}[Quotient-restricted compression size]
For regular \(L\), let \(\qcmp_f(L)\) be the minimum size of a quotient
monoid \(Q\) of \(T_L\) such that \(\pi\circ\eta_L:\Sigma^*\to Q\) witnesses
\((f,\pi\circ\eta_L)\)-tuple substitutability.
\end{definition}

\begin{proposition}[The conflict--syntax complexity ladder]
\label{conf:prop:ladder}
For every regular language \(L\), \(\chi(\Gamma_L) \le \cmp_1(L) \le \qcmp_1(L) \le |T_L|.\) \end{proposition}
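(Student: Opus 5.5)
The ladder is a chain of three inequalities, each coming from an earlier result or from an inclusion of feasible classes. I would prove them left to right.

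\emph{First inequality.} The bound \(\chi(\Gamma_L)\le\cmp_1(L)\) is the final clause of Theorem~\ref{conf:thm:coloring}. Every arity-one witness \(h\) is a proper coloring of \(\Gamma_L\) using \(|\operatorname{im}(h)|\) colors. Minimizing over witnesses gives the bound. If no finite witness exists, then \(\cmp_1(L)=\infty\) and the inequality holds trivially.

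\emph{Second inequality.} For \(\cmp_1(L)\le\qcmp_1(L)\), I would show that the feasible class for \(\qcmp_1\) is contained in the feasible class for \(\cmp_1\). Let \(Q\) be a quotient of \(T_L\) such that \(\pi\circ\eta_L\) witnesses \((1,\pi\circ\eta_L)\)-tuple substitutability. Then \(h:=\pi\circ\eta_L:\Sigma^*\to Q\) is a homomorphism into a finite monoid and satisfies \(\operatorname{Safe}_1(h;L)\). Since \(\pi\) and \(\eta_L\) are both surjective, \(h\) is surjective, so \(|\operatorname{im}(h)|=|Q|\). Taking the minimum over the larger class of witnesses gives \(\cmp_1(L)\le\qcmp_1(L)\).

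\emph{Third inequality.} For \(\qcmp_1(L)\le|T_L|\), I would take the trivial quotient \(Q=T_L\) with \(\pi=\mathrm{id}\). Then \(\pi\circ\eta_L=\eta_L\) recognizes \(L\) via \(P=\eta_L(L)\). By Proposition~\ref{conf:prop:recognition-relaxation}, \(\eta_L\) is safe at every finite arity, in particular at arity one. So \(T_L\) is a feasible quotient of size \(|T_L|\).

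Nothing here is a genuine obstacle; the argument only combines the earlier results with a containment of feasible sets. The one point needing care is the image-size bookkeeping in the second step. The paper defines \(\qcmp\) by \(|Q|\) and \(\cmp\) by \(|\operatorname{im}(h)|\), and these agree because \(h\) is surjective, which is why surjectivity of \(\pi\) and \(\eta_L\) must be stated explicitly. Since \(L\) is regular, \(T_L\) is finite, so all four quantities are finite.
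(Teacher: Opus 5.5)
Your proposal is correct and follows the paper's proof essentially step for step. The first inequality comes from Theorem~\ref{conf:thm:coloring}, the second from the inclusion of quotient witnesses among all compression maps, and the third from the identity quotient; you cite Proposition~\ref{conf:prop:recognition-relaxation} where the paper repeats its one-line argument, and your surjectivity remark is harmless but unnecessary, since $|\operatorname{im}(h)|\le|Q|$ already suffices.
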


\begin{proof}
The first inequality is Theorem~\ref{conf:thm:coloring}. The second follows
because quotient compressions are a restricted class of finite compression maps.
For the third, take the identity quotient of \(T_L\). If two words have the
same syntactic value, substitution inside every two-sided context preserves
membership, so the syntactic morphism itself is a witness.
\end{proof}

The ladder separates four resources: unconstrained conflict coloring,
compositional conflict coloring, quotient-based abstraction of syntactic
semantics, and the full syntactic monoid.  All four inequalities can be strict,
and the middle gap can be arbitrarily large.

\begin{proposition}[A strict ladder with unbounded relational--quotient gap]
\label{conf:prop:strict-ladder-family}
For $n\ge2$, let
\(K_n=\{a^n,a^{n+2}\}\subseteq\{a\}^*.\)
Then for every $f\ge1$,
\(\chi(\Gamma_{K_n})=2 <3=\cmp_f(K_n) <n+2=\qcmp_f(K_n) <n+4=|T_{K_n}|.\)
In particular, $\qcmp_f(K_n)-\cmp_f(K_n)=n-1$, so restricting finite
compression to quotients of the syntactic monoid has unbounded additive cost.
\end{proposition}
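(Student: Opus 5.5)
The plan is to compute everything explicitly for the unary language \(K_n\), using the fact that in a one-letter alphabet membership depends only on length.

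\textbf{Syntactic monoid and conflict graph.} Write \(a^i\) for words. A unary context \(a^p\blank a^q\) accepts \(a^i\) iff \(p+q+i\in\{n,n+2\}\), so \(\D_{K_n}(a^i)\) is determined by the set \(\{n-i,n+2-i\}\cap\mathbb N\).
\begin{itemize}[nosep]
\item These sets are pairwise distinct and nonempty for \(0\le i\le n+2\).
\item They are empty for \(i\ge n+3\).
\end{itemize}
Hence \(T_{K_n}=\{1,a,\dots,a^{n+2},0\}\) with \(a^{n+3}=0\), which gives \(|T_{K_n}|=n+4\).

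The same computation applies at arity \(d\). A tuple context \(E\) accepts \(\vec x\) iff \(|E|+\sum_k|x_k|\in\{n,n+2\}\), so \(\D^{(d)}(\vec x)\) depends only on the total length \(s(\vec x)\). Consequently \((\vec x,\vec y)\) is unsafe iff \(\{s(\vec x),s(\vec y)\}=\{s,s+2\}\) with \(s\le n\). At arity one the edges of \(\Gamma^{\mathrm{syn}}\) are exactly \(\{a^i,a^{i+2}\}\) for \(0\le i\le n\). This graph is a disjoint union of two paths, so it is bipartite with at least one edge. Proposition~\ref{conf:prop:syntactic-conflict-graph} then gives \(\chi(\Gamma_{K_n})=2\).

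\textbf{\(\cmp_f=3\).} For the lower bound we use \(\cmp_f\ge\cmp_1\) and repeat the argument of Remark~\ref{conf:rem:strict-price}.
\begin{itemize}[nosep]
\item A one-element image collapses every edge.
\item For a two-element image, the value \(x=h(a)\) must satisfy \(x\ne1\), since otherwise \(1\) and \(a^2\) are collapsed.
\item If \(x^2=1\), the edge \(\{1,a^2\}\) is collapsed.
\item If \(x^2=x\), the edge \(\{a,a^3\}\) is collapsed; this edge exists because \(n\ge1\).
\end{itemize}
For the upper bound at every arity, send \(a\) to a generator of \(C_3\). Equal componentwise images force \(|x_k|\equiv|y_k|\pmod 3\) for each \(k\), hence \(s(\vec x)\equiv s(\vec y)\pmod 3\). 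Unsafe pairs have total lengths differing by exactly \(2\), so no unsafe pair is collapsed.

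\textbf{\(\qcmp_f=n+2\).} This is the step needing the most care: classifying the quotients of \(T_{K_n}\).

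A quotient \(Q\) is a monogenic monoid generated by the image of \(a\), and it still satisfies \(a^{n+3}=a^{n+4}\). Its period therefore divides \(1\), so \(Q\) is a truncation \(a^m=a^{m+1}\) with \(m\le n+3\), and \(|Q|=m+1\). Such a truncation identifies all \(a^k\) with \(k\ge m\).

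\emph{Lower bound.} If \(m\le n\), the edge \(\{a^m,a^{m+2}\}\) is collapsed. Hence \(m\ge n+1\) and \(|Q|\ge n+2\). This bound holds at arity one, and therefore at every arity \(f\ge1\).

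\emph{Upper bound.} Take \(m=n+1\) and suppose an unsafe pair has equal componentwise images.
\begin{itemize}[nosep]
\item If every component has length \(\le n\), equal images force equal lengths componentwise, so the total lengths are equal. This contradicts their differing by \(2\).
\item Otherwise some coordinate has length \(\ge n+1\) in both tuples. Then both total lengths are \(\ge n+1\), contradicting the requirement that the smaller one is \(\le n\).
\end{itemize}

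\textbf{Conclusion.} Together these give \(2<3<n+2<n+4\), where \(3<n+2\) uses \(n\ge2\). The additive gap is \(\qcmp_f-\cmp_f=(n+2)-3=n-1\).
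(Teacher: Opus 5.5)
Your proof is correct and follows the same outline as the paper: explicit syntactic classes, the conflict edges $\{a^i,a^{i+2}\}$ for $0\le i\le n$, the $C_3$ witness, the exclusion of two-element images, and the tail-identifying quotient of size $n+2$. The differences are local. You verify all the arity-$f$ claims directly, using the fact that over a one-letter alphabet a tuple distribution depends only on total length, whereas the paper invokes its later commutative-collapse theorem; your version is therefore self-contained. You also classify the quotients of $T_{K_n}$ as period-$1$ truncations, whereas the paper argues via the least $r$ with $y^r=\pi(\bot)$.
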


\begin{proof}
Write $t_i$ for the syntactic class of $a^i$, $0\le i\le n+2$, and $\bot$
for the common class of all $a^j$ with $j\ge n+3$.  These are precisely the
syntactic classes: distinct lengths at most $n+2$ are separated by a context
that completes one of them to length $n$ or $n+2$, while every longer word has
empty distribution.  Thus $|T_{K_n}|=n+4$, with truncated addition
$t_it_j=t_{i+j}$ when $i+j\le n+2$ and product $\bot$ otherwise.

The nontrivial unary conflict edges are exactly
\(\{t_i,t_{i+2}\},\qquad 0\le i\le n.\)
Hence the finite conflict core is the disjoint union of the two paths on even
and odd indices, so $\chi(\Gamma_{K_n})=2$.

Mapping $a$ to a generator $g$ of the cyclic group $C_3$ separates every such
edge, since $g^i\ne g^{i+2}$.  Therefore $\cmp_1(K_n)\le3$.

Since the conflict graph has an edge, a one-element image cannot be
unary-safe.  No two-element image works either: if $x$ is the image of $a$
and the generated image has two elements, then either $x^2=1$, which
identifies every pair $a^i,a^{i+2}$, or $x^2=x$, which identifies
$a$ and $a^3$.  Thus $\cmp_1(K_n)=3$.  Since
$T_{K_n}$ is commutative, the commutative-collapse theorem proved below,
Theorem~\ref{group:thm:commutative-collapse}, gives
$\cmp_f(K_n)=3$ for every $f\ge1$.

It remains to compute the quotient-restricted value.  Let
$\pi:T_{K_n}\twoheadrightarrow Q$ be any unary-safe quotient, put
$y=\pi(t_1)$ and $z=\pi(\bot)$, and let $r$ be least with $y^r=z$.  Such an
$r$ exists because $y^{n+3}=z$.  If $r\le n$, then
$\pi(t_r)=z=\pi(t_{r+2})$, collapsing a conflict edge.  Hence $r\ge n+1$.
Moreover $1,y,\ldots,y^{r-1},z$ are pairwise distinct: if
$y^i=y^j$ for $0\le i<j<r$, multiplication by $y^{r-j}$ would give
$y^{i+r-j}=z$ with exponent $i+r-j<r$, contradicting minimality of $r$.
Thus $|Q|\ge r+1\ge n+2$.

Conversely, identify the tail
$\{t_{n+1},t_{n+2},\bot\}$ and keep
$t_0,\ldots,t_n$ distinct.  This is a quotient monoid of size $n+2$, and it
collapses no conflict edge.  Hence $\qcmp_1(K_n)=n+2$.  Again commutative
collapse shows that the same quotient is safe through every finite arity, so
$\qcmp_f(K_n)=n+2$ for all $f\ge1$.
\end{proof}

\begin{remark}[Why the optimal compression is genuinely relational on syntax]
\label{conf:rem:genuinely-relational}
For the optimal map $h(a)=g\in C_3$ above, the canonical relation on the
syntactic monoid satisfies
\(\rho_h(t_i)=\{g^i\}\quad(0\le i\le n+2), \qquad \rho_h(\bot)=C_3.\)
Thus one syntactic element has three possible compression values.  The finite
compression retains residue information inside the dead syntactic class, and
that information disappears under every functional quotient of $T_{K_n}$.
This gives a concrete reason, independent of the abstract selector theorem,
that relational rather than quotient semantics is the correct finite
interface.
\end{remark}

The graph-realization theorem in Section~\ref{sec:graph-universality} later
shows that the first two levels can also coincide on arbitrary finite conflict
geometries.  For the structural theory below, the important next step is higher
arity.

\subsection{Higher arity as a compositional separation system}

For \(d>1\), an unsafe pair is no longer a single graph edge between word
colors. It imposes the disjunctive clause
\(h(x_1)\neq h(y_1) \ \vee\ \cdots\ \vee\ h(x_d)\neq h(y_d).\)
Accordingly, \(\cmp_f(L)\) is the minimum size of a finite monoid whose
canonical coloring of \(\Sigma^*\) satisfies every language-induced
separation clause of width at most \(f\). The graph case is the width-one
shadow of this constraint system. The finite-index congruence theorem below
makes the formulation exact algebraically.

\section{Algebraic Characterization of Conflict Resolution}
\label{sec:algebraic-conflict}

The conflict clauses of Section~\ref{sec:conflict} admit an exact algebraic form. On the free monoid, a finite compression map is the same thing as a finite-index congruence whose classes color all conflicts compositionally. For regular languages, the same optimization can be pushed to the syntactic monoid, but generally only through a relational morphism rather than a functional quotient.

The free-monoid congruence formulation is elementary but useful: it records
the optimization problem in an intrinsic algebraic form. The substantive
regular-language step is the subsequent passage from the free monoid to
relational morphisms of the finite syntactic pointed monoid.

\subsection{Free-monoid separation clauses}

The first algebraic interface is the kernel congruence of a compression map.
We introduce only the feasibility notion here; the optimization theorem is
stated after the Safety Interface Theorem so that it becomes an immediate
consequence of that common interface.

\begin{definition}[Separating congruence]
A finite-index two-sided monoid congruence
\(
\theta\ \subseteq\Sigma^*\times\Sigma^*
\)
is \emph{\(f\)-separating for \(L\)} if, for every
\(
(\vec x,\vec y)\in\mathfrak U_d(L),
\qquad
1\le d\le f,
\)
there exists a coordinate \(i\in\{1,\ldots,d\}\) such that \(x_i\not\equiv_\theta y_i.\) \end{definition}

\subsection{Finite syntactic profiles for regular languages}

Assume from now on that \(L\) is regular.
Let \(\eta:\Sigma^*\twoheadrightarrow T\) be its syntactic morphism and let
\(P:=\eta(L)\subseteq T\). Thus \(L=\eta^{-1}(P)\).  We call \((T,P)\) the
pointed syntactic monoid of \(L\), following the classical pointed-monoid
(or \(P\)-monoid) framework~\cite{Sakarovitch1979,Pin1981}.

Recall from Proposition~\ref{scl:prop:finite-reduction} the finite tuple-context
profile
\[
\Phi_d(\mathbf t)
=\left\{(q_0,\ldots,q_d)\in T^{d+1}:q_0t_1q_1\cdots t_dq_d\in P\right\}
\qquad(\mathbf t\in T^d).
\]

If
\(E=u_0\blank_1u_1\cdots\blank_du_d,\)
write
\(\eta_{\mathrm{ctx}}^{(d)}(E) := (\eta(u_0),\ldots,\eta(u_d))\in T^{d+1}.\)
This notation is reserved for the finite syntactic profile of a tuple context.
\begin{lemma}[Finite-profile transport]
\label{alg:lem:profile}
Let \(\vec x,\vec y\in(\Sigma^*)^d\).  Set
\(\mathbf s:=\eta^{(d)}(\vec x)\) and \(\mathbf t:=\eta^{(d)}(\vec y)\).
Then:
\begin{enumerate}[label=(\roman*),leftmargin=*]
\item for every tuple context \(E\),
\(E\in\D_L^{(d)}(\vec x) \iff \eta_{\mathrm{ctx}}^{(d)}(E)\in\Phi_d(\mathbf s);\)
\item
\(
\D_L^{(d)}(\vec x)=\D_L^{(d)}(\vec y)
\iff
\Phi_d(\mathbf s)=\Phi_d(\mathbf t);
\)
\item
\(
\D_L^{(d)}(\vec x)\cap\D_L^{(d)}(\vec y)\neq\varnothing
\iff
\Phi_d(\mathbf s)\cap\Phi_d(\mathbf t)\neq\varnothing.
\)
\end{enumerate}
\end{lemma}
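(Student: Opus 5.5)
The plan is to derive all three parts from the single pointwise statement (i). For (i), write \(E=u_0\blank_1u_1\cdots\blank_du_d\) and use that \(\eta\) is a monoid morphism. This gives
\[
\eta(E[\vec x])=\eta(u_0)\,s_1\,\eta(u_1)\cdots s_d\,\eta(u_d),
\]
where \(s_i=\eta(x_i)\). Since \(L=\eta^{-1}(P)\), we have \(E[\vec x]\in L\) iff this product lies in \(P\). By the definition of \(\Phi_d\), that is exactly the condition \(\eta_{\mathrm{ctx}}^{(d)}(E)\in\Phi_d(\mathbf s)\). Part (i) can be restated as
\[
\D_L^{(d)}(\vec x)=\bigl(\eta_{\mathrm{ctx}}^{(d)}\bigr)^{-1}\bigl(\Phi_d(\mathbf s)\bigr),
\]
with the analogous identity for \(\vec y\) and \(\mathbf t\).

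The only additional ingredient is surjectivity of \(\eta_{\mathrm{ctx}}^{(d)}\) onto \(T^{d+1}\). This holds because \(\eta\) is surjective, so every \((q_0,\ldots,q_d)\in T^{d+1}\) is the profile of some context, chosen by picking a preimage word \(u_i\) for each \(q_i\). It is the same observation already used in the proof of Proposition~\ref{scl:prop:finite-reduction}.

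For (ii) and (iii), I will use the elementary fact that taking preimages under a surjective map \(g\) is injective and commutes with intersections. In particular \(g^{-1}(A)=g^{-1}(B)\iff A=B\) and \(g^{-1}(A)\cap g^{-1}(B)=g^{-1}(A\cap B)\). Surjectivity gives \(g^{-1}(C)\neq\varnothing\iff C\neq\varnothing\). Applying these with \(g=\eta_{\mathrm{ctx}}^{(d)}\), \(A=\Phi_d(\mathbf s)\) and \(B=\Phi_d(\mathbf t)\) yields (ii) and (iii) directly.

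The proof is essentially routine. The only point needing care is the converse directions of (ii) and (iii), which is exactly where surjectivity of the context profile map is indispensable. Without it, two profiles could differ only on context tuples that are not realized by any words; surjectivity of \(\eta\) rules this out.
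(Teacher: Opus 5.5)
Your proposal is correct and follows the paper's proof. Part (i) comes from the morphism property together with \(L=\eta^{-1}(P)\), and parts (ii) and (iii) come from lifting abstract profiles to concrete contexts via surjectivity of \(\eta\), which you neatly package as injectivity and intersection-compatibility of preimages under the surjective map \(\eta_{\mathrm{ctx}}^{(d)}\). One small labeling slip: in (ii), surjectivity is needed for the left-to-right direction (equal distributions imply equal profiles), not the converse; your preimage argument covers both directions anyway.
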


\begin{proof}
For (i), if
\(E=u_0\blank_1u_1\cdots\blank_du_d,\)
then \(\eta(E[\vec x]) = \eta(u_0)s_1\eta(u_1)\cdots s_d\eta(u_d).\) Since \(L=\eta^{-1}(P)\), this word belongs to \(L\) exactly when
\(\eta_{\mathrm{ctx}}^{(d)}(E)\in\Phi_d(\mathbf s)\).

For (ii), the forward implication follows from (i) once we observe that every
abstract profile
\((q_0,\ldots,q_d)\in T^{d+1}\)
has a concrete context representative: because \(\eta\) is surjective, choose
\(u_j\in\Sigma^*\) with \(\eta(u_j)=q_j\), and form
\(u_0\blank_1u_1\cdots\blank_du_d.\)
Thus a profile belonging to exactly one of
\(\Phi_d(\mathbf s)\) and \(\Phi_d(\mathbf t)\) yields a concrete context
belonging to exactly one of the two tuple distributions.
The reverse implication is immediate from (i).

For (iii), a common concrete accepting context gives a common profile by (i).
Conversely, if an abstract profile belongs to both finite profile sets, choose
concrete representatives \(u_j\) as above. The resulting single named
context accepts both tuples by (i).
\end{proof}

\begin{definition}[Unsafe syntactic tuple pair]
For \(1\le d\le f\), define
\[
\mathcal U_d(T,P)
:=
\left\{
(\mathbf s,\mathbf t)\in T^d\times T^d:
\begin{array}{l}
\Phi_d(\mathbf s)\cap\Phi_d(\mathbf t)\neq\varnothing,\\
\Phi_d(\mathbf s)\neq\Phi_d(\mathbf t)
\end{array}
\right\}.
\]
\end{definition}

By Lemma~\ref{alg:lem:profile}, this is exactly the finite syntactic image of the
semantic unsafe-pair relation.

We use \(\mathfrak U_d(L)\) for semantic unsafe pairs of word tuples and
\(\mathcal U_d(T,P)\) for their finite syntactic-profile counterpart.

\subsection{Canonical relational morphism of a compression map}

\begin{definition}[Monoid relational morphism]
Let \(T,M\) be monoids.  We use the identity-preserving monoid convention: a
\emph{monoid relational morphism} \(\rho:T\relto M\) assigns to every \(t\in T\)
a nonempty set \(\rho(t)\subseteq M\) such that \(1_M\in\rho(1_T)\) and
\(
\rho(s)\rho(t)\subseteq\rho(st)
\qquad(s,t\in T).
\)
Equivalently, its graph
\(
\operatorname{Gr}(\rho)
:=
\{(t,m):m\in\rho(t)\}
\)
is a submonoid of \(T\times M\) whose first projection is surjective.  We do
not require the second projection to be surjective; unused codomain values can
always be deleted, as done explicitly in the proof of
Theorem~\ref{alg:thm:relational-characterization}.
\end{definition}

\begin{definition}[Canonical relational morphism induced by a compression map]
Let \(h:\Sigma^*\to M\) be a finite compression map.
Define \(\rho_h:T\relto M\) by \(\rho_h(t) := \{h(w):\eta(w)=t\}.\) \end{definition}

\begin{lemma}[Canonical relation is a relational morphism]
\label{alg:lem:canonical-rel}
The map \(\rho_h:T\relto M\) is a monoid relational morphism.
Its graph is exactly
\(
\operatorname{im}(\eta,h)
=
\{(\eta(w),h(w)):w\in\Sigma^*\}
\subseteq T\times M.
\)
\end{lemma}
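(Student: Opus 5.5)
The plan is to prove first that the graph of \(\rho_h\) equals \(\operatorname{im}(\eta,h)\), and then to read off the relational-morphism axioms from the fact that this image is a submonoid of \(T\times M\).

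\emph{Graph equality.} By definition, \((t,m)\in\operatorname{Gr}(\rho_h)\) holds iff \(m=h(w)\) for some word \(w\) with \(\eta(w)=t\). This is exactly the statement \((t,m)=(\eta(w),h(w))\) for some \(w\in\Sigma^*\). So the two sets coincide.

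\emph{Submonoid with surjective first projection.} The pairing \(w\mapsto(\eta(w),h(w))\) is a monoid homomorphism \(\Sigma^*\to T\times M\), since both \(\eta\) and \(h\) are homomorphisms preserving the identity. Its image is therefore a submonoid. The first projection of this image is \(\eta(\Sigma^*)=T\), which is all of \(T\) because the syntactic morphism is surjective.

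\emph{The axioms directly.} For completeness I will also check the conditions as stated in the definition.
\begin{itemize}
\item Each \(\rho_h(t)\) is nonempty: surjectivity of \(\eta\) gives some \(w\) with \(\eta(w)=t\).
\item \(1_M=h(\varepsilon)\in\rho_h(1_T)\), because \(\eta(\varepsilon)=1_T\).
\item For products, take \(h(u)\in\rho_h(s)\) and \(h(v)\in\rho_h(t)\). Then \(h(u)h(v)=h(uv)\) and \(\eta(uv)=st\), so \(h(u)h(v)\in\rho_h(st)\).
\end{itemize}

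There is no real obstacle here. The only point needing care is that nonemptiness of \(\rho_h(t)\), equivalently surjectivity of the first projection, uses surjectivity of \(\eta\) onto the syntactic monoid; no property of \(h\) beyond being a homomorphism is required.
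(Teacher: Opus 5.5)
Your proposal is correct and follows essentially the same route as the paper: the paper checks the axioms directly, exactly as in your third paragraph (nonempty fibers by surjectivity of \(\eta\), \(1_M=h(\varepsilon)\in\rho_h(1_T)\), and closure under products via the concatenation \(uv\)), and notes that the graph identity is immediate from the definition. Your additional observation, that \(\operatorname{im}(\eta,h)\) is a submonoid because it is the image of the pairing homomorphism, is just the equivalent graph-based form of the definition and gives the same result.
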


\begin{proof}
Every \(t\in T\) has a preimage under the surjective syntactic morphism, so
\(\rho_h(t)\neq\varnothing\).
Moreover
\(1_M=h(\varepsilon)\in\rho_h(1_T).\)
If
\(m\in\rho_h(s), \qquad n\in\rho_h(t),\)
choose words \(u,v\) such that
\(\eta(u)=s,\quad h(u)=m, \qquad \eta(v)=t,\quad h(v)=n.\)
Then \(\eta(uv)=st, \qquad h(uv)=mn,\) so \(mn\in\rho_h(st).\) The graph identity follows directly from the definition.
\end{proof}

\begin{definition}[Unsafe-separating relational morphism]
\label{alg:def:rel-separating}
A relational morphism \(\rho:T\relto M\) is \emph{\(f\)-separating for \((T,P)\)} if, for every
\(
(\mathbf s,\mathbf t)\in\mathcal U_d(T,P),
\qquad
1\le d\le f,
\)
there exists a coordinate \(i\) such that \(\rho(s_i)\cap\rho(t_i)=\varnothing.\) \end{definition}

\begin{definition}[Safe compatible tolerance]
\label{alg:def:safe-tolerance}
Let \((T,P)\) be a finite pointed monoid.  A relation
\(\tau\subseteq T\times T\) is a \emph{compatible tolerance} if it is
reflexive, symmetric, and
\(
x\tau y,\ u\tau v \Longrightarrow xu\tau yv.
\)
It is \emph{\(f\)-safe for \((T,P)\)} if every
\((\mathbf s,\mathbf t)\in\mathcal U_d(T,P)\), \(d\le f\), is separated in
some coordinate: \(s_i\not\tau t_i\) for some \(i\).
\end{definition}

\begin{lemma}[Relational-overlap interface]
\label{alg:lem:relational-overlap}
Every relational morphism \(\rho:T\relto M\) induces the compatible tolerance
\[
 x\tau_\rho y
 \quad\Longleftrightarrow\quad
 \rho(x)\cap\rho(y)\ne\varnothing.
\tag{T}
\]
Moreover,
\[
\rho\text{ is \(f\)-separating for \((T,P)\)}
\quad\Longleftrightarrow\quad
\tau_\rho\text{ is \(f\)-safe for \((T,P)\)}.
\tag{RT}
\]
\end{lemma}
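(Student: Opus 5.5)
The lemma has two parts, and both follow directly from the definitions of monoid relational morphism (Definition of monoid relational morphism) and compatible tolerance (Definition~\ref{alg:def:safe-tolerance}). For the first claim we check the three tolerance axioms for \(\tau_\rho\).

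\emph{Reflexivity.} Every \(\rho(x)\) is nonempty, so \(\rho(x)\cap\rho(x)\ne\varnothing\), i.e.\ \(x\tau_\rho x\).

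\emph{Symmetry.} Intersection is symmetric, so \(x\tau_\rho y\) implies \(y\tau_\rho x\).

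\emph{Compatibility.} Suppose \(x\tau_\rho y\) and \(u\tau_\rho v\). Pick witnesses \(m\in\rho(x)\cap\rho(y)\) and \(n\in\rho(u)\cap\rho(v)\). The defining inclusion \(\rho(s)\rho(t)\subseteq\rho(st)\) gives \(mn\in\rho(x)\rho(u)\subseteq\rho(xu)\) and \(mn\in\rho(y)\rho(v)\subseteq\rho(yv)\). Hence \(mn\in\rho(xu)\cap\rho(yv)\), so \(xu\tau_\rho yv\). Only the submonoid property of the graph is used here. The identity condition \(1_M\in\rho(1_T)\) is not needed, because reflexivity already comes from nonemptiness.

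For (RT) we compare the two definitions clause by clause. Fix \(d\le f\) and \((\mathbf s,\mathbf t)\in\mathcal U_d(T,P)\). By Definition~\ref{alg:def:rel-separating}, \(\rho\) separates this pair exactly when some coordinate \(i\) has \(\rho(s_i)\cap\rho(t_i)=\varnothing\). By (T), that is the same as \(s_i\not\tau_\rho t_i\), which is exactly the clause required in Definition~\ref{alg:def:safe-tolerance}. Both conditions quantify over the same set of pairs \(\bigcup_{d\le f}\mathcal U_d(T,P)\), so the two predicates coincide pair by pair, and the equivalence follows in both directions at once.

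None of these steps is a real obstacle; the argument is definitional. The one point to make explicit is that compatibility needs a single common witness \(mn\) lying in both \(\rho(xu)\) and \(\rho(yv)\). Choosing \(m\) and \(n\) from the two intersections, rather than separately from each fiber, is what makes this work.
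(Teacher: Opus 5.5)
Your proof is correct and follows the paper's argument essentially verbatim. Reflexivity comes from nonempty fibers, compatibility comes from the common witness \(mn\in\rho(xu)\cap\rho(yv)\), and \textup{(RT)} is a coordinatewise rewriting of the two definitions over the same unsafe sets \(\mathcal U_d(T,P)\), \(d\le f\).
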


\begin{proof}
Reflexivity and symmetry follow from the nonempty fibers of \(\rho\).  If
\(x\tau_\rho y\) and \(u\tau_\rho v\), choose
\(m\in\rho(x)\cap\rho(y)\) and \(n\in\rho(u)\cap\rho(v)\).  Then
\(mn\in\rho(xu)\cap\rho(yv)\), proving compatibility.  The equivalence
\textup{(RT)} is immediate from the two definitions: an unsafe tuple pair is
unseparated by \(\rho\) exactly when all of its coordinate pairs are
\(\tau_\rho\)-related.
\end{proof}

\begin{theorem}[Safety Interface Theorem]
\label{alg:thm:safety-interface}
Let \(L\subseteq\Sigma^*\) be any language and let
\(h:\Sigma^*\to M\) be a homomorphism into a finite monoid.  For every
\(f\ge1\), conditions \textup{(i)}--\textup{(iv)} below are equivalent.  If
\(L\) is regular, write
\(\eta:\Sigma^*\twoheadrightarrow T\) for its syntactic morphism and
\(P=\eta(L)\); then they are also equivalent to \textup{(v)}.
\begin{enumerate}[label=(\roman*),leftmargin=*]
\item \textbf{Principal-SCL interface:}
\(\operatorname{Safe}_f(h;L)\) holds; equivalently, for every \(d\le f\),
no pair of generating tuples for two distinct adjacent principal concepts in
\(\Om_d(L)\) has the same componentwise \(h\)-type.  No map from principal
concepts to \(M^d\) is asserted or required.
\item \textbf{Guarded word/context interface:} for every \(1\le d\le f\),
\(h^{(d)}(\vec x)=h^{(d)}(\vec y),\qquad \D_L^{(d)}(\vec x)\cap\D_L^{(d)}(\vec y)\ne\varnothing \quad\Longrightarrow\quad \D_L^{(d)}(\vec x)=\D_L^{(d)}(\vec y).\)
Equivalently, equal finite type plus one shared accepting context forces
interchangeability in every tuple context.
\item \textbf{Semantic-clause interface:} every unsafe semantic pair
\((\vec x,\vec y)\in\mathfrak U_d(L)\), \(d\le f\), is separated in at least
one coordinate:
\(h(x_i)\ne h(y_i)\qquad\text{for some }i.\)
\item \textbf{Congruence interface:} the kernel congruence \(\ker h\) is
\(f\)-separating for \(L\).
\item \textbf{Finite syntactic/relational interface:} the canonical
relational morphism \(\rho_h:T\relto M\) is \(f\)-separating for the pointed
syntactic monoid \((T,P)\).
\end{enumerate}
Thus the SCL, word/context, congruence, and canonical relational-morphism
formulations are witness-level representations of one safety predicate, not
successive relaxations or additional invariants.
\end{theorem}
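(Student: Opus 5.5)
The plan is to prove the chain (i)$\Leftrightarrow$(ii)$\Leftrightarrow$(iii)$\Leftrightarrow$(iv) for arbitrary $L$ by unfolding definitions. Afterwards, for regular $L$, I will add (iii)$\Leftrightarrow$(v) using Lemma~\ref{alg:lem:profile}.

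First, (i)$\Leftrightarrow$(iii). The intent of $\gamma_d(\vec x)$ is $\D_L^{(d)}(\vec x)$, and two tuples generate the same principal concept exactly when their intents coincide. So a pair of tuples generating distinct adjacent vertices of $\Om_d(L)$ is literally a pair with $\D_L^{(d)}(\vec x)\ne\D_L^{(d)}(\vec y)$ and $\D_L^{(d)}(\vec x)\cap\D_L^{(d)}(\vec y)\neq\varnothing$. That is, it is an element of $\mathfrak U_d(L)$. Conversely, every element of $\mathfrak U_d(L)$ arises this way. Having the same componentwise $h$-type is the negation of coordinate separation, so the two conditions coincide, quantifier by quantifier. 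Note that the notion of generating tuple uses only concept equality, so no choice function into $M^d$ is involved, matching the caveat in (i).

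Next, (ii)$\Leftrightarrow$(iii) is contraposition: (ii) says that no pair with equal $h^{(d)}$-type lies in $\mathfrak U_d(L)$. Then (iii)$\Leftrightarrow$(iv) holds because $x_i\equiv_{\ker h}y_i$ iff $h(x_i)=h(y_i)$, and $\ker h$ is a finite-index congruence since $M$ is finite.

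For the regular case, (iii)$\Leftrightarrow$(v), I use Lemma~\ref{alg:lem:profile}(ii),(iii). These give $(\vec x,\vec y)\in\mathfrak U_d(L)$ iff $(\eta^{(d)}\vec x,\eta^{(d)}\vec y)\in\mathcal U_d(T,P)$.

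\emph{(v)$\Rightarrow$(iii).} Given an unsafe word pair, its image pair in $\mathcal U_d(T,P)$ has some coordinate $i$ with $\rho_h(\eta x_i)\cap\rho_h(\eta y_i)=\varnothing$. Since $h(x_i)\in\rho_h(\eta x_i)$ and $h(y_i)\in\rho_h(\eta y_i)$, we get $h(x_i)\ne h(y_i)$.

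\emph{(iii)$\Rightarrow$(v).} This is the step needing care, because separation in (v) is fiberwise rather than pointwise. I argue by contrapositive. Suppose $(\mathbf s,\mathbf t)\in\mathcal U_d(T,P)$ has every coordinate with $\rho_h(s_i)\cap\rho_h(t_i)\ne\varnothing$. For each $i$, pick $m_i$ in the intersection and words $x_i,y_i$ with $\eta(x_i)=s_i$, $\eta(y_i)=t_i$, $h(x_i)=h(y_i)=m_i$; this is possible by the definition of $\rho_h$. The resulting $(\vec x,\vec y)$ lies in $\mathfrak U_d(L)$ by the profile lemma, since membership depends only on the $\eta$-images. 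It also has equal $h$-type in every coordinate, which contradicts (iii).

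The main obstacle is exactly this coordinatewise independent choice of witnesses. It works because the unsafe relation depends only on syntactic images, so the representatives in different coordinates may be chosen independently.
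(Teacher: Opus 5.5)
Your proposal is correct and follows essentially the same route as the paper. You unfold the principal-intent dictionary for (i), use contraposition and the kernel identification for (ii)--(iv), and for (iii)$\Leftrightarrow$(v) you combine Lemma~\ref{alg:lem:profile} with a coordinatewise independent choice of representatives in the fibers of $\rho_h$. The only cosmetic difference is that you link (i) directly to (iii), whereas the paper links it to (ii).
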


\begin{proof}
The equivalence of \textup{(i)} and \textup{(ii)} is the principal-intent
dictionary.  The intent of \(\gamma_d(\vec x)\) is
\(\D_L^{(d)}(\vec x)\); hence two principal concepts are distinct and adjacent
exactly when their distributions overlap and are unequal.  Choosing generating tuples for such adjacent concepts and giving those
representatives equal componentwise \(h\)-type is therefore exactly the negation
of the implication in \textup{(ii)}.

The equivalence of \textup{(ii)} and \textup{(iii)} is the same statement in
clause form: an unsafe pair is, by definition, a pair with a shared accepting
context and unequal complete distributions.  Thus safety says precisely that
no unsafe pair has equal \(h\)-values in every coordinate.  Since
\(x\equiv_{\ker h}y\) if and only if \(h(x)=h(y)\), this is also exactly the
\(f\)-separating condition for \(\ker h\), proving
\textup{(iii)}\(\Longleftrightarrow\)\textup{(iv)}.

It remains to compare the semantic and finite syntactic interfaces.  Suppose
\textup{(iii)} holds and that some
\((\mathbf s,\mathbf t)\in\mathcal U_d(T,P)\), \(d\le f\), is not separated
by \(\rho_h\).  For each coordinate choose
\(m_i\in\rho_h(s_i)\cap\rho_h(t_i)\)
and words \(x_i,y_i\) with
\(\eta(x_i)=s_i,\quad \eta(y_i)=t_i,\quad h(x_i)=h(y_i)=m_i.\)
By Lemma~\ref{alg:lem:profile}, the word tuples \(\vec x,\vec y\) form an
unsafe semantic pair, but no coordinate is separated by \(h\), contradicting
\textup{(iii)}.  Hence \(\rho_h\) is \(f\)-separating.

Conversely, suppose \textup{(v)} holds and let
\((\vec x,\vec y)\in\mathfrak U_d(L)\), \(d\le f\).  Put
\(\mathbf s=\eta^{(d)}(\vec x)\) and
\(\mathbf t=\eta^{(d)}(\vec y)\).  Lemma~\ref{alg:lem:profile} gives
\((\mathbf s,\mathbf t)\in\mathcal U_d(T,P)\).  If
\(h(x_i)=h(y_i)\) for every \(i\), then this common value belongs to
\(\rho_h(s_i)\cap\rho_h(t_i)\) in every coordinate, contradicting the
\(f\)-separation of \(\rho_h\).  Thus some coordinate is separated and
\textup{(iii)} follows.
\end{proof}

\begin{table}[t]
\centering
\caption{Typed safety dictionary.  The first four rows are equivalent for a
fixed compression witness \(h\); the last two are equivalent for an arbitrary
relational witness \(\rho\).}
\label{tab:typed-safety-dictionary}
\small
\begin{tabular}{@{}>{\raggedright\arraybackslash}p{0.27\linewidth}
                  >{\raggedright\arraybackslash}p{0.28\linewidth}
                  >{\raggedright\arraybackslash}p{0.34\linewidth}@{}}
\toprule
\textbf{Carrier} & \textbf{Witness} & \textbf{Feasibility predicate}\\
\midrule
principal SCL / words & \(h:\Sigma^*\to M\) & \(\operatorname{Safe}_f(h;L)\)\\
word conflicts & \(h:\Sigma^*\to M\) & coordinatewise separation of \(\mathfrak U_d(L)\)\\
free monoid & \(\ker h\) & \(f\)-separating for \(L\)\\
syntactic monoid & canonical \(\rho_h:T\relto M\) & \(f\)-separating for \((T,P)\)\\
\midrule
syntactic monoid & arbitrary \(\rho:T\relto M\) & \(f\)-separating for \((T,P)\)\\
syntactic monoid & overlap \(\tau_\rho\) & \(f\)-safe for \((T,P)\)\\
\bottomrule
\end{tabular}
\end{table}

\begin{remark}[Witness-level scope of the interface]
\label{alg:rem:witness-scope}
The relational statement in Theorem~\ref{alg:thm:safety-interface} concerns
only the canonical relation \(\rho_h\) induced by the fixed compression map
\(h\).  It does not say that an arbitrary relational morphism is itself the
canonical relation of some homomorphism.  Lemma~\ref{alg:lem:selector} below
is the separate freeness argument that extracts from an arbitrary relational
morphism a functional witness whose canonical fibers are contained in the
original fibers.  This distinction is what upgrades the witness-level
interface to the least-codomain characterization of
Theorem~\ref{alg:thm:relational-characterization}.
\end{remark}

\subsection{Free-monoid congruence characterization}

The witness-level equivalence above immediately yields the free-monoid
optimization statement.

\begin{theorem}[Finite-index congruence characterization]
\label{alg:thm:free-congruence}
For every language \(L\subseteq\Sigma^*\) and every \(f\ge1\),
\[
\cmp_f(L)
=
\min\left\{
[\Sigma^*:\theta]:
\begin{array}{l}
\theta\text{ is a finite-index monoid congruence}\\
\text{that is \(f\)-separating for }L
\end{array}
\right\}.
\]
Here and below, the minimum is understood to be \(\infty\) if no such finite-index
congruence exists.
\end{theorem}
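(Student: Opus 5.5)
The plan is to establish a two-way correspondence between compression witnesses \(h:\Sigma^*\to M\) and finite-index congruences on \(\Sigma^*\), in which image size matches index and safety matches \(f\)-separation. The witness-level equivalence \textup{(iii)}\(\Leftrightarrow\)\textup{(iv)} of Theorem~\ref{alg:thm:safety-interface} already supplies the feasibility half of this correspondence.

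For the inequality ``\(\ge\)'', let \(h:\Sigma^*\to M\) be any finite-monoid homomorphism with \(\operatorname{Safe}_f(h;L)\). Its kernel \(\ker h\) is a two-sided monoid congruence. By the first isomorphism theorem, \(\Sigma^*/\ker h\cong\operatorname{im}(h)\), so \(\ker h\) has finite index \(|\operatorname{im}(h)|\). By Theorem~\ref{alg:thm:safety-interface}, \textup{(i)}\(\Rightarrow\)\textup{(iv)}, the congruence \(\ker h\) is \(f\)-separating for \(L\). Hence the right-hand minimum is at most \(|\operatorname{im}(h)|\), and minimizing over \(h\) gives the bound.

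For the inequality ``\(\le\)'', let \(\theta\) be a finite-index \(f\)-separating congruence. Take the quotient morphism \(h_\theta:\Sigma^*\twoheadrightarrow M:=\Sigma^*/\theta\). Since \(\theta\) is a two-sided congruence, \(M\) is a finite monoid, and \(h_\theta\) is a surjective homomorphism with \(|\operatorname{im}(h_\theta)|=[\Sigma^*:\theta]\). Its kernel is exactly \(\theta\). Hence \textup{(iv)} holds for \(h_\theta\), and the interface theorem gives \(\operatorname{Safe}_f(h_\theta;L)\). Therefore \(\cmp_f(L)\le[\Sigma^*:\theta]\).

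The two inequalities together also handle the value \(\infty\): a finite witness exists on one side if and only if one exists on the other. There is no genuine obstacle here. The only point to check carefully is that the index of \(\ker h\) equals \(|\operatorname{im}(h)|\) rather than \(|M|\), since \(h\) need not be surjective. This is why the correspondence goes through the image, and why \(\cmp_f\) is defined using \(|\operatorname{im}(h)|\).
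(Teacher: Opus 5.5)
Your proposal is correct and follows the paper's proof essentially verbatim: the paper also passes from a witness \(h\) to \(\ker h\), using the first isomorphism theorem for \([\Sigma^*:\ker h]=|\operatorname{im}(h)|\) and the Safety Interface Theorem for \(f\)-separation. Conversely, it passes from \(\theta\) to the quotient morphism, whose kernel is \(\theta\), and applies \textup{(iv)}\(\Rightarrow\)\textup{(i)}. Your explicit remark about the value \(\infty\) is a harmless addition that the paper leaves implicit.
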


\begin{proof}
Let \(h:\Sigma^*\to M\) be a finite witnessing compression map and replace
\(M\) by \(\operatorname{im}(h)\).  By the Safety Interface Theorem,
\(\ker h\) is \(f\)-separating, while the first isomorphism theorem gives
\([\Sigma^*:\ker h]=|\operatorname{im}(h)|.\)
Hence the minimum on the right is at most \(\cmp_f(L)\).

Conversely, let \(\theta\) be a finite-index \(f\)-separating congruence and
let \(q_\theta:\Sigma^*\to\Sigma^*/\theta\) be the quotient morphism.  Its
kernel is exactly \(\theta\), so the congruence interface
\textup{(iv)}\(\Rightarrow\)\textup{(i)} of
Theorem~\ref{alg:thm:safety-interface} shows that \(q_\theta\) is safe through
arity \(f\).  Therefore
\(\cmp_f(L)\le |\Sigma^*/\theta|=[\Sigma^*:\theta].\)
Taking the minimum proves the equality.
\end{proof}

Recognizing congruences form a restricted feasible class: if a finite-index
monoid congruence saturates \(L\), its quotient morphism recognizes \(L\) and
hence is safe at every finite arity by
Proposition~\ref{conf:prop:recognition-relaxation}.  Therefore, with
\(\mathsf{Sep}_f(L)\) denoting the finite-index \(f\)-separating congruences,
\(\mathsf{Rec}(L)\subseteq\bigcap_{f\ge1}\mathsf{Sep}_f(L)\), while
\(\mathsf{Sep}_{f+1}(L)\subseteq\mathsf{Sep}_f(L)\).
For regular \(L\), minimizing congruence index in
Theorem~\ref{alg:thm:free-congruence} therefore gives the nested relaxation
chain
\(\cmp_1(L)\le\cmp_2(L)\le\cdots\le |T_L|.\)
Thus increasing arity imposes successively stronger separation constraints,
without in general forcing exact syntactic recognition.

\begin{remark}[Clause form]
For an unsafe arity-\(d\) pair \((\vec x,\vec y)\), write
\(\vec x=(x_1,\ldots,x_d)\) and \(\vec y=(y_1,\ldots,y_d)\).  The separation
requirement is the clause
\(\bigvee_{i=1}^d(x_i\not\equiv_\theta y_i)\).
Thus \(\cmp_f(L)\) is the minimum index of a finite monoid congruence
satisfying all unsafe separation clauses of width at most \(f\).
For \(f=1\), every constraint is simply a required separation
\(x\not\equiv_\theta y\).
\end{remark}

\subsection{Relational-morphism characterization}

\begin{lemma}[Free-monoid selector for a relational morphism]
\label{alg:lem:selector}
Let \(\rho:T\relto M\) be a monoid relational morphism.  For each letter
\(a\in\Sigma\), choose
\(m_a\in\rho(\eta(a)).\)
There is a unique monoid homomorphism
\(h_\rho:\Sigma^*\to M, \qquad h_\rho(a)=m_a\quad(a\in\Sigma),\)
and it satisfies
\(h_\rho(w)\in\rho(\eta(w)) \qquad(w\in\Sigma^*).\)
Consequently,
\(\rho_{h_\rho}(t)\subseteq\rho(t) \qquad(t\in T).\)
\end{lemma}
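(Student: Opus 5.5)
Existence and uniqueness of \(h_\rho\) come from the universal property of the free monoid \(\Sigma^*\) on \(\Sigma\): any assignment of elements of \(M\) to the letters extends uniquely to a monoid homomorphism. The letter choices \(m_a\) are possible because \(\rho(\eta(a))\) is nonempty, as required in the definition of a monoid relational morphism. Concretely, \(h_\rho(a_1\cdots a_k)=m_{a_1}\cdots m_{a_k}\) and \(h_\rho(\varepsilon)=1_M\).

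The membership \(h_\rho(w)\in\rho(\eta(w))\) is proved by induction on the length of \(w\).
\begin{itemize}
\item For \(w=\varepsilon\) we have \(h_\rho(\varepsilon)=1_M\in\rho(1_T)=\rho(\eta(\varepsilon))\), by the identity clause of the definition.
\item For \(w=ua\) with \(a\in\Sigma\), the induction hypothesis gives \(h_\rho(u)\in\rho(\eta(u))\), and \(m_a\in\rho(\eta(a))\) by choice. The product rule \(\rho(s)\rho(t)\subseteq\rho(st)\) then yields
\[
h_\rho(ua)=h_\rho(u)\,m_a\in\rho(\eta(u))\,\rho(\eta(a))\subseteq\rho(\eta(u)\eta(a))=\rho(\eta(ua)),
\]
using that \(\eta\) is a homomorphism.
\end{itemize}

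The consequence is then a matter of unwinding the definition of the canonical relation. An element of \(\rho_{h_\rho}(t)\) has the form \(h_\rho(w)\) with \(\eta(w)=t\). By the membership just proved it lies in \(\rho(\eta(w))=\rho(t)\), so \(\rho_{h_\rho}(t)\subseteq\rho(t)\).

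No step is a real obstacle. The only points to watch are that the identity clause \(1_M\in\rho(1_T)\) is needed for the empty word, and that nonemptiness of \(\rho(\eta(a))\) is needed to make the choices \(m_a\). Both are part of the identity-preserving convention fixed earlier.
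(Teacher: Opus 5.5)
Your proposal is correct and follows the paper's proof essentially step for step. Both use the universal property of \(\Sigma^*\) for existence and uniqueness, an induction on word length (the identity clause for \(\varepsilon\), the product law \(\rho(s)\rho(t)\subseteq\rho(st)\) for \(w=ua\)), and then unwind the definition of \(\rho_{h_\rho}\) to get the fiber inclusion.
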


\begin{proof}
This is the standard free-monoid selection argument for relational morphisms;
we include it because the image-size comparison below uses the inclusion of
fibers explicitly.  The unique extension exists because \(\Sigma^*\) is the
free monoid on \(\Sigma\).

We prove
\(h_\rho(w)\in\rho(\eta(w))\)
by induction on \(|w|\).  For \(w=\varepsilon\),
\(h_\rho(\varepsilon)=1_M\in\rho(1_T)=\rho(\eta(\varepsilon)).\)
For \(w=va\), the induction hypothesis and the chosen letter value give
\(h_\rho(v)\in\rho(\eta(v)), \qquad h_\rho(a)=m_a\in\rho(\eta(a)).\)
The relational-morphism multiplication law therefore yields
\(h_\rho(w) =h_\rho(v)h_\rho(a) \in\rho(\eta(v)\eta(a)) =\rho(\eta(w)).\)
If \(m\in\rho_{h_\rho}(t)\), then \(m=h_\rho(w)\) for some word with
\(\eta(w)=t\); hence \(m\in\rho(t)\).  Therefore
\(\rho_{h_\rho}(t)\subseteq\rho(t).\)
\end{proof}

\begin{theorem}[Relational-morphism characterization]
\label{alg:thm:relational-characterization}
For every regular language \(L\) with syntactic pointed monoid \((T,P)\),
\[
\cmp_f(L)
=
\min\left\{
|M|:
\begin{array}{l}
M\text{ is a finite monoid and}\\
\rho:T\relto M\text{ is an \(f\)-separating relational morphism}
\end{array}
\right\}.
\]
\end{theorem}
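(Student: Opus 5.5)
The proof will establish the two inequalities separately, using the Safety Interface Theorem~\ref{alg:thm:safety-interface} for one direction and the selector Lemma~\ref{alg:lem:selector} for the other.

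\emph{Upper bound on the right-hand side ($\le$).} Let \(h:\Sigma^*\to M\) be a finite witness attaining \(\cmp_f(L)\), and replace \(M\) by \(\operatorname{im}(h)\), which is a submonoid. By Lemma~\ref{alg:lem:canonical-rel}, the canonical relation \(\rho_h:T\relto \operatorname{im}(h)\) is a monoid relational morphism. By implication (i)\(\Rightarrow\)(v) of Theorem~\ref{alg:thm:safety-interface}, it is \(f\)-separating for \((T,P)\). Its codomain has cardinality \(|\operatorname{im}(h)|=\cmp_f(L)\), so the minimum on the right is at most \(\cmp_f(L)\). If \(\cmp_f(L)=\infty\), this direction is vacuous.

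\emph{Lower bound ($\ge$).} Let \(\rho:T\relto M\) be any \(f\)-separating relational morphism into a finite monoid \(M\). Lemma~\ref{alg:lem:selector} yields a homomorphism \(h_\rho:\Sigma^*\to M\) with \(\rho_{h_\rho}(t)\subseteq\rho(t)\) for every \(t\). The key observation is that separation is inherited under shrinking fibers. For each unsafe pair \((\mathbf s,\mathbf t)\in\mathcal U_d(T,P)\), \(d\le f\), some coordinate satisfies \(\rho(s_i)\cap\rho(t_i)=\varnothing\). The smaller fibers then satisfy \(\rho_{h_\rho}(s_i)\cap\rho_{h_\rho}(t_i)=\varnothing\) as well. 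Hence \(\rho_{h_\rho}\) is \(f\)-separating, and by (v)\(\Rightarrow\)(i) of Theorem~\ref{alg:thm:safety-interface}, \(h_\rho\) is safe through arity \(f\). Therefore \(\cmp_f(L)\le|\operatorname{im}(h_\rho)|\le|M|\). Minimizing over \(\rho\) gives the reverse inequality, including the case where both sides are infinite.

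The only delicate point is bookkeeping rather than mathematics. The relational-morphism definition does not require surjectivity onto \(M\), so one must note that restricting the codomain to \(\operatorname{im}(h)\) is still a monoid and that \(\rho_h\) lands in it. Monotonicity of separation under fiber inclusion is the step that genuinely uses the selector lemma, and it is immediate once stated. No further obstacle is expected, since \(T\) is finite and nonemptiness of fibers is guaranteed by the surjectivity of \(\eta\).
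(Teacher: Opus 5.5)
Your proposal is correct and follows essentially the same route as the paper. The canonical relation \(\rho_h\) of an optimal witness, with codomain trimmed to \(\operatorname{im}(h)\), gives one inequality via the Safety Interface Theorem, and the selector lemma together with inheritance of separation under fiber inclusion gives the other. The only cosmetic difference is that the paper first notes that the feasible set is nonempty (the identity \(T\to T\) is \(f\)-separating, so both sides are finite), whereas you handle the possibly infinite case by vacuity, which is equally valid.
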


\begin{proof}
The feasible set on the right is nonempty.  Indeed, the identity homomorphism
\(T\to T\), viewed as a functional relational morphism, is \(f\)-separating:
for an unsafe pair \((\mathbf s,\mathbf t)\), not all coordinates can be equal,
since equality in every coordinate would give identical profiles.

First note that a relational morphism has no reason to carry unused codomain
values.  If \(\rho:T\relto M\), then
\(M':=\pi_2(\operatorname{Gr}(\rho))=\bigcup_{t\in T}\rho(t)\) is a
submonoid of \(M\): it contains \(1_M\), and the relational multiplication
law closes it under products.  Replacing \(M\) by \(M'\) changes none of the
fibers and hence preserves the separating property.  Thus the minimum on the
right-hand side may always be taken over codomains consisting entirely of
values actually used by \(\rho\).

Let \(h:\Sigma^*\to M\) be a minimum witnessing compression map, with codomain
restricted to \(\operatorname{im}(h)\).
By Theorem~\ref{alg:thm:safety-interface}, the canonical relation
\(\rho_h:T\relto\operatorname{im}(h)\)
is \(f\)-separating. Therefore
\(\min_{\rho}|M| \le \cmp_f(L).\)
Conversely, let \(\rho:T\relto M\) be any finite \(f\)-separating relational morphism.
Choose generator values and form \(h_\rho:\Sigma^*\to M\) as in Lemma~\ref{alg:lem:selector}.
Since
\(\rho_{h_\rho}(t)\subseteq\rho(t)\)
for every \(t\), every disjointness
\(\rho(s_i)\cap\rho(t_i)=\varnothing\)
implies
\(\rho_{h_\rho}(s_i)\cap\rho_{h_\rho}(t_i)=\varnothing.\)
Thus \(\rho_{h_\rho}\) is also \(f\)-separating.
By Theorem~\ref{alg:thm:safety-interface},
\(h_\rho\) witnesses tuple substitutability.

Restricting its codomain to
\(\operatorname{im}(h_\rho)\) gives \(\cmp_f(L) \le |\operatorname{im}(h_\rho)| \le |M|.\) Taking the minimum over relational morphisms yields the reverse inequality.
\end{proof}

\begin{corollary}[Syntactic invariance]
\label{alg:cor:syntactic-invariance}
Let \(L\) and \(K\) be regular languages, possibly over different finite
alphabets.  If their syntactic pointed monoids are isomorphic, then for every
finite \(f\ge1\),
\(\cmp_f(L)=\cmp_f(K).\)
Thus the entire compression sequence is an invariant of the finite pointed
syntactic monoid \((T_L,P_L)\) up to isomorphism.
\end{corollary}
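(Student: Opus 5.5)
The plan is to read the corollary off Theorem~\ref{alg:thm:relational-characterization}. That theorem expresses \(\cmp_f(L)\) purely in terms of the pointed syntactic monoid \((T_L,P_L)\). It is the minimum of \(|M|\) over finite monoids \(M\) and \(f\)-separating relational morphisms \(\rho:T_L\relto M\). The separating condition refers only to the relation \(\mathcal U_d(T_L,P_L)\), \(d\le f\). That relation is defined through the profiles \(\Phi_d\), which depend only on the multiplication of \(T_L\) and on the subset \(P_L\). Neither the alphabet nor the particular syntactic morphism enters the right-hand side.

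So let \(\alpha:T_L\to T_K\) be a monoid isomorphism with \(\alpha(P_L)=P_K\). First, I will check that \(\alpha\) transports profiles: for \(\mathbf t\in T_L^d\),
\[
\Phi^{K}_d(\alpha^{(d)}(\mathbf t))=\alpha^{(d+1)}\bigl(\Phi^{L}_d(\mathbf t)\bigr).
\]
This holds because \(q_0t_1q_1\cdots t_dq_d\in P_L\) if and only if \(\alpha(q_0)\alpha(t_1)\cdots\alpha(q_d)\in P_K\), using that \(\alpha\) is a bijective homomorphism sending \(P_L\) onto \(P_K\). Since \(\alpha^{(d+1)}\) is a bijection, it preserves both nonempty intersection and inequality of profile sets. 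Hence \((\mathbf s,\mathbf t)\in\mathcal U_d(T_L,P_L)\) if and only if \((\alpha^{(d)}\mathbf s,\alpha^{(d)}\mathbf t)\in\mathcal U_d(T_K,P_K)\).

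Second, given an \(f\)-separating \(\rho:T_K\relto M\), I will set \(\rho'=\rho\circ\alpha:T_L\relto M\), i.e.\ \(\rho'(t)=\rho(\alpha(t))\). This is a relational morphism: fibers are nonempty, \(1_M\in\rho(\alpha(1))=\rho(1)\), and \(\rho'(s)\rho'(t)\subseteq\rho(\alpha(s)\alpha(t))=\rho'(st)\). It is \(f\)-separating by the transport of unsafe pairs from the first step, since fiber disjointness in coordinate \(i\) is literally the same condition. Using \(\alpha^{-1}\) symmetrically, the two feasible sets in Theorem~\ref{alg:thm:relational-characterization} have the same attainable values of \(|M|\), so \(\cmp_f(L)=\cmp_f(K)\) for every \(f\). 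Both minima are finite by that theorem's nonemptiness argument.

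There is no real obstacle here. The only point needing care is that \emph{pointed} isomorphism, meaning \(\alpha(P_L)=P_K\), is exactly what makes profiles correspond; a bare monoid isomorphism would not suffice. The change of alphabet is harmless because Theorem~\ref{alg:thm:relational-characterization} already eliminated the free monoid from the invariant. The final sentence of the corollary is just a restatement of the equality holding for all \(f\).
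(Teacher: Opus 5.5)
Your proposal is correct and follows the paper's own argument: it reads the corollary off Theorem~\ref{alg:thm:relational-characterization}, observes that a pointed isomorphism transports profiles and hence unsafe pairs, and transports $f$-separating relational morphisms in both directions by composition without changing $|M|$. You simply write out the profile-transport and composition checks that the paper leaves implicit.
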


\begin{proof}
The right-hand side of Theorem~\ref{alg:thm:relational-characterization}
depends only on the pointed monoid: a pointed-monoid isomorphism transports
unsafe syntactic profiles and, by composition, transports
\(f\)-separating relational morphisms in both directions without changing
the codomain size.
\end{proof}

The characterization is effective on a finite pointed syntactic monoid:
the identity gives \(\cmp_f(L)\le |T_L|\), the unsafe sets through arity
\(f\) are finite and computable, and one can enumerate finite target monoids
and relational morphisms up to that size.  No efficiency claim is implied.
The selector lemma also explains why multivalued fibers create no consistency
problem: choosing values on generators extends uniquely by freeness, remains
inside the relational fibers by multiplicativity, and can only shrink fiber
intersections.  Thus every separating relational morphism contains a
functional compression witness of no larger image size.

\subsection{A core finite-side bound: commutative collapse}

\begin{theorem}[Commutative-syntactic arity collapse]
\label{group:thm:commutative-collapse}
Let \(L\) be regular and suppose its syntactic monoid \(T_L\) is commutative.
For every finite compression map \(h\) and every \(f\ge1\),
\(
L\text{ is }(1,h)\text{-substitutable}
\iff
L\text{ is }(f,h)\text{-tuple-substitutable}.
\)
Consequently \(\cmp_1(L)=\cmp_2(L)=\cdots\).
\end{theorem}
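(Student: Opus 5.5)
The plan is to show that, when \(T=T_L\) is commutative, an arity-\(d\) unsafe pair collapses to a unary unsafe pair by concatenating its coordinates. Unary safety then rules out collapsing all coordinates at once. The implication from \((f,h)\)- to \((1,h)\)-substitutability is immediate, since the arity-\(f\) safety predicate of Definition~\ref{scl:def:safety} includes \(d=1\). So only the forward direction needs an argument.

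\textbf{Step 1: profiles depend only on the product.} For \(\mathbf t=(t_1,\ldots,t_d)\in T^d\) write \(\pi(\mathbf t)=t_1\cdots t_d\). For \(r\in T\) put \(A(r)=\{q\in T: q\,r\in P\}\). By commutativity,
\[
q_0t_1q_1\cdots t_dq_d=(q_0q_1\cdots q_d)\,\pi(\mathbf t),
\]
so \(\Phi_d(\mathbf t)=\{(q_0,\ldots,q_d): q_0\cdots q_d\in A(\pi(\mathbf t))\}\). In particular \(\Phi_1(u)=\{(q_0,q_1):q_0q_1\in A(u)\}\).

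Every \(q\in T\) is realized as a product of a \((d+1)\)-tuple, for instance \((q,1,\ldots,1)\). Hence \(\Phi_d(\mathbf s)=\Phi_d(\mathbf t)\) iff \(A(\pi\mathbf s)=A(\pi\mathbf t)\) iff \(\Phi_1(\pi\mathbf s)=\Phi_1(\pi\mathbf t)\). Likewise \(\Phi_d(\mathbf s)\cap\Phi_d(\mathbf t)\neq\varnothing\) iff \(A(\pi\mathbf s)\cap A(\pi\mathbf t)\ne\varnothing\) iff \(\Phi_1(\pi\mathbf s)\cap\Phi_1(\pi\mathbf t)\ne\varnothing\). Therefore
\[
(\mathbf s,\mathbf t)\in\mathcal U_d(T,P)\iff(\pi\mathbf s,\pi\mathbf t)\in\mathcal U_1(T,P).
\]

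\textbf{Step 2: transfer to words.} Suppose \(h\) is unary-safe, and suppose \((\vec x,\vec y)\in\mathfrak U_d(L)\) with \(h(x_i)=h(y_i)\) for every \(i\). Consider the concatenations \(x=x_1\cdots x_d\) and \(y=y_1\cdots y_d\).

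By Lemma~\ref{alg:lem:profile}, \((\eta^{(d)}\vec x,\eta^{(d)}\vec y)\in\mathcal U_d(T,P)\). Since \(\eta(x)=\pi(\eta^{(d)}\vec x)\), Step 1 places \((\eta(x),\eta(y))\) in \(\mathcal U_1(T,P)\). Lemma~\ref{alg:lem:profile} at \(d=1\) then makes \((x,y)\) an unsafe unary pair.

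But \(h(x)=h(x_1)\cdots h(x_d)=h(y_1)\cdots h(y_d)=h(y)\), contradicting unary safety (Theorem~\ref{alg:thm:safety-interface}, (iii)). Hence \(h\) is safe through every arity \(f\).

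\textbf{Step 3: equality of the numbers.} The two safety predicates coincide for every \(h\), so the minimization defining \(\cmp_f\) ranges over the same feasible set for every \(f\). This gives \(\cmp_1(L)=\cmp_2(L)=\cdots\).

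The only delicate point is Step 1. One must check that both conditions defining an unsafe pair, overlap and inequality of profiles, reduce to statements about \(A(\cdot)\). This uses surjectivity of the map \((q_0,\ldots,q_d)\mapsto q_0\cdots q_d\), which holds because the identity may pad the tuple. The rest is bookkeeping through Lemma~\ref{alg:lem:profile}.
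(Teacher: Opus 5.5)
Your proof is correct and uses the same idea as the paper's: concatenate the tuple coordinates and use commutativity to turn every tuple context into a unary context on the product, so that unary safety applied to \(x_1\cdots x_d\) and \(y_1\cdots y_d\) settles every arity. The paper works directly on words: a shared context \(E\) becomes the shared unary context \(U\blank_1\) with \(U=u_0\cdots u_d\), and any other context \(F\) becomes \(V\blank_1\). You instead work on finite profiles via \(A(r)\), which gives the slightly stronger exact correspondence between \(\mathcal U_d(T,P)\) and \(\mathcal U_1(T,P)\) on products, but the argument is otherwise the same.
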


\begin{proof}
Only the forward implication needs proof. Let
\(\vec x=(x_1,\ldots,x_d)\) and \(\vec y=(y_1,\ldots,y_d)\), \(d\le f\),
have equal componentwise \(h\)-type and share an accepting tuple context
\(E=u_0\blank_1u_1\cdots\blank_du_d.\)
Put \(X=x_1\cdots x_d\), \(Y=y_1\cdots y_d\), and
\(U=u_0\cdots u_d\). Componentwise compression equality gives
\(h(X)=h(Y)\). Since the syntactic monoid is commutative, \(\eta(E[\vec x])=\eta(UX), \qquad \eta(E[\vec y])=\eta(UY).\) Thus \(X,Y\) share the accepting unary context \(U\blank_1\), and unary
\(h\)-substitutability gives \(\D_L(X)=\D_L(Y)\).

For any other tuple context
\(F=v_0\blank_1v_1\cdots\blank_dv_d,\)
put \(V=v_0\cdots v_d\). Again by commutativity, \(F[\vec x]\in L \iff VX\in L \iff VY\in L \iff F[\vec y]\in L.\) Hence the tuple distributions are equal.
\end{proof}

\section{Arity Amplification inside \texorpdfstring{\(\mathbf V_{ab}\)}{Vab}}
\label{sec:vab-arity}

The center-index theorem,
Theorem~\ref{group:thm:center-index}, shows that the first arity jump can
already be large.  We now show that every successive arity boundary can be
arbitrarily large while the syntactic monoid remains in one fixed classical
pseudovariety.

Let
\(
M_{ab}:=\operatorname{Synt}(\{ab\})
=\{1,a,b,ab,0\},
\qquad
\mathbf V_{ab}:=\langle M_{ab}\rangle.
\)
The monoid \(M_{ab}\) is one of the canonical aperiodic generators occurring in
the Margolis--Pin classification of minimal noncommutative
varieties~\cite{MargolisPin1984}.

Call a word \(w\in\Delta^*\) \emph{letter-distinct} if every letter of
\(\Delta\) occurs in \(w\) at most once. A language is letter-distinct if
each of its words is letter-distinct.  Margolis and Pin call such words
\emph{multilinear}; their Proposition~1.1 characterizes the presence of
\(M_{ab}\) in a variety by, equivalently, the inclusion of all multilinear
singleton languages in the corresponding variety of languages
\cite[Proposition~1.1]{MargolisPin1984}.  The lemma below is therefore also a
finite-union consequence of that result; we include an explicit recognition
argument because the formulation in terms of \(T_L\in\mathbf V_{ab}\) is used
repeatedly below.

\begin{lemma}[Letter-distinct finite languages lie in \(\mathbf V_{ab}\)]
\label{vab:lem:letterdistinct}
Let \(\Delta\) be finite. If \(L\subseteq\Delta^*\) is finite and every word
of \(L\) has pairwise distinct letters, then \(T_L\in\mathbf V_{ab}.\) \end{lemma}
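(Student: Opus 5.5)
The plan is to show that \(T_L\) divides a finite direct product of copies of \(M_{ab}\). Since pseudovarieties are closed under finite products, submonoids and quotients, this gives \(T_L\in\mathbf V_{ab}\). Equivalently, I will show that \(L\) is a finite union of languages each recognized by a morphism into a product of copies of \(M_{ab}\). The syntactic monoid of a finite union divides the product of the syntactic monoids of the pieces, and a finite union of languages recognized by monoids in \(\mathbf V_{ab}\) is recognized by their product. So it suffices to treat a single letter-distinct word \(w=c_1c_2\cdots c_k\) with pairwise distinct letters \(c_i\in\Delta\); the case \(L=\varnothing\) is trivial, since then \(T_L\) is trivial.

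For a single word \(w\), I would recognize \(\{w\}\) in a product of copies of \(M_{ab}\), indexed by pairs and single letters. \textbf{Pair coordinates:} for each ordered pair of letters \((c,c')\in\Delta^2\), use the morphism \(\varphi_{c,c'}:\Delta^*\to M_{ab}\) sending \(c\mapsto a\), \(c'\mapsto b\), and all other letters to \(1\). When \(c=c'\), send \(c\mapsto a\) and use a separate device: \(M_{ab}\) contains \(\{1,a,0\}\cong U_1\)-like structure with \(a^2=0\), so \(\varphi_c(c)=a\) records ``\(c\) occurs at most once'' and its value \(1,a,0\) gives the count \(0\), \(1\), \(\ge2\). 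Then \(\varphi_{c,c'}(u)\) for distinct \(c,c'\) is determined by the occurrence pattern. If each of \(c,c'\) occurs at most once, the value is \(1\), \(a\), \(b\), or \(ab\), and \(ab\) means \(c\) precedes \(c'\); the value is \(0\) if \(c'\) precedes \(c\) or if either letter repeats.

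The next step verifies that \(\{w\}\) is a union of classes of the product morphism \(\Phi=(\varphi_c)_{c}\times(\varphi_{c,c'})_{c\ne c'}\). A word \(u\) lies in \(\{w\}\) exactly when three conditions hold: every letter outside \(\{c_1,\ldots,c_k\}\) has count \(0\); each \(c_i\) has count exactly \(1\); and \(c_i\) precedes \(c_j\) for all \(i<j\). A word with these letter counts is a permutation of the \(c_i\), and the pairwise precedences then force it to equal \(w\). All three conditions are read off from \(\Phi(u)\), so \(\{w\}=\Phi^{-1}(F)\) for a suitable \(F\). Hence \(T_{\{w\}}\) divides \(\operatorname{im}\Phi\subseteq M_{ab}^{N}\).

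The main obstacle is only bookkeeping. One must check that the single-letter coordinates really live in \(M_{ab}\): the map \(c\mapsto a\), others \(\mapsto1\), is a monoid morphism with \(a^2=0\). One must also check that the pair coordinates correctly detect order, which requires the computation \(ba=0\), \(a\cdot ab=0\), and so on. After these checks, closure of \(\mathbf V_{ab}\) under finite products and divisors completes the argument for finite unions. Alternatively, as the paper notes, the whole statement follows from \cite[Proposition~1.1]{MargolisPin1984} together with closure of varieties of languages under finite union.
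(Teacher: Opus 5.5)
Your proof is correct and follows essentially the same route as the paper. Both recognize each singleton \(\{w\}\) by a finite product of marker morphisms into \(M_{ab}\) that detect exact-once occurrence, absence of foreign letters, and relative order, and then pass to finite unions through a product recognizer. The differences (all ordered pairs instead of consecutive pairs, \(c\mapsto a\) instead of \(c\mapsto ab\) for counting) are cosmetic; in fact your \(\Phi\) does not depend on \(w\), so it recognizes all of \(L\) at once and the union step is not needed.
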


\begin{proof}
Let \(w=c_1\cdots c_m\) have pairwise distinct letters. For each \(i\), map
\(c_i\mapsto ab\) and every other letter to \(1\) in \(M_{ab}\); requiring image
\(ab\) forces \(c_i\) to occur exactly once. For each \(i<m\), map
\(c_i\mapsto a\), \(c_{i+1}\mapsto b\), and every other letter to \(1\);
under the exact-once conditions, requiring image \(ab\) forces \(c_i\) to
precede \(c_{i+1}\). A further factor sends letters not occurring in \(w\)
to \(0\). A finite direct product therefore recognizes exactly \(\{w\}\),
so its syntactic monoid divides a finite power of \(M_{ab}\). For the finite
union \(L\), take the direct product of these singleton recognizers and
choose as accepting subset the product states for which at least one
singleton coordinate is accepting.  This is a well-defined subset of the
product monoid because each coordinate state already determines membership in
the corresponding singleton accepting set.  The product morphism therefore
recognizes exactly \(L\), so \(T_L\) divides a finite power of \(M_{ab}\).
\end{proof}

Fix \(d\ge1\), put \(n=d+1\), and for \(1\le j\le R\) introduce pairwise
distinct letters
\(a_{j,1},\ldots,a_{j,n}, \qquad c_{j,1},\ldots,c_{j,n-1}.\)
Define \(B_j=a_{j,1}\cdots a_{j,n},\) \(
D_j=a_{j,1}c_{j,1}a_{j,2}c_{j,2}\cdots
c_{j,n-1}a_{j,n},
\)
and
\(\widehat L_{d,R}:=\{B_j,D_j:1\le j\le R\}.\)
Every accepted word is letter-distinct, hence
\(T_{\widehat L_{d,R}}\in\mathbf V_{ab}\).

\begin{lemma}[Bounded compression through arity \(d\)]
\label{vab:lem:low}
For \(n=d+1\), there is a compression map \(h_d\), independent of \(R\), such
that \(\widehat L_{d,R}\) is \((d,h_d)\)-tuple-substitutable and
\(
  |\operatorname{im}(h_d)|\le
  C_d:=\frac{5d^2+9d+8}{2}.
\)
In particular, \(\cmp_d(\widehat L_{d,R})\le C_d\).
\end{lemma}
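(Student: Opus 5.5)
The plan is to build one \emph{block-forgetting} observation. It records only the shape of a word inside a generic block and forgets the block index \(j\). Its codomain is then a fixed finite monoid depending only on \(d\), so it is independent of \(R\).

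\textbf{Construction.} Let \(n=d+1\) and fix abstract generic words \(B=\alpha_1\cdots\alpha_n\) and \(D=\alpha_1\gamma_1\alpha_2\cdots\gamma_{n-1}\alpha_n\). Let \(M_d\) be the factor monoid of \(\{B,D\}\): its elements are \(1\), \(0\), and the nonempty factors of \(B\) or \(D\), with product equal to concatenation when the result is again such a factor and \(0\) otherwise. Associativity follows because \(M_d\) is the syntactic-style quotient of the free monoid by the ideal of non-factors.

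Define \(h_d(a_{j,i})=\alpha_i\) and \(h_d(c_{j,i})=\gamma_i\). Then a word has nonzero image exactly when it is a factor of some \(B_j\) or \(D_j\), with letters from a single block, and its image is that factor with \(j\) erased.

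Counting, \(D\) has \((2n-1)n\) nonempty factors. \(B\) contributes a further \(n(n-1)/2\) factors of length \(\ge2\), since these are not factors of \(D\). Adding \(1\) and \(0\) gives \((5d^2+7d+6)/2\le C_d\). If the safety argument below needs to remember extra boundary information, the slack of \(n=d+1\) elements in \(C_d\) can be spent on it. One such refinement is a flag recording whether a factor touches the first or last letter of its word; I expect this is what the stated constant accounts for.

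\textbf{Safety.} Let \(\vec x,\vec y\) have arity \(k\le d\), equal componentwise \(h_d\)-types, and a shared accepting context \(E\). I will verify condition (ii) of Theorem~\ref{alg:thm:safety-interface}: I need to show \(\D^{(k)}(\vec x)=\D^{(k)}(\vec y)\).

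\begin{enumerate}[label=(\arabic*),leftmargin=*]
\item Since \(E[\vec x]\in L\), every component has nonzero image. Hence each nonempty component of \(\vec x\) lies in a single block, and the same holds for \(\vec y\).
\item All nonempty components of \(\vec x\) lie in one block \(j\), those of \(\vec y\) in one block \(j'\), and any letter of \(E\) lies in both blocks. So either \(j=j'\), or \(E\) has no letters.
\item \textbf{Case \(j=j'\).} Equal \(h_d\)-types together with the fact that each block uses distinct letters force \(x_i=y_i\) for all \(i\), so the distributions are trivially equal.
\item \textbf{Case \(j\ne j'\).} Here \(E\) is letterless, so the concatenation \(x_1\cdots x_k\) is itself \(B_j\) or \(D_j\), and likewise for \(\vec y\) in block \(j'\). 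Since \(k\le d=n-1\), the pieces leave fewer than \(n-1\) interior gaps. Hence a tuple cut from \(B_j\) can never be completed to \(D_j\), because some adjacent pair \(a_{j,i}a_{j,i+1}\) sits inside a single component. Dually, a tuple cut from \(D_j\) already contains all letters of its word.
\item Consequently every accepting context for \(\vec x\) inserts no letters, and the same holds for \(\vec y\). Both distributions then equal the single all-\(\varepsilon\) context, up to the trivial ambiguity of where empty strings sit, and this description is identical for the two tuples.
\end{enumerate}

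\textbf{Main obstacle.} The delicate step is the exhaustive case analysis in steps (4)--(5). It must handle empty components, mixed \(B\)/\(D\) shapes, and contexts that supply a proper part of the missing letters. The essential use of \(k\le d\) is the pigeonhole fact that at most \(k-1\le d-1\) interior cuts cannot separate all \(n-1\) adjacent \(a\)-pairs.

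If a residual unsafe pair appears there, I would first try enriching the image with the boundary flags mentioned above. The intended such pair is one where a context supplies letters at an end of the word, and the flags would make equal types force equal completion sets after renaming \(j\mapsto j'\). This enrichment stays within the budget \(C_d\).
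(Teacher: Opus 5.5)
Your proof is correct and is essentially the paper's argument. You use the same role-forgetting map into the Rees quotient of the factor monoid of the two templates, and the same case split: in the same-copy case the tuples are identical, and in the distinct-copy case the shared context is letterless, after which the length argument handles $D$ and the adjacent-pair pigeonhole handles $B$. Your exact count $(5d^2+7d+6)/2$ shows that the paper's $C_d$ simply double-counts the $n$ single-letter factors, so no boundary flags are needed. One small fix: mixed-block words such as $a_{1,1}a_{2,2}$ also have nonzero image, so in step (1) you should deduce that each component lies in a single block from its being a factor of the accepted words $E[\vec x]$ and $E[\vec y]$, not from its having nonzero image.
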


\begin{proof}
Let
\(\Gamma_d=\{A_1,\ldots,A_n,C_1,\ldots,C_{n-1}\}\), and let
\(\rho_d:\Sigma_{d,R}^*\to\Gamma_d^*\) send
\(a_{j,i}\mapsto A_i\) and \(c_{j,i}\mapsto C_i\).  The two accepted role
templates are
\(\beta=A_1\cdots A_n, \qquad \delta=A_1C_1A_2C_2\cdots C_{n-1}A_n.\)
Put \(F_d=\operatorname{Fact}(\{\beta,\delta\})\), including the empty
factor.  Since \(\Gamma_d^*\setminus F_d\) is a two-sided ideal, the Rees
factor
\(Q_d=F_d\cup\{0\}\)
is a finite monoid: factors multiply by concatenation when the concatenation
is again a factor and otherwise to \(0\).  Let \(q_d:\Gamma_d^*\to Q_d\)
be the quotient morphism and set \(h_d=q_d\circ\rho_d\).  The longest
template has length \(2n-1\); hence, using the crude sum of the numbers of
nonempty factors of \(\beta\) and \(\delta\),
\(|Q_d| \le 2+\frac{n(n+1)}2+n(2n-1) =\frac{5d^2+9d+8}{2}=C_d.\)
The quotient map is injective on \(F_d\).

Let \(1\le r\le d=n-1\), and let
\(\mathbf x=(x_1,\ldots,x_r)\) and
\(\mathbf y=(y_1,\ldots,y_r)\) have equal componentwise \(h_d\)-type and
share an accepting tuple context \(E\).  Every component occurring in an
accepted filling is a factor of an accepted word, so its role word lies in
\(F_d\).  Equality of \(h_d\)-types and injectivity on \(F_d\) therefore
give
\(\rho_d(x_i)=\rho_d(y_i)\qquad(1\le i\le r).\)
Applying \(\rho_d\) to the two accepted fillings shows that they have the
same role template, either \(\beta\) or \(\delta\).

If the two fillings belong to the same copy \(j\), a factor of \(B_j\) or
\(D_j\) is uniquely determined by its role word, so \(x_i=y_i\) for every
\(i\).  Their complete tuple distributions are therefore equal.

Assume now that the fillings belong to distinct copies \(j\ne k\).
If every tuple component were empty, then \(\mathbf x=\mathbf y\) and the two
completed fillings would coincide, so they could not belong to distinct
copies.  Hence at least one tuple component is nonempty.  Since the copy
alphabets are disjoint, any fixed terminal in the shared accepting context
would force one completed word to mix two copy alphabets.  Hence
\(E\) has no fixed terminal material.  Its nonempty tuple components thus
partition the whole accepted word, in the order in which the holes occur,
and the two fillings have the same role template.

Suppose first that this template is \(\delta\).  Then the tuple
components have total length \(|\delta|=2n-1\), the maximum length of an
accepted word.  Let \(E'\) be an arbitrary tuple context.  If
\(E'[\mathbf x]\in\widehat L_{d,R}\), then \(E'\) cannot contain any
fixed terminal material, because the inserted tuple already has total length
\(2n-1\).  Hence the filled word is simply the concatenation of the tuple
components in their fixed hole order.  Since the \(x_i\) use only the copy-%
\(j\) alphabet, this word must be \(D_j\).  The equal role blocks of
\(\mathbf y\) then concatenate to \(D_k\), so
\(E'[\mathbf y]\in\widehat L_{d,R}\).  The same argument with
\(j\) and \(k\) interchanged proves the converse.  Therefore
\(E'[\mathbf x]\in\widehat L_{d,R} \quad\Longleftrightarrow\quad E'[\mathbf y]\in\widehat L_{d,R}\)
for every \(E'\), and the two tuple distributions are equal.

Finally suppose that the common template is \(\beta\).  The nonempty tuple
components partition \(A_1\cdots A_n\) into at most \(r\le n-1\)
contiguous role blocks.  Hence at least one block contains an adjacent pair
\(A_iA_{i+1}\).  Again let \(E'\) be arbitrary and assume first that
\(E'[\mathbf x]\in\widehat L_{d,R}\).  Because the inserted letters all
belong to copy \(j\), an accepted filling cannot mix them with fixed letters
from another copy.  If the target is \(B_j\), its length is already the total
inserted length \(n\), so \(E'\) has no fixed terminal material; replacing
copy \(j\) by copy \(k\) then gives \(B_k\), and
\(E'[\mathbf y]\) is accepted.

The only other possible target over the copy-%
\(j\) alphabet would be \(D_j\), whose role template is \(\delta\).
But \(D_j\) places the marker \(C_i\) strictly between
\(A_i\) and \(A_{i+1}\), while those two roles lie inside one tuple
component.  Fixed context material cannot be inserted into the interior of a
component, so \(D_j\) cannot be formed.  Thus every context accepting
\(\mathbf x\) also accepts \(\mathbf y\).  Interchanging \(j\) and
\(k\) gives the reverse implication.  Hence again
\(E'[\mathbf x]\in\widehat L_{d,R} \quad\Longleftrightarrow\quad E'[\mathbf y]\in\widehat L_{d,R}\)
for every tuple context \(E'\).

All cases give equality of the complete tuple distributions, so \(h_d\) is a
witness through arity \(d\).
\end{proof}

\begin{lemma}[Unbounded next arity]
\label{vab:lem:high}
For \(n=d+1\), \(\cmp_n(\widehat L_{d,R}) \ge \left\lceil R^{1/n}\right\rceil.\) \end{lemma}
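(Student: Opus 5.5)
The plan is a pigeonhole argument at arity \(n=d+1\), using the $n$-tuples that split each \(B_j\) into single letters:
\[
\mathbf x^{(j)}:=(a_{j,1},a_{j,2},\ldots,a_{j,n})\in(\Sigma^*)^n,\qquad 1\le j\le R.
\]
The empty context \(E_0=\blank_1\blank_2\cdots\blank_n\) (all fixed material empty) gives \(E_0[\mathbf x^{(j)}]=B_j\in\widehat L_{d,R}\). So all \(R\) tuples share a common accepting tuple context.

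Next, we show that any two of these tuples are nonetheless distributionally distinct, i.e. that \(\D^{(n)}(\mathbf x^{(j)})\ne\D^{(n)}(\mathbf x^{(k)})\) for \(j\ne k\). Take the context
\[
E_j=\blank_1c_{j,1}\blank_2c_{j,2}\cdots c_{j,n-1}\blank_n.
\]
Then \(E_j[\mathbf x^{(j)}]=D_j\) is accepted. On the other hand, \(E_j[\mathbf x^{(k)}]\) mixes letters from the copy-\(j\) alphabet and the copy-\(k\) alphabet. Every accepted word uses a single copy alphabet, so this word is rejected. Hence \((\mathbf x^{(j)},\mathbf x^{(k)})\in\mathfrak U_n(\widehat L_{d,R})\) for all \(j\ne k\).

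Now let \(h\) be any witness safe through arity \(n\), with \(m=|\operatorname{im}(h)|\). By the semantic-clause interface (iii) of Theorem~\ref{alg:thm:safety-interface}, every unsafe pair differs in some coordinate under \(h\). So the map \(j\mapsto h^{(n)}(\mathbf x^{(j)})\in\operatorname{im}(h)^n\) is injective on \(\{1,\ldots,R\}\), which gives \(R\le m^n\) and hence \(m\ge\lceil R^{1/n}\rceil\), since \(m\) is an integer. Minimizing over witnesses gives the claim.

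Equivalently, one could observe that the principal concepts \(\gamma_n(\mathbf x^{(j)})\) form an \(R\)-clique in \(\Om_n\) and apply Proposition~\ref{scl:prop:chromatic-lower}.

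No step is a serious obstacle. The only point needing care is the rejection of \(E_j[\mathbf x^{(k)}]\), which rests on the fact that the copy alphabets are pairwise disjoint and every accepted word is spelled in one copy.
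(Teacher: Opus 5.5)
Your proposal is correct and follows essentially the same route as the paper's proof: the same $R$ letter-split tuples, the adjacent-hole context $\blank_1\cdots\blank_n$ accepting all of them, the copy-specific marker context accepting only the $j$-th tuple, and the pigeonhole bound $R\le m^n$. Your justification that $E_j[\mathbf x^{(k)}]$ is rejected is also the paper's: $n\ge 2$ puts at least one copy-$j$ marker among copy-$k$ letters, and the alphabets are disjoint. Your alternative via an $R$-clique in $\Om_n$ and the geometric lower bound (Proposition~\ref{scl:prop:chromatic-lower}) is likewise valid.
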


\begin{proof}
Put \(\mathbf a_j=(a_{j,1},\ldots,a_{j,n}).\) The adjacent-hole context \(\blank_1\cdots\blank_n\) accepts every
\(\mathbf a_j\), producing \(B_j\). The context
\(\blank_1c_{j,1}\blank_2c_{j,2}\cdots c_{j,n-1}\blank_n\)
accepts \(\mathbf a_j\), producing \(D_j\), and rejects every
\(\mathbf a_k\) with \(k\ne j\). Thus the \(R\) tuples are pairwise unsafe.
If a witness has \(m\) image values, their componentwise types lie in a set of
size at most \(m^n\) and must be distinct. Hence \(R\le m^n\).
\end{proof}

\begin{theorem}[Every-boundary amplification inside \(\mathbf V_{ab}\)]
\label{vab:thm:amplification}
For every fixed \(d\ge1\),
\(\sup_{\,T_L\in\mathbf V_{ab}} \frac{\cmp_{d+1}(L)}{\cmp_d(L)} =\infty.\)
The witnesses are finite letter-distinct languages whose words have length at
most \(2d+1\).  Moreover, there are an integer \(m_d\ge1\) and a sequence
of such languages \(L_j\), with \(T_{L_j}\in\mathbf V_{ab}\), for which
\(\cmp_d(L_j)=m_d \quad\text{for all }j, \qquad \cmp_{d+1}(L_j)\longrightarrow\infty.\)
\end{theorem}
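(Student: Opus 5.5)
The plan is to apply Lemmas~\ref{vab:lem:low} and~\ref{vab:lem:high} to the family \(\widehat L_{d,R}\) for fixed \(d\), letting \(R\to\infty\). Every word of \(\widehat L_{d,R}\) is letter-distinct, so Lemma~\ref{vab:lem:letterdistinct} gives \(T_{\widehat L_{d,R}}\in\mathbf V_{ab}\). The words have length \(n=d+1\) or \(2n-1=2d+1\), which gives the length bound.

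Lemma~\ref{vab:lem:low} yields \(\cmp_d(\widehat L_{d,R})\le C_d\), with \(C_d\) independent of \(R\). Lemma~\ref{vab:lem:high} yields \(\cmp_{d+1}(\widehat L_{d,R})\ge\lceil R^{1/(d+1)}\rceil\). Since \(\cmp_d\ge1\), we get
\[
\cmp_{d+1}(\widehat L_{d,R})/\cmp_d(\widehat L_{d,R})\ge R^{1/(d+1)}/C_d\to\infty .
\]
This proves the supremum statement.

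For the refined statement, the value \(\cmp_d(\widehat L_{d,R})\) is an integer in \(\{1,\ldots,C_d\}\) for every \(R\). By pigeonhole, some value \(m_d\) is attained for infinitely many \(R\). Enumerate these as \(R_1<R_2<\cdots\) and set \(L_j=\widehat L_{d,R_j}\). Then \(\cmp_d(L_j)=m_d\) for all \(j\), and \(\cmp_{d+1}(L_j)\ge\lceil R_j^{1/(d+1)}\rceil\to\infty\). Finiteness of \(\cmp_{d+1}(L_j)\) is automatic from Proposition~\ref{conf:prop:recognition-relaxation}, since each \(L_j\) is finite and hence regular.

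There is essentially no obstacle, as the substantive work lies in the two lemmas. The only point needing care is the choice of \(m_d\): it is obtained via pigeonhole rather than computed, since \(\cmp_d(\widehat L_{d,R})\) might a priori vary with \(R\). One could also try to show that it is constant, but the pigeonhole argument avoids this.
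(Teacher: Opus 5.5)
Your proposal is correct and follows the paper's own proof essentially step for step. You apply Lemmas~\ref{vab:lem:low} and~\ref{vab:lem:high} to \(\widehat L_{d,R}\) to get the ratio \(\ge R^{1/(d+1)}/C_d\to\infty\), then use pigeonhole on the bounded integer values \(1\le\cmp_d(\widehat L_{d,R})\le C_d\) to extract a subsequence with constant \(\cmp_d\) and divergent \(\cmp_{d+1}\); your extra checks on \(\mathbf V_{ab}\)-membership, the length bound \(2d+1\), and finiteness of \(\cmp_{d+1}\) are correct bookkeeping that the paper records alongside the construction.
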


\begin{proof}
For \(\widehat L_{d,R}\), Lemmas~\ref{vab:lem:low} and
\ref{vab:lem:high} give
\(\frac{\cmp_{d+1}(\widehat L_{d,R})} {\cmp_d(\widehat L_{d,R})} \ge \frac{\lceil R^{1/(d+1)}\rceil}{C_d},\)
which tends to infinity with \(R\).  The same lemmas give
\(1\le\cmp_d(\widehat L_{d,R})\le C_d\) and
\(\cmp_{d+1}(\widehat L_{d,R})\ge\lceil R^{1/(d+1)}\rceil\).
Hence \(\cmp_d(\widehat L_{d,R})\) takes only finitely many integer values;
one value occurs for infinitely many \(R\), and along that subsequence the
lower-arity value is constant while the next-arity value diverges.
\end{proof}

\subsection{Finite-prefix universality inside \texorpdfstring{\(\mathbf V_{ab}\)}{Vab}}

\begin{lemma}[Disjoint-union bounds]
\label{vab:lem:union}
Let \(L_i\subseteq\Sigma_i^*\) be languages over pairwise disjoint finite
alphabets, assume \(\varepsilon\notin L_i\) for every \(i\), and put
\(L=\bigcup_{i=1}^sL_i\). Then
\(\max_i\cmp_f(L_i)\le\cmp_f(L)\). If
\(h_i:\Sigma_i^*\to M_i\) witnesses \(L_i\), then \(L\) has a witness with at
most \(2^s\prod_{i=1}^s|M_i|\) image elements.
\end{lemma}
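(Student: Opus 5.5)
The plan is to handle the two inequalities separately: restriction to a sub-alphabet gives the lower bound, and a product observation augmented by a content semilattice gives the upper bound. The one fact used repeatedly is \emph{alphabet confinement}. Because the alphabets are pairwise disjoint and \(\varepsilon\notin L_k\) for every \(k\), a word of \(L\) lying in \(\Sigma_i^*\) lies in \(L_i\). Moreover, a word of \(L\) that contains at least one letter of \(\Sigma_i\) lies in \(L_i\) and hence uses only letters of \(\Sigma_i\).

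\emph{Lower bound.} Let \(h:\Sigma^*\to M\) be a witness for \(L\) through arity \(f\), where \(\Sigma=\bigcup_i\Sigma_i\), and restrict it to \(h_i:=h|_{\Sigma_i^*}\). Its image has size at most \(|\operatorname{im}(h)|\), so it suffices to show that every \((\vec x,\vec y)\in\mathfrak U_d(L_i)\) with \(d\le f\) already lies in \(\mathfrak U_d(L)\).
\begin{itemize}[leftmargin=*]
\item Overlap is preserved: a shared accepting context over \(\Sigma_i\) yields a word of \(L_i\subseteq L\).
\item Non-equivalence is preserved: take a context \(E\) over \(\Sigma_i\) with \(E[\vec x]\in L_i\) and \(E[\vec y]\notin L_i\). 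Then \(E[\vec y]\in\Sigma_i^*\), so by confinement \(E[\vec y]\notin L\).
\end{itemize}
Safety of \(h\) for \(L\) then separates the pair in some coordinate, so \(h_i\) is a witness for \(L_i\). This gives \(\cmp_f(L_i)\le\cmp_f(L)\) for each \(i\).

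\emph{Upper bound.} Let \(\pi_i:\Sigma^*\to\Sigma_i^*\) be the erasing morphism onto \(\Sigma_i\). Let \(c:\Sigma^*\to(\{0,1\}^s,\vee)\) be the content morphism recording which alphabets \(\Sigma_i\) occur in a word. Put
\[
h=(h_1\circ\pi_1,\ldots,h_s\circ\pi_s,c),
\]
whose image has at most \(2^s\prod_i|M_i|\) elements.

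For safety, suppose \(\vec x,\vec y\) of arity \(d\le f\) have equal \(h\)-types and share an accepting context \(E\).
\begin{itemize}[leftmargin=*]
\item If every component is empty, then \(\vec x=\vec y\) and there is nothing to prove.
\item Otherwise, equal content forces \(\vec y\) to have a nonempty component too. Say \(E[\vec x]\in L_i\). By confinement, every component of \(\vec x\) and every piece of \(E\) lies in \(\Sigma_i^*\). Equal content then puts every component of \(\vec y\) in \(\Sigma_i^*\) as well.
\item On such tuples \(\pi_i\) is the identity, so \(h_i(x_j)=h_i(y_j)\) for all \(j\). Also \(E\) is a shared \(L_i\)-context, because \(E[\vec y]\in L\cap\Sigma_i^*=L_i\). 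Safety of \(h_i\) for \(L_i\) gives \(\D_{L_i}^{(d)}(\vec x)=\D_{L_i}^{(d)}(\vec y)\).
\end{itemize}
Finally take an arbitrary context \(F\) with \(F[\vec x]\in L\). By confinement (a nonempty component lies in \(\Sigma_i\)), \(F\) lies over \(\Sigma_i\) and \(F[\vec x]\in L_i\). Hence \(F[\vec y]\in L_i\subseteq L\), and the same holds with \(\vec x\) and \(\vec y\) interchanged. So \(\D_L^{(d)}(\vec x)=\D_L^{(d)}(\vec y)\), and \(h\) is a witness for \(L\) through arity \(f\).

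The step most likely to cause trouble is the all-empty and mixed-emptiness bookkeeping. If the content coordinate \(c\) is dropped, an empty component of \(\vec x\) could face a nonempty component of \(\vec y\) from another alphabet. The factor \(2^s\) is exactly the price of ruling this out.
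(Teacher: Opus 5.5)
Your proposal is correct and follows essentially the same route as the paper's proof. The lower bound restricts a witness for \(L\) to \(\Sigma_i^*\), using \(L_i=L\cap\Sigma_i^*\). The upper bound uses the product of the maps \(h_i\circ\pi_i\) with an alphabet-support coordinate, confines every accepting context to a single \(\Sigma_j\), and transfers \(\Sigma_j\)-distribution equality to contexts over the full alphabet.
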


\begin{proof}
For the lower bound, let \(h\) witness \(L\) and fix \(i\). Since the
alphabets are disjoint and no \(L_j\) contains the empty word,
\(L_i=L\cap\Sigma_i^*\). Every unsafe tuple pair for \(L_i\), witnessed and
distinguished by contexts over \(\Sigma_i\), is therefore still unsafe for
\(L\). Hence the
restriction of \(h\) to \(\Sigma_i^*\) witnesses \(L_i\), giving
\(
\cmp_f(L_i)\le |\operatorname{im}(h)|.
\)

For the upper bound, let
\(\pi_i:\Sigma^*\to\Sigma_i^*\) erase all letters outside \(\Sigma_i\), and
let
\(
\operatorname{supp}:\Sigma^*\longrightarrow(2^{[s]},\cup)
\)
record the set of component alphabets occurring in a word. Define
\(
H=(\operatorname{supp},h_1\circ\pi_1,\ldots,h_s\circ\pi_s).
\)
Its image has size at most \(2^s\prod_i|M_i|\).

Let coordinatewise \(H\)-equal tuples \(\mathbf x,\mathbf y\) share an
accepting tuple context \(E\). If every tuple component is empty then
\(\mathbf x=\mathbf y\), so assume that some component is nonempty. Since
\(E[\mathbf x]\in L\), it lies in a unique \(L_j\) because the alphabets are
pairwise disjoint. Consequently every fixed terminal of \(E\) and every
nonempty \(x_k\) uses only letters of \(\Sigma_j\). Equality of the support
coordinates implies that each corresponding \(y_k\) has exactly the same
alphabet support; hence \(E[\mathbf y]\) also lies in \(L_j\).
Componentwise equality of \(H\) now gives
\(
h_j(\pi_j(x_k))=h_j(\pi_j(y_k))
\qquad\text{for every coordinate }k,
\)
and here \(\pi_j(x_k)=x_k\) and \(\pi_j(y_k)=y_k\). Since \(h_j\) witnesses
\(L_j\), the two tuples have the same distribution with respect to all named
contexts over \(\Sigma_j\).

It remains to compare contexts over the full union alphabet. Any context
\(F\) accepting one of the two tuples must again complete it to some \(L_k\).
The presence of a nonempty tuple component, together with equality of support,
forces \(k=j\); thus all fixed terminals of such an accepting \(F\) lie in
\(\Sigma_j\). Therefore \(F\) is effectively a \(\Sigma_j\)-context and
acceptance is the same for the two tuples by the preceding paragraph. A
context mixing component alphabets accepts neither tuple. Hence the complete
tuple distributions in \(L\) are equal, and \(H\) witnesses \(L\).
\end{proof}

\begin{lemma}[Triangular gluing principle]
\label{vab:lem:triangular-gluing}
Fix \(m\ge2\).  For each \(1\le g<m\), let
\(\{K_{g,r}:r\ge1\}\) be a family of \(\varepsilon\)-free regular languages
over finite alphabets.  Assume that there is an integer \(C_g\ge1\) and a
function \(B_g:\mathbb N\to\mathbb R_{>0}\) with
\(B_g(r)\to\infty\) such that, for every \(r\),
\[
 \cmp_g(K_{g,r})\le C_g,
 \qquad
 \cmp_{g+1}(K_{g,r})\ge B_g(r).
 \tag{TG}
\]
Then for arbitrary \(\lambda_1,\ldots,\lambda_{m-1}>0\), one can choose
parameters \(r_1,\ldots,r_{m-1}\), rename the component alphabets to be
pairwise disjoint, and form
\(K:=\bigcup_{g=1}^{m-1}K_{g,r_g}\)
so that
\(\cmp_{f+1}(K)>\lambda_f\cmp_f(K) \quad\text{and}\quad \cmp_{f+1}(K)>\cmp_f(K) \qquad(1\le f<m).\)
In particular, any collection of boundary amplifiers satisfying \textup{(TG)}
can be glued into one language whose successive compression ratios
simultaneously exceed arbitrarily prescribed finite multiplicative gap
thresholds.
\end{lemma}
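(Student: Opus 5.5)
The plan is to choose the parameters \(r_g\) by downward induction on \(g\), using Lemma~\ref{vab:lem:union} for both directions of the estimates. Fix witnesses for the bounded levels. For each \(g\), by \textup{(TG)} there is a witness of \(K_{g,r}\) through arity \(g\) with at most \(C_g\) image elements. Since \(\cmp\) is monotone in \(f\), this also bounds \(\cmp_f(K_{g,r})\le C_g\) for every \(f\le g\). For \(f>g\) we have no bound from \textup{(TG)}, but Proposition~\ref{conf:prop:recognition-relaxation} gives \(\cmp_f(K_{g,r})\le|T_{K_{g,r}}|\) for all \(f\), because each \(K_{g,r}\) is regular. Write \(N_g(r):=|T_{K_{g,r}}|\).

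\emph{Two-sided estimate at arity \(f\).} Consider the glued language \(K=\bigcup_g K_{g,r_g}\), with the alphabets renamed to be pairwise disjoint. The renaming preserves syntactic pointed monoids, so by Corollary~\ref{alg:cor:syntactic-invariance} all compression numbers are unchanged. Lemma~\ref{vab:lem:union} then gives, for each \(1\le f\le m\):
\[
\max_g\cmp_f(K_{g,r_g})\le\cmp_f(K)\le 2^{m-1}\prod_{g<f}N_g(r_g)\prod_{g\ge f}C_g .
\]
The upper bound comes from using the \(C_g\)-witness for each component with \(g\ge f\) and the syntactic morphism for each component with \(g<f\). For the lower bound at arity \(f+1\), keep only the component \(g=f\):
\[
\cmp_{f+1}(K)\ge B_f(r_f).
\]

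\emph{Choice of parameters.} Set
\[
U_f:=2^{m-1}\prod_{g<f}N_g(r_g)\prod_{g\ge f}C_g ,
\]
which depends only on \(r_1,\dots,r_{f-1}\). Choose \(r_1,r_2,\dots,r_{m-1}\) in increasing order. At step \(f\), the parameters \(r_1,\dots,r_{f-1}\) are already fixed, so \(U_f\) is a fixed finite number. Because \(B_f(r)\to\infty\), we can pick \(r_f\) with
\[
B_f(r_f)>\max(\lambda_f,1)\,U_f .
\]
Then
\[
\cmp_{f+1}(K)\ge B_f(r_f)>\max(\lambda_f,1)\,U_f\ge\max(\lambda_f,1)\cmp_f(K).
\]
This gives both \(\cmp_{f+1}(K)>\lambda_f\cmp_f(K)\) and, since \(\cmp_f(K)\ge1\), the strict inequality \(\cmp_{f+1}(K)>\cmp_f(K)\).

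\emph{The main point to check.} The key observation is that later choices never disturb earlier inequalities. The upper bound \(U_f\) involves \(r_g\) only through the constants \(C_g\) when \(g\ge f\), so it is independent of \(r_f,\dots,r_{m-1}\). The lower bound at level \(f+1\) depends only on \(r_f\). One must also confirm that each \(N_g(r_g)\) is finite, which holds because the \(K_{g,r}\) are regular. Finally, the hypotheses of Lemma~\ref{vab:lem:union} must be verified: the components are \(\varepsilon\)-free and their alphabets are disjoint after renaming.
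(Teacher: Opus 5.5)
Your proposal is correct and follows the paper's proof essentially verbatim: the same bound \(U_f=2^{m-1}\prod_{g<f}|T_{K_{g,r_g}}|\prod_{g\ge f}C_g\) from the union lemma, the same recursive choice of \(r_f\) with \(B_f(r_f)>\max\{1,\lambda_f\}U_f\), and the same triangular observation that \(U_f\) depends only on \(r_1,\ldots,r_{f-1}\). The only slip is your opening phrase ``downward induction on \(g\)''; the choice you actually make, and need, runs upward starting from \(r_1\), exactly as in the paper.
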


\begin{proof}
For a fixed component \(K_{g,r}\), exact syntactic recognition supplies an
all-arity witness of size
\(D_g(r):=|T_{K_{g,r}}|.\)
After renaming alphabets, which does not change any compression number, all
selected components may be assumed pairwise alphabet-disjoint.

Choose \(r_1,\ldots,r_{m-1}\) recursively.  When
\(r_1,\ldots,r_{f-1}\) have been fixed, set
\(U_f := 2^{m-1} \left(\prod_{g<f}D_g(r_g)\right) \left(\prod_{g\ge f}C_g\right).\)
Crucially, \(U_f\) depends on already chosen parameters only: components with
index \(g\ge f\) enter through the uniform lower-arity bounds \(C_g\), not
through their future parameters.  Since \(B_f(r)\to\infty\), choose \(r_f\)
so that
\(B_f(r_f)>\max\{1,\lambda_f\}U_f.\)
This is the triangular feature that makes the recursion non-circular.

Let \(K=\bigcup_{g=1}^{m-1}K_{g,r_g}\).  At arity \(f\), use the syntactic
all-arity witness of size \(D_g(r_g)\) for each already chosen component
\(g<f\).  For \(g\ge f\), condition \textup{(TG)} gives a witness through
arity \(g\), hence through arity \(f\), with at most \(C_g\) values.
The upper half of Lemma~\ref{vab:lem:union} therefore gives
\(\cmp_f(K)\le U_f.\)
On the other hand, the lower half of the same lemma and the \(f\)-th amplifier
give
\[
\begin{aligned}
 \cmp_{f+1}(K)
 &\ge \cmp_{f+1}(K_{f,r_f})
 \ge B_f(r_f)\\
 &>\max\{1,\lambda_f\}U_f
 \ge\max\{1,\lambda_f\}\cmp_f(K).
\end{aligned}
\]
This proves both claimed inequalities simultaneously for every
\(1\le f<m\).
\end{proof}

\begin{theorem}[Finite-prefix universality inside \(\mathbf V_{ab}\)]
\label{vab:thm:finite-prefix}
Let \(m\ge2\) and \(\lambda_1,\ldots,\lambda_{m-1}>0\).  There exists a
finite letter-distinct language \(L\) with \(T_L\in\mathbf V_{ab}\), all of whose
words have length at most \(2m-1\), such that
\(\cmp_{f+1}(L)>\lambda_f\cmp_f(L) \qquad(1\le f<m).\)
In particular,
\(\cmp_1(L)<\cmp_2(L)<\cdots<\cmp_m(L).\)
\end{theorem}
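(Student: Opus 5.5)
The plan is to apply the triangular gluing principle (Lemma~\ref{vab:lem:triangular-gluing}) with the amplifier families $K_{g,r}:=\widehat L_{g,r}$ for $1\le g<m$. First I check hypothesis (TG). Each $\widehat L_{g,r}$ is finite, hence regular, and $\varepsilon$-free, since all its words have length $g+1\ge2$. Lemma~\ref{vab:lem:low} gives the uniform bound $\cmp_g(\widehat L_{g,r})\le C_g=\tfrac{5g^2+9g+8}{2}$, which is an integer independent of $r$. Lemma~\ref{vab:lem:high} gives $\cmp_{g+1}(\widehat L_{g,r})\ge\lceil r^{1/(g+1)}\rceil=:B_g(r)$, and this tends to infinity. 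So (TG) holds for every $g$.

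Given $\lambda_1,\dots,\lambda_{m-1}$, the gluing lemma produces parameters $r_g$ and, after a disjoint renaming of alphabets, the language $L:=\bigcup_{g=1}^{m-1}\widehat L_{g,r_g}$. This $L$ satisfies $\cmp_{f+1}(L)>\lambda_f\cmp_f(L)$ and $\cmp_{f+1}(L)>\cmp_f(L)$ for $1\le f<m$. The strict chain $\cmp_1(L)<\cdots<\cmp_m(L)$ follows directly from the second inequality. Renaming does not affect compression numbers (Corollary~\ref{alg:cor:syntactic-invariance}) and preserves letter-distinctness.

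It remains to verify the side conditions on $L$. Each word of $L$ is a word of some component $\widehat L_{g,r_g}$ over that component's alphabet. It is therefore letter-distinct within that alphabet, and hence letter-distinct over the union alphabet. The longest words are the $D_j$ of the top component $g=m-1$, of length $2(g+1)-1=2m-1$. Finally, $L$ is finite because it is a finite union of finite languages. Lemma~\ref{vab:lem:letterdistinct} applied over the union alphabet then gives $T_L\in\mathbf V_{ab}$.

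I expect no serious obstacle, since all the work is done by the lemmas already stated. The one point needing care is that Lemma~\ref{vab:lem:union} and the gluing lemma require pairwise disjoint alphabets and $\varepsilon$-freeness. The construction of $\widehat L_{d,R}$ already uses fresh letters, so disjointness is a matter of choosing fresh letter names for each $g$. I would also note explicitly that (TG) asks only for $C_g\ge1$ to be an integer, which holds since $5g^2+9g$ is always even.
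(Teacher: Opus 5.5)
Your proposal is correct and is essentially the paper's own proof: you apply Lemma~\ref{vab:lem:triangular-gluing} to the amplifier families $\widehat L_{g,R}$, with $C_g=\frac{5g^2+9g+8}{2}$ from Lemma~\ref{vab:lem:low} and $B_g(R)=\lceil R^{1/(g+1)}\rceil$ from Lemma~\ref{vab:lem:high}, and then read off finiteness, letter-distinctness, the length bound $2m-1$, and $T_L\in\mathbf V_{ab}$ via Lemma~\ref{vab:lem:letterdistinct}. The only slip is cosmetic: the words of $\widehat L_{g,r}$ have lengths $g+1$ and $2g+1$, not all $g+1$, and $\varepsilon$-freeness holds either way.
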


\begin{proof}
For each \(1\le g<m\), apply Lemmas~\ref{vab:lem:low} and
\ref{vab:lem:high} to the amplifier family
\(K_{g,R}:=\widehat L_{g,R}.\)
These languages are finite and \(\varepsilon\)-free, and they satisfy the
hypotheses of Lemma~\ref{vab:lem:triangular-gluing} with
\(C_g=\frac{5g^2+9g+8}{2}, \qquad B_g(R)=\left\lceil R^{1/(g+1)}\right\rceil.\)
The triangular gluing principle therefore chooses
\(R_1,\ldots,R_{m-1}\) such that, after pairwise disjoint renaming of the
alphabets,
\(L=\bigcup_{g=1}^{m-1}\widehat L_{g,R_g}\)
satisfies all the required inequalities.

Every accepted word lies in one component and is letter-distinct; its length
is at most
\(
 \max_{g<m}(2g+1)=2m-1.
\)
Hence the union is again a finite letter-distinct language, and
Lemma~\ref{vab:lem:letterdistinct} gives
\(T_L\in\mathbf V_{ab}\).
\end{proof}

\subsection{SCL compression height}

\begin{definition}[SCL compression height]
\label{ah:def:height}
Let \(\mathbf V\) be a pseudovariety of finite monoids. Since
\(\Obj_1(L)\cong T_L\), define
\[
\operatorname{ch}(\mathbf V)
:=
\min\left\{
r\in\mathbb{N}_{\ge1}:
\begin{array}{l}
\text{for every regular language \(L\) with \(T_L\in\mathbf V\),}\\
\cmp_f(L)=\cmp_r(L)
\text{ for every }f\in\mathbb{N}_{\ge r}
\end{array}
\right\},
\]
with the convention that the minimum of the empty set is \(\infty\).
Here and throughout this definition, the quantification over regular
languages ranges over languages over arbitrary finite alphabets; no common
alphabet is fixed across the class.
\end{definition}

For each fixed regular language \(L\), the sequence
\(\cmp_1(L)\le\cmp_2(L)\le\cdots\le |T_L|\) is nondecreasing and bounded,
so it always stabilizes at some finite arity.  Thus
\(\operatorname{ch}(\mathbf V)=\infty\) means that no single stabilization
arity works uniformly for all languages with syntactic monoid in \(\mathbf
V\); it does not mean that one language has a divergent compression sequence.

\begin{lemma}[Monotonicity]
\label{ah:lem:monotone}
If \(\mathbf V\subseteq\mathbf W\), then
\(
\operatorname{ch}(\mathbf V)
\le
\operatorname{ch}(\mathbf W).
\)
\end{lemma}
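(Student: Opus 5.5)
The plan is to argue directly from Definition~\ref{ah:def:height}, by comparing the two sets of admissible arities whose minima define \(\operatorname{ch}(\mathbf V)\) and \(\operatorname{ch}(\mathbf W)\).

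For a pseudovariety \(\mathbf X\), write
\[
S(\mathbf X)=\bigl\{r\ge1:\ \text{for every regular }L\text{ with }T_L\in\mathbf X,\ \cmp_f(L)=\cmp_r(L)\ \text{for all } f\ge r\bigr\}.
\]
By definition, \(\operatorname{ch}(\mathbf X)=\min S(\mathbf X)\), with the convention \(\min\varnothing=\infty\).

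The key step is the inclusion \(S(\mathbf W)\subseteq S(\mathbf V)\). Let \(r\in S(\mathbf W)\), and let \(L\) be any regular language, over any finite alphabet, with \(T_L\in\mathbf V\). Since \(\mathbf V\subseteq\mathbf W\), we also have \(T_L\in\mathbf W\). The defining property of \(r\in S(\mathbf W)\) then gives \(\cmp_f(L)=\cmp_r(L)\) for every \(f\ge r\). As \(L\) was arbitrary, \(r\in S(\mathbf V)\). The condition in the definition ranges over all regular languages whose syntactic monoid lies in the class, on arbitrary alphabets, so it is universally quantified over a set that only grows when the class grows. This is what makes the inclusion immediate.

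It remains to pass from the inclusion to the minima. If \(S(\mathbf W)=\varnothing\), then \(\operatorname{ch}(\mathbf W)=\infty\) and there is nothing to prove. Otherwise \(S(\mathbf V)\supseteq S(\mathbf W)\ne\varnothing\), so \(\min S(\mathbf V)\le\min S(\mathbf W)\).

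There is no real obstacle. The only points needing care are the convention \(\min\varnothing=\infty\), which is handled by the case split above, and the fact that no common alphabet is fixed. The latter causes no difficulty, because membership \(T_L\in\mathbf V\) is tested language by language. Upward closure of \(S(\mathbf X)\) is not needed for this argument.
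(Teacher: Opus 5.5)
Your proof is correct and follows the paper's argument. The paper notes only that the class of languages quantified over for \(\mathbf V\) is contained in the class quantified over for \(\mathbf W\), which is exactly your inclusion \(S(\mathbf W)\subseteq S(\mathbf V)\); your case split for the convention \(\min\varnothing=\infty\) spells out the step the paper leaves implicit.
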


\begin{proof}
The class of languages quantified over for \(\mathbf V\) is contained in the
class quantified over for \(\mathbf W\).
\end{proof}

\begin{corollary}[Two core height bounds]
\label{height:cor:core-bounds}
The pseudovariety \(\mathbf{Com}\) of finite commutative monoids has
\(\operatorname{ch}(\mathbf{Com})=1,\)
whereas
\(\operatorname{ch}(\mathbf V_{ab})=\infty.\)
\end{corollary}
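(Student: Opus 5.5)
The corollary has two parts, and I would prove them separately from results already established.

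\emph{The commutative bound.} Theorem~\ref{group:thm:commutative-collapse} shows that if \(T_L\) is commutative, then a compression map \(h\) is safe through arity \(1\) exactly when it is safe through every finite arity \(f\). Hence the feasible sets defining \(\cmp_1(L)\) and \(\cmp_f(L)\) coincide, and \(\cmp_f(L)=\cmp_1(L)\) for all \(f\ge1\). This holds uniformly for every regular \(L\) with \(T_L\in\mathbf{Com}\). So \(r=1\) belongs to the set in Definition~\ref{ah:def:height}. Since that set consists of positive integers, its minimum is \(1\), giving \(\operatorname{ch}(\mathbf{Com})=1\). I must make sure the quantification over arbitrary alphabets causes no trouble. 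It does not, because the collapse theorem is stated for any regular language with commutative syntactic monoid.

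\emph{The \(\mathbf V_{ab}\) bound.} I would argue by contradiction. Suppose \(\operatorname{ch}(\mathbf V_{ab})=r<\infty\). Then every regular \(L\) with \(T_L\in\mathbf V_{ab}\) satisfies \(\cmp_{r+1}(L)=\cmp_r(L)\). Apply Theorem~\ref{vab:thm:amplification} with \(d=r\). It gives languages \(L\) with \(T_L\in\mathbf V_{ab}\) and ratio \(\cmp_{r+1}(L)/\cmp_r(L)\) arbitrarily large, in particular strictly greater than \(1\). Concretely, take \(\widehat L_{r,R}\) with \(\lceil R^{1/(r+1)}\rceil> C_r\). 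Lemmas~\ref{vab:lem:low} and~\ref{vab:lem:high} then give
\[
\cmp_{r+1}(\widehat L_{r,R})>\cmp_r(\widehat L_{r,R}),
\]
and Lemma~\ref{vab:lem:letterdistinct} gives \(T_{\widehat L_{r,R}}\in\mathbf V_{ab}\). This contradicts stabilization at \(r\). Since \(r\) was arbitrary, the set in the definition is empty, and \(\operatorname{ch}(\mathbf V_{ab})=\infty\).

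Neither part hides a real obstacle; both are direct corollaries. The only point needing care is the second part. There I would check that a failure of stabilization at the single boundary \(r\to r+1\) already violates the condition ``\(\cmp_f=\cmp_r\) for all \(f\ge r\)''. It does, taking \(f=r+1\).
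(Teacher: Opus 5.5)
Your proposal is correct and matches the paper's proof: commutative collapse (Theorem~\ref{group:thm:commutative-collapse}) gives \(\operatorname{ch}(\mathbf{Com})=1\), and every-boundary amplification (Theorem~\ref{vab:thm:amplification}) rules out every finite stabilization arity for \(\mathbf V_{ab}\). Your version only spells out what the paper leaves implicit, namely the concrete amplifier \(\widehat L_{r,R}\) with \(\lceil R^{1/(r+1)}\rceil>C_r\) and the use of the boundary \(f=r+1\).
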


\begin{proof}
The first statement is Theorem~\ref{group:thm:commutative-collapse}.  The
second follows from Theorem~\ref{vab:thm:amplification}: every arity boundary
admits an unbounded gap inside \(\mathbf V_{ab}\), so no finite stabilization
arity works uniformly on that pseudovariety.
\end{proof}

\section{Completely Regular Monoids Saturate at Arity Two}
\label{sec:cr-saturation}

\paragraph{Proof strategy.}
By Lemma~\ref{alg:lem:relational-overlap}, the arity problem for relational
morphisms is the corresponding problem for compatible tolerances.  Although
such a tolerance need not be transitive, binary safety forces strong rigidity
inside every accepting completely simple component.  Related elements differ
there by a unique central translation; these local discrepancies assemble
into a homomorphism
\(z_\beta:\Gamma_\beta\to N_\beta\le Z(G_\beta)\).
A shared accepting context makes the product of the coordinate defects
trivial, and contextual factorization preserves that identity in every other
tuple context.  This upgrades binary safety to all finite arities.

Let $(S,P)$ be a finite syntactic pointed monoid.  We repeatedly use the
syntactic separation property
\[
 r\ne s
 \quad\Longrightarrow\quad
 \text{there exist }a,b\in S\text{ such that exactly one of }arb,asb\text{ lies in }P.
\tag{Syn}
\]
This is the defining separation of distinct syntactic classes transported
through the surjective syntactic morphism.

We use the typed safety dictionary of
Table~\ref{tab:typed-safety-dictionary}.  In particular,
Lemma~\ref{alg:lem:relational-overlap} turns any relational witness \(\rho\)
into its overlap tolerance \(\tau_\rho\), with \(f\)-separation of \(\rho\)
exactly equivalent to \(f\)-safety of \(\tau_\rho\).  Hence it is enough---and
slightly stronger---to prove that every \(2\)-safe compatible tolerance on a
completely regular syntactic pointed monoid is safe at all finite arities.
No transitivity is assumed or used.  In the classical terminology such a
relation is a tolerance
\cite{ZelinkaTolerance1975,Pondelicek1978,KumaresanTolerance1984}; in the
direct-square language it is a symmetric diagonal subsemigroup of
\(S\times S\)~\cite{BarberRuskucDirectSquare,BarberRuskucDiagonal2026}.

We work directly with the finite syntactic profiles of
Lemma~\ref{alg:lem:profile}: a ``tuple context'' may therefore have fixed
factors in $S$ rather than concrete words.  Surjectivity of the syntactic
morphism gives a concrete word representative for every such profile.

Throughout this section assume that $S$ is completely regular.  We use
Clifford's decomposition of a completely regular semigroup into a semilattice
of completely simple components~\cite[Theorem~4.1.3]{Howie1995}, the
Rees--Suschkewitsch representation of each component
\cite[Theorem~3.3.1]{Howie1995}, and normalized Rees coordinates
\cite[Theorem~3.4.2]{Howie1995}.  The proof below uses only the normalized
Rees multiplication inside each accepting component; no translational-hull
normal form is needed.

Thus
\(S=\bigsqcup_{\alpha\in Y}S_\alpha,\) where $Y=S/\mathcal D$ is a semilattice,
every $S_\alpha$ is completely simple, and
\[
S_\alpha S_\beta\subseteq S_{\alpha\beta}.
\tag{CR1}
\]
We use the natural order
\(\beta\le\alpha \Longleftrightarrow \alpha\beta=\beta.\)
For $x\in S_\alpha$ write $\operatorname{supp}(x)=\alpha$.  Since $S$ is a
monoid, $1_Y:=\operatorname{supp}(1_S)$ is the top element of $Y$.

\begin{lemma}[Lower-component action and sandwich ideals]
\label{cr:lem:sandwich-ideal}
Let \(c\in S_\alpha\) and let \(\beta\le\alpha\). Then
\(
cS_\beta\subseteq S_\beta,
\qquad
S_\beta c\subseteq S_\beta.
\)
Consequently \(S_\beta c S_\beta\) is a nonempty two-sided ideal of the
completely simple semigroup \(S_\beta\), and hence
\(S_\beta c S_\beta=S_\beta.\)
\end{lemma}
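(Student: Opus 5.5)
The plan is to first establish the two inclusions $cS_\beta\subseteq S_\beta$ and $S_\beta c\subseteq S_\beta$ using only (CR1) and the semilattice order. Then we deduce the ideal statement from the simplicity of $S_\beta$.

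For the inclusions, take $x\in S_\beta$. By (CR1), $cx\in S_\alpha S_\beta\subseteq S_{\alpha\beta}$. Since $\beta\le\alpha$ means $\alpha\beta=\beta$ in the semilattice $Y$, we get $cx\in S_\beta$. The same argument with $\beta\alpha=\alpha\beta=\beta$ (commutativity of $Y$) gives $xc\in S_\beta$.

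For the sandwich claim, set $I:=S_\beta cS_\beta$. It is nonempty because $S_\beta\ne\varnothing$; for instance $ece$ with $e$ any idempotent of $S_\beta$ lies in $I$. By the first part, $cS_\beta\subseteq S_\beta$, so $I=S_\beta(cS_\beta)\subseteq S_\beta S_\beta\subseteq S_\beta$, using (CR1) with $\beta\beta=\beta$. Thus $I$ is a subset of $S_\beta$. It is a two-sided ideal of $S_\beta$ because $S_\beta I=(S_\beta S_\beta)cS_\beta\subseteq S_\beta cS_\beta=I$, and symmetrically $IS_\beta\subseteq I$. Finally, a completely simple semigroup is in particular simple, meaning it has no proper two-sided ideals (Howie's definition). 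Hence the nonempty ideal $I$ equals $S_\beta$.

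Nothing here is a real obstacle. The only point needing care is to make the order convention explicit: $\beta\le\alpha$ iff $\alpha\beta=\beta$. Together with commutativity of $Y$, this handles both one-sided products. If one wants to avoid invoking simplicity abstractly, a backup is the Rees--Suschkewitsch coordinates. Every element of $S_\beta$ has the form $(i,g,\lambda)$, and the products $(i,1,\mu)\,c\,(j,1,\lambda)$, with $c$ acting as some element of $S_\beta$ after absorption, already reach all $(i,g,\lambda)$. This holds because the sandwich entries are group elements and $g$ ranges over the whole structure group.
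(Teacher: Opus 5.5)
Your main argument is correct and is essentially the paper's proof: \textup{(CR1)} with \(\alpha\beta=\beta\alpha=\beta\) gives the one-sided absorption. Then \(S_\beta cS_\beta\) is a nonempty subset of \(S_\beta\) closed under multiplication by \(S_\beta\) on either side (since \(r(xcy)=(rx)cy\) and \((xcy)r=xc(yr)\)), and simplicity of \(S_\beta\) forces equality.

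The Rees-coordinate backup is unnecessary and, as literally stated, false. If \(S_\beta\) is a nontrivial group with identity \(e\), the only product of the form \((i,1,\mu)\,c\,(j,1,\lambda)\) is the single element \(ece\). So either drop it, or let the left factor's group coordinate vary: in normalized coordinates with \(e=(1,1,1)\) and \(ece=(1,h,1)\), one has \((i,gh^{-1},1)\,c\,(1,1,\lambda)=(i,g,\lambda)\).
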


\begin{proof}
By \textup{(CR1)} and \(\beta\le\alpha\),
\(
S_\alpha S_\beta\subseteq S_{\alpha\beta}=S_\beta,
\qquad
S_\beta S_\alpha\subseteq S_{\beta\alpha}=S_\beta.
\)
Hence \(S_\beta cS_\beta\) is nonempty.  If $z=xcy$ belongs to it and
$r\in S_\beta$, then $rz=(rx)cy$ and $zr=xc(yr)$ also belong to it.  Thus it
is a two-sided ideal of $S_\beta$; simplicity of a completely simple
semigroup gives the claim.
\end{proof}

The proof has two interfaces.  The semantic interface consists of the next
two lifting principles; the structural interface is the central-defect
homomorphism proved below.

\begin{lemma}[Syntactic-value lifting from safe overlap]
\label{cr:lem:syntactic-lifting}
Let $1\le d\le f$ and let $\tau$ be an $f$-safe compatible tolerance.  Suppose
$s_i\tau t_i$ for $1\le i\le d$ and the two $d$-tuples share an accepting
tuple context.  Then for every $d$-ary tuple context $F$,
\(F[\mathbf s]=F[\mathbf t]\quad\text{in }S.\)
\end{lemma}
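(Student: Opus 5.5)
The plan is to combine the definition of \(f\)-safety with the syntactic separation property \textup{(Syn)}. The completely regular structure should not be needed for this lemma: it is a purely semantic statement about any compatible tolerance on a syntactic pointed monoid.

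\emph{Step 1: equal profiles.} Write \(\mathbf s=(s_1,\ldots,s_d)\) and \(\mathbf t=(t_1,\ldots,t_d)\). By hypothesis \(s_i\tau t_i\) for every coordinate \(i\), so the pair \((\mathbf s,\mathbf t)\) is separated in no coordinate. Since \(d\le f\) and \(\tau\) is \(f\)-safe (Definition~\ref{alg:def:safe-tolerance}), the pair cannot lie in \(\mathcal U_d(S,P)\). The shared accepting tuple context says exactly that \(\Phi_d(\mathbf s)\cap\Phi_d(\mathbf t)\ne\varnothing\). By the definition of \(\mathcal U_d(S,P)\), the only remaining possibility is \(\Phi_d(\mathbf s)=\Phi_d(\mathbf t)\). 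In words: every \(d\)-ary context with factors in \(S\) accepts \(\mathbf s\) if and only if it accepts \(\mathbf t\).

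\emph{Step 2: upgrading membership equivalence to equality in \(S\).} Fix a \(d\)-ary tuple context \(F=q_0\blank_1q_1\cdots\blank_dq_d\) with \(q_j\in S\). Suppose, for contradiction, that \(F[\mathbf s]\ne F[\mathbf t]\). Then \textup{(Syn)} supplies \(a,b\in S\) such that exactly one of \(aF[\mathbf s]b\) and \(aF[\mathbf t]b\) lies in \(P\).

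Now consider the context \(F'=(aq_0)\blank_1q_1\cdots\blank_d(q_db)\). It is again a \(d\)-ary tuple context with fixed factors in \(S\), and it satisfies \(F'[\mathbf x]=aF[\mathbf x]b\) for every \(\mathbf x\in S^d\). Hence \(F'\) lies in exactly one of \(\Phi_d(\mathbf s)\) and \(\Phi_d(\mathbf t)\), contradicting Step 1. Therefore \(F[\mathbf s]=F[\mathbf t]\) in \(S\).

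\emph{Potential obstacle.} The only delicate point is bookkeeping rather than mathematics. Contexts here are finite syntactic profiles with fixed factors in \(S\), so absorbing the separating multipliers \(a,b\) into the outer context factors must again yield a legitimate \(d\)-ary profile. This holds because \(\Phi_d\) ranges over all of \(S^{d+1}\). Lemma~\ref{alg:lem:profile} guarantees concrete word representatives whenever a word-level reading is preferred.
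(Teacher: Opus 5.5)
Your proposal is correct and follows the paper's own proof essentially verbatim. In both, $f$-safety together with the shared accepting context forces $\Phi_d(\mathbf s)=\Phi_d(\mathbf t)$, and then \textup{(Syn)} applied to $F[\mathbf s]\ne F[\mathbf t]$ produces the wrapped context $aFb$ that separates the two profiles, a contradiction. Your explicit absorption of $a$ and $b$ into the outer factors $q_0,q_d$ just spells out bookkeeping that the paper leaves implicit.
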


\begin{proof}
Because the two tuples are coordinatewise $\tau$-related and share an
accepting profile, $f$-safety rules out their forming an element of
$\mathcal U_d(S,P)$.  Hence their finite profiles are equal, and therefore
$F[\mathbf s]\in P$ iff $F[\mathbf t]\in P$ for every $F$.  If for some $F$
the two syntactic values were distinct, \textup{(Syn)} would give $a,b\in S$
such that exactly one of $aF[\mathbf s]b$ and $aF[\mathbf t]b$ lies in $P$.
The wrapped context $aFb$ would then distinguish their finite profiles, a
contradiction.
\end{proof}

\begin{lemma}[Equalizing-context principle]
\label{cr:lem:equalizing-context}
Let $1\le d\le f$ and let $\tau$ be an $f$-safe compatible tolerance.
Suppose $s_i\tau t_i$ for $1\le i\le d$.  If some $d$-ary tuple context
$E_0$ satisfies
\[
E_0[\mathbf s]=E_0[\mathbf t]=c
\]
and there is an accepting component $S_\beta$ with
$\beta\le\operatorname{supp}(c)$, then
\[
F[\mathbf s]=F[\mathbf t]
\]
for every $d$-ary tuple context $F$.
\end{lemma}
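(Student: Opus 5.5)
The plan is to reduce the claim to Lemma~\ref{cr:lem:syntactic-lifting}. That lemma already gives \(F[\mathbf s]=F[\mathbf t]\) for every \(d\)-ary tuple context \(F\), provided the coordinatewise \(\tau\)-related tuples \(\mathbf s,\mathbf t\) share an \emph{accepting} tuple context. So the only task is to manufacture one accepting context common to both tuples out of the equalizing context \(E_0\) and the accepting component \(S_\beta\).

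First, I read ``accepting component'' as meaning that \(S_\beta\cap P\neq\varnothing\), and fix some \(p\in S_\beta\cap P\). Put \(\alpha=\operatorname{supp}(c)\), so \(c\in S_\alpha\) and \(\beta\le\alpha\). Lemma~\ref{cr:lem:sandwich-ideal} then gives \(S_\beta cS_\beta=S_\beta\). Hence there are \(x,y\in S_\beta\) with \(p=xcy\).

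Next, I form the wrapped context \(E:=xE_0y\). Concretely, if \(E_0=q_0\blank_1q_1\cdots\blank_dq_d\), then \(E=(xq_0)\blank_1q_1\cdots\blank_d(q_dy)\). This is a legitimate finite-profile tuple context, since fixed factors in \(S\) are allowed and surjectivity of the syntactic morphism supplies concrete word representatives. Then
\[
E[\mathbf s]=x\,E_0[\mathbf s]\,y=xcy=p\in P,
\]
and likewise \(E[\mathbf t]=xcy=p\in P\). Thus \(E\) is a shared accepting tuple context for \(\mathbf s\) and \(\mathbf t\).

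Since \(s_i\tau t_i\) for every \(i\), \(d\le f\), and \(\tau\) is \(f\)-safe, Lemma~\ref{cr:lem:syntactic-lifting} applies directly and yields \(F[\mathbf s]=F[\mathbf t]\) in \(S\) for every \(d\)-ary tuple context \(F\). This is the statement.

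I expect no serious obstacle here. The only delicate point is the first step, obtaining \(p\) as a two-sided multiple of \(c\) within \(S_\beta\). This is exactly where the hypothesis \(\beta\le\operatorname{supp}(c)\) is needed: it lets \(c\) act on \(S_\beta\), and then simplicity of the completely simple component \(S_\beta\), via Lemma~\ref{cr:lem:sandwich-ideal}, yields the factorization \(p=xcy\).
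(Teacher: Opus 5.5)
Your proposal is correct and follows essentially the same route as the paper. You pick $p\in P\cap S_\beta$, use Lemma~\ref{cr:lem:sandwich-ideal} to write $p=xcy$ with $x,y\in S_\beta$, observe that the wrapped context $xE_0y$ is a shared accepting context for $\mathbf s$ and $\mathbf t$, and conclude by Lemma~\ref{cr:lem:syntactic-lifting}.
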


\begin{proof}
Choose $p\in P\cap S_\beta$.  By
Lemma~\ref{cr:lem:sandwich-ideal}, $S_\beta cS_\beta=S_\beta$, so there are
$a,b\in S_\beta$ with $acb=p$.  Hence $aE_0b$ is a shared accepting context
for $\mathbf s$ and $\mathbf t$.  Lemma~\ref{cr:lem:syntactic-lifting}
therefore gives the asserted equality for every $F$.
\end{proof}

\begin{remark}[Proof dependencies for the arity-two theorem]
\label{cr:rem:dependencies}
The proof has a short structural spine:
\[
\begin{gathered}
\text{syntactic lifting + local rigidity}
\Longrightarrow
\text{free central translations}
\Longrightarrow
z_\beta:\Gamma_\beta\to N_\beta
\\[-1pt]
\Longrightarrow
\text{product compression}
\Longrightarrow
\text{all-arity safety}.
\end{gathered}
\]
More precisely, Lemmas~\ref{cr:lem:syntactic-lifting} and
\ref{cr:lem:equalizing-context} form the semantic interface.  Together with
Lemmas~\ref{cr:lem:local-kernel}, \ref{cr:lem:local-central}, and
\ref{cr:lem:index-rigidity}, they give the local stage.  Its uniform form is
the central-translation action of Lemma~\ref{cr:lem:central-translations}
and the homomorphism of Proposition~\ref{cr:lem:uniform-defect}.  Contextual
factorization and Lemma~\ref{cr:lem:product-compression} then yield
Theorem~\ref{cr:thm:tolerance-saturation}.  Support bookkeeping enters only
at the product-compression stage.  No classification of tolerances or
diagonal subsemigroups, no translational-hull normal form, and no safety
assumption above arity two is used.
\end{remark}

\subsection{Local rigidity inside an accepting completely simple component}

For each \(\beta\in Y\) with \(P\cap S_\beta\ne\varnothing\), fix once and
for all a normalized Rees representation
\(S_\beta=\mathcal M[G_\beta;I_\beta,\Lambda_\beta;Q^\beta].\)
In the local arguments below, \(\beta\) denotes an arbitrary accepting
component, and all Rees-coordinate calculations are carried out in its fixed
normalized representation.
Throughout the section we use the multiplication convention
\[
(i,g,\lambda)(j,h,\mu)
=
(i,\,gq^\beta_{\lambda j}h,\,\mu).
\tag{CR2}
\]
The distinguished row and column are indexed by $1$, and normalization means
\(
q^\beta_{1i}=q^\beta_{\lambda 1}=1_{G_\beta}
\qquad(i\in I_\beta,\ \lambda\in\Lambda_\beta).
\)
All later Rees-coordinate calculations use \textup{(CR2)}.
The restriction
\(\tau_\beta:=\tau\cap(S_\beta\times S_\beta)\)
is a compatible tolerance on $S_\beta$, equivalently a symmetric diagonal
subsemigroup of $S_\beta\times S_\beta$: it is reflexive and symmetric by
construction, and compatibility was checked above.  As for $\tau$ itself,
$\tau_\beta$ need not be transitive.

Write
\(H_{11}=\{(1,g,1):g\in G_\beta\}, \qquad e=(1,1,1), \qquad \bar g=(1,g,1).\)
For \(x=(i,g,\lambda)\in S_\beta\), normalization in \textup{(CR2)} gives
\(exe=\bar g\), \(xe=(i,g,1)\), and \(ex=(1,g,\lambda)\), the identities
used below.
Because the Rees representation is normalized, $H_{11}$ is a subgroup
isomorphic to $G_\beta$ with identity $e$.

\begin{lemma}[Local kernel]
\label{cr:lem:local-kernel}
Define
\(
N_\beta:=\{n\in G_\beta:e\tau\bar n\}.
\)
Then $N_\beta\trianglelefteq G_\beta$.
\end{lemma}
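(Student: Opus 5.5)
The plan is to show that \(N_\beta\) is a subgroup of \(G_\beta\) and that it is closed under conjugation. Only three ingredients are needed: reflexivity and compatibility of \(\tau\) (or of \(\tau_\beta\)), the normalized multiplication \textup{(CR2)} restricted to \(H_{11}\), and finiteness of \(G_\beta\). I do not expect to use any safety hypothesis here. The group-valued relation stays inside the single \(\mathcal H\)-class \(H_{11}\), so everything reduces to group calculations via the isomorphism \(g\mapsto\bar g\).

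First I record that \(g\mapsto\bar g\) is an isomorphism \(G_\beta\to H_{11}\). By \textup{(CR2)} and normalization \(q^\beta_{11}=1_{G_\beta}\), we get \(\bar g\,\bar h=(1,g q^\beta_{11}h,1)=\overline{gh}\) and \(e=\bar 1\). This is the fact already stated before the lemma.

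Next, \(N_\beta\) is a subgroup. Reflexivity gives \(e\tau e\), so \(1\in N_\beta\). If \(n,m\in N_\beta\), compatibility applied to \(e\tau\bar n\) and \(e\tau\bar m\) gives \(ee\,\tau\,\bar n\bar m\), that is \(e\tau\overline{nm}\). Thus \(N_\beta\) is a nonempty multiplicatively closed subset of the finite group \(G_\beta\), since \(S\) is finite. Such a subset is a subgroup, because inverses are positive powers.

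Finally, normality. Fix \(g\in G_\beta\) and \(n\in N_\beta\). Reflexivity gives \(\overline{g^{-1}}\,\tau\,\overline{g^{-1}}\) and \(\bar g\,\tau\,\bar g\). Applying compatibility twice to these together with \(e\tau\bar n\) yields
\[
\overline{g^{-1}}\,e\,\bar g\ \tau\ \overline{g^{-1}}\,\bar n\,\bar g.
\]
By the isomorphism above, the left side is \(e\) and the right side is \(\overline{g^{-1}ng}\). Hence \(g^{-1}ng\in N_\beta\), so \(N_\beta\trianglelefteq G_\beta\).

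The only point needing care is that all products stay in \(H_{11}\) and that normalization makes the sandwich entry \(q^\beta_{11}\) trivial. This holds because \(1\) is the distinguished row and column index. Otherwise the argument is routine and has no real obstacle.
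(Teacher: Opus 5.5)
Your proof is correct and follows essentially the paper's route: reflexivity plus compatibility make $N_\beta$ a multiplicatively closed subset containing the identity of the finite group $G_\beta$, hence a subgroup, and normality comes from sandwiching $e\tau\bar n$ between the reflexive pairs $(\overline{g^{\pm1}},\overline{g^{\pm1}})$. The only difference is cosmetic: the paper first shows that $R=\tau\cap(H_{11}\times H_{11})$ is a subgroup of $H_{11}\times H_{11}$ and reads off $N_\beta$ as the fiber over $e$, whereas you obtain closure of $N_\beta$ directly from $e^2=e$.
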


\begin{proof}
Let
\(R:=\{(x,y)\in H_{11}\times H_{11}:x\tau y\}.\)
Compatibility makes $R$ a subsemigroup of the finite group
$H_{11}\times H_{11}$, and reflexivity puts the diagonal (hence $(e,e)$) in
$R$. Every subsemigroup of a finite group that contains the identity is a
subgroup: for $r\in R$, finiteness gives $r^m=(e,e)$ for some $m>0$, so
$r^{-1}=r^{m-1}\in R$. Thus $R$ is a subgroup. Its fiber over $e$ in the
first coordinate is exactly $\{(e,\bar n):n\in N_\beta\}$, whence
$N_\beta$ contains the identity and is closed under multiplication and
inverses. Therefore $N_\beta\le G_\beta$.

For normality, let $n\in N_\beta$ and $g\in G_\beta$. The three related pairs
\((\bar g,\bar g),\qquad (e,\bar n),\qquad (\overline{g^{-1}},\overline{g^{-1}})\)
may be multiplied coordinatewise by compatibility. Their product is
\((e,\overline{gng^{-1}}),\)
so $gng^{-1}\in N_\beta$.
\end{proof}

\begin{lemma}[Local centrality]
\label{cr:lem:local-central}
If $\tau$ is $2$-safe, then
\(N_\beta\le Z(G_\beta).\)
\end{lemma}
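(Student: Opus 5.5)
The plan is to prove centrality by a single binary-arity argument. The binary tuple will be built so that it is forced to be safe via the Equalizing-context principle, Lemma~\ref{cr:lem:equalizing-context}. A conjugating context then converts that safety into a commutation identity.

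First I collect the related pairs. Fix \(n\in N_\beta\) and \(g\in G_\beta\). By the proof of Lemma~\ref{cr:lem:local-kernel}, the set \(R=\{(x,y)\in H_{11}\times H_{11}:x\tau y\}\) is a subgroup containing \((e,\bar n)\), so it also contains \((e,\overline{n^{-1}})\). By symmetry of \(\tau\),
\[
 \bar n\ \tau\ e,\qquad \overline{n^{-1}}\ \tau\ e .
\]
Put \(\mathbf s=(\bar n,\overline{n^{-1}})\) and \(\mathbf t=(e,e)\). These binary tuples are coordinatewise \(\tau\)-related.

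Next I exhibit an equalizing context. Take \(E_0=\blank_1\blank_2\), the adjacent-hole binary context with all fixed factors equal to \(1_S\). Inside \(H_{11}\cong G_\beta\) we have \(\bar n\,\overline{n^{-1}}=e=ee\). This holds because normalization gives \(q^\beta_{11}=1\), so by \textup{(CR2)} \((1,g,1)(1,h,1)=(1,gh,1)\). Thus \(E_0[\mathbf s]=E_0[\mathbf t]=e=:c\). Since \(c\in S_\beta\) and \(S_\beta\) is accepting, we may take \(\beta\le\operatorname{supp}(c)=\beta\). Now \(\tau\) is \(2\)-safe and the arity is \(d=2\), so Lemma~\ref{cr:lem:equalizing-context} yields \(F[\mathbf s]=F[\mathbf t]\) for every binary tuple context \(F\).

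Finally I conjugate. Apply this with \(F=\blank_1\,\bar g\,\blank_2\). This gives
\[
 \bar n\,\bar g\,\overline{n^{-1}}=e\,\bar g\,e=\bar g,
\]
that is, \(\overline{ngn^{-1}}=\bar g\) in \(H_{11}\). Since \(g\mapsto\bar g\) is an isomorphism \(G_\beta\to H_{11}\), we get \(ngn^{-1}=g\). As \(g\) was arbitrary, \(n\in Z(G_\beta)\).

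The only delicate point is ensuring the equalizing lemma applies, namely that a shared accepting context exists. That is exactly what the support condition \(\beta\le\operatorname{supp}(c)\), together with Lemma~\ref{cr:lem:sandwich-ideal}, supplies. The Rees-coordinate computations reduce to group multiplication in \(H_{11}\) by normalization.
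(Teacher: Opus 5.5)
Your proposal is correct and follows the paper's proof: you use the same coordinatewise $\tau$-related tuples $(\bar n,\overline{n^{-1}})$ and $(e,e)$, equalize them at $e$ with the empty-fixed context $\blank_1\blank_2$, and apply the equalizing-context principle at arity two. The only cosmetic difference is the final context: you evaluate $\blank_1\bar g\blank_2$ to get $ngn^{-1}=g$, whereas the paper uses $\blank_1\bar g\blank_2\overline{g^{-1}}$ and reads off a trivial commutator.
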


\begin{proof}
Let $n\in N_\beta$.  By Lemma~\ref{cr:lem:local-kernel}, also
$n^{-1}\in N_\beta$, so the tuples
\((e,e)\) and \((\bar n,\overline{n^{-1}})\) are coordinatewise
$\tau$-related.  The empty-fixed binary context
$E_0=\blank_1\blank_2$ gives the same value $e$ on both tuples.  Since
$S_\beta$ is accepting, Lemma~\ref{cr:lem:equalizing-context} applies.
For arbitrary $g\in G_\beta$, evaluate it at
\(F=\blank_1\bar g\blank_2\overline{g^{-1}}.\)
Then
\[
e
=
\bar n\bar g\overline{n^{-1}}\overline{g^{-1}}
=
\overline{ngn^{-1}g^{-1}},
\]
so $ngn^{-1}g^{-1}=1$.  Thus $n\in Z(G_\beta)$.
\end{proof}

\begin{lemma}[Local row and column rigidity]
\label{cr:lem:index-rigidity}
Suppose $\tau$ is $2$-safe and
\((i,g,\lambda)\tau_\beta(j,h,\mu)\).  Then $i=j$ and $\lambda=\mu$, with
\(gh^{-1}\in N_\beta\).
\end{lemma}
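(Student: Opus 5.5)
The plan is to first project the relation onto the distinguished group \(H_{11}\), which gives the group-coordinate claim. Then I would use the equalizing-context principle (Lemma~\ref{cr:lem:equalizing-context}) at arity two, with a context that cancels the group parts, to force the row and column indices to agree.

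Write \(x=(i,g,\lambda)\) and \(y=(j,h,\mu)\) with \(x\tau_\beta y\).

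\emph{Step 1 (group coordinate).} By reflexivity \(e\tau e\). Compatibility then gives \(exe\,\tau\,eye\), that is, \(\bar g\tau\bar h\) by the normalized identities recorded before Lemma~\ref{cr:lem:local-kernel]}.

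In the proof of Lemma~\ref{cr:lem:local-kernel}, the set \(R=\tau\cap(H_{11}\times H_{11})\) was shown to be a subgroup of \(H_{11}\times H_{11}\). Hence \((\bar g,\bar h)\in R\) gives \((\overline{g^{-1}},\overline{h^{-1}})\in R\), so \(\overline{g^{-1}}\,\tau\,\overline{h^{-1}}\). Multiplying \((\bar g,\bar h)\) on the right by the diagonal pair \((\overline{h^{-1}},\overline{h^{-1}})\) gives \(\overline{gh^{-1}}\,\tau\, e\). By symmetry \(e\,\tau\,\overline{gh^{-1}}\), so \(gh^{-1}\in N_\beta\).

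\emph{Step 2 (rows).} Again by compatibility with \((e,e)\), we have \(xe\,\tau\,ye\), that is, \((i,g,1)\,\tau\,(j,h,1)\). Consider the two binary tuples
\[
\mathbf s=\bigl((i,g,1),\overline{g^{-1}}\bigr),\qquad
\mathbf t=\bigl((j,h,1),\overline{h^{-1}}\bigr),
\]
which are coordinatewise \(\tau\)-related by Step 1. Take the context \(E_0=e\,\blank_1\blank_2\). Using \textup{(CR2)} and normalization \(q^\beta_{1i}=q^\beta_{11}=1\), we get \(E_0[\mathbf s]=(1,gg^{-1},1)=e\), and likewise \(E_0[\mathbf t]=e\). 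Since \(e\in S_\beta\), \(S_\beta\) is accepting, and \(\beta\le\beta\), Lemma~\ref{cr:lem:equalizing-context} (with \(d=2\le f=2\)) yields \(F[\mathbf s]=F[\mathbf t]\) for every binary context \(F\). Taking \(F=\blank_1\blank_2\) gives \((i,1,1)=(j,1,1)\), hence \(i=j\).

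\emph{Step 3 (columns).} This is symmetric. We have \(ex\,\tau\,ey\), i.e.\ \((1,g,\lambda)\,\tau\,(1,h,\mu)\). Use the tuples \(\bigl(\overline{g^{-1}},(1,g,\lambda)\bigr)\) and \(\bigl(\overline{h^{-1}},(1,h,\mu)\bigr)\) with \(E_0=\blank_1\blank_2\,e\). Both fillings evaluate to \(e\). The context \(\blank_1\blank_2\) then gives \((1,1,\lambda)=(1,1,\mu)\), hence \(\lambda=\mu\).

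The only delicate point is Step 2: we need a genuinely binary tuple whose two coordinates are each \(\tau\)-related and which collapses both sides to the same element of the accepting component. This is exactly why inverse-relatedness from Step 1 is needed first, and why the hypothesis is \(2\)-safety rather than \(1\)-safety. The remaining verifications are routine Rees-coordinate computations using normalization.
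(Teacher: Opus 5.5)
Your proof is correct and follows the paper's route. You compress by $e$ to get $\bar g\,\tau\,\bar h$ (hence $gh^{-1}\in N_\beta$), apply the equalizing-context principle at arity two to a binary tuple collapsed by a context with fixed factor $e$, and read off the indices from the empty-fixed context $\blank_1\blank_2$. The only difference is cosmetic: you pair $xe,ye$ with $\overline{g^{-1}},\overline{h^{-1}}$ so that both fillings equal $e$, whereas the paper pairs $x,y$ with $e,\overline{h^{-1}g}$ so that both fillings equal $\bar g$.
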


\begin{proof}
Put $x=(i,g,\lambda)$ and $y=(j,h,\mu)$.  Compatibility with the
distinguished idempotent $e$ gives
\begin{equation}
\bar g=exe\;\tau\;eye=\bar h.
\label{eq:cr-compressed-related}
\end{equation}
By symmetry and compatibility,
\(e\tau\overline{h^{-1}g}\); put $d_R:=h^{-1}g\in N_\beta$.
Then the tuples $(x,e)$ and $(y,\overline{d_R})$ are coordinatewise
$\tau$-related, while the context
\(E_R=e\blank_1\blank_2\) gives the common value $\bar g$.
Lemma~\ref{cr:lem:equalizing-context}, followed by the empty-fixed context
$\blank_1\blank_2$, yields
\[
xe=y\overline{d_R}.
\]
In normalized Rees coordinates this is
$(i,g,1)=(j,g,1)$, hence $i=j$.

Dually, compatibility in \eqref{eq:cr-compressed-related} gives
$e\tau\overline{d_L}$ for $d_L:=gh^{-1}\in N_\beta$.
The tuples $(e,x)$ and $(\overline{d_L},y)$ are coordinatewise
$\tau$-related, and
\(E_L=\blank_1\blank_2e\) gives the common value $\bar g$.
Applying Lemma~\ref{cr:lem:equalizing-context} and then
$\blank_1\blank_2$ gives
\[
ex=\overline{d_L}y.
\]
Thus $(1,g,\lambda)=(1,g,\mu)$, so $\lambda=\mu$; the same calculation has
already shown $gh^{-1}=d_L\in N_\beta$.
\end{proof}

Thus every accepting completely simple component has the strong local form
\[
x\tau_\beta y
\quad\Longrightarrow\quad
x,y\text{ lie in the same }\mathcal H\text{-class and differ by }
N_\beta\le Z(G_\beta).
\tag{LR}
\]

\subsection{Uniform central defects}

Fix the accepting component $S_\beta$ from the preceding subsection. Put
\(\Omega_\beta:=\{a\in S:\beta\le\operatorname{supp}(a)\}.\)
Since support multiplication is the semilattice product, \(\Omega_\beta\) is a
submonoid of \(S\).  By Lemma~\ref{cr:lem:sandwich-ideal}, every
\(a\in\Omega_\beta\) preserves \(S_\beta\) under multiplication from either
side.

\begin{lemma}[Central translations]
\label{cr:lem:central-translations}
For \(n\in N_\beta\), define
\[
\nu_\beta(n)(i,g,\lambda):=(i,ng,\lambda)
\qquad((i,g,\lambda)\in S_\beta).
\]
Then \(\nu_\beta(mn)=\nu_\beta(m)\circ\nu_\beta(n)\) and
\(\nu_\beta(1)=\operatorname{id}_{S_\beta}\).  Moreover, for
\(x,y\in S_\beta\),
\[
\nu_\beta(n)(x)y=x\nu_\beta(n)(y)=\nu_\beta(n)(xy),
\tag{CT}
\]
the action is free, and for every \(a\in\Omega_\beta\),
\[
a\nu_\beta(n)(x)=\nu_\beta(n)(ax),
\qquad
\nu_\beta(n)(x)a=\nu_\beta(n)(xa).
\tag{CE}
\]
In particular, \(\nu_\beta(m)(x)=\nu_\beta(n)(x)\) for some \(x\in S_\beta\)
implies \(m=n\).
\end{lemma}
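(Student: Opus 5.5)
This is a direct verification in the normalized Rees coordinates \textup{(CR2)} of \(S_\beta\). The plan has four steps: the action laws, the identity \textup{(CT)}, freeness, and \textup{(CE)}. Only \textup{(CE)} requires more than one line.

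The action laws are immediate: \(\nu_\beta(m)(\nu_\beta(n)(i,g,\lambda))=(i,mng,\lambda)=\nu_\beta(mn)(i,g,\lambda)\), and \(\nu_\beta(1)\) fixes every triple.

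For \textup{(CT)}, write \(x=(i,g,\lambda)\) and \(y=(j,h,\mu)\). Then
\[
\nu_\beta(n)(x)\,y=(i,\,ng\,q^\beta_{\lambda j}h,\,\mu),
\qquad
x\,\nu_\beta(n)(y)=(i,\,g\,q^\beta_{\lambda j}nh,\,\mu),
\qquad
\nu_\beta(n)(xy)=(i,\,ng\,q^\beta_{\lambda j}h,\,\mu).
\]
These agree because \(n\in N_\beta\le Z(G_\beta)\) by Lemma~\ref{cr:lem:local-central}, which lets \(n\) commute past \(g\,q^\beta_{\lambda j}\).

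Freeness is read off the middle coordinate. If \((i,mg,\lambda)=(i,ng,\lambda)\), then \(mg=ng\) in the group \(G_\beta\), so \(m=n\). The same computation gives the final sentence of the lemma.

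For \textup{(CE)}, the point is that elements of \(\Omega_\beta\) act on \(S_\beta\) without being Rees triples themselves, so we factor them through \(S_\beta\).

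1. Take \(a\in\Omega_\beta\) and \(x\in S_\beta\). By Lemma~\ref{cr:lem:sandwich-ideal}, \(ax\in S_\beta\).
2. Since \(S_\beta\) is completely simple, \(x\) has a left identity in its \(\mathcal R\)-class: there is an idempotent \(f\in S_\beta\) with \(fx=x\). Concretely, for \(x=(i,g,\lambda)\) take \(f=(i,(q^\beta_{1i})^{-1},1)\); normalization gives \(q^\beta_{1i}=1\), so \(f=(i,1,1)\).
3. Then \(af\in S_\beta\), again by Lemma~\ref{cr:lem:sandwich-ideal}.
4. Using \textup{(CT)} twice with the elements \(af,x\in S_\beta\),
\[
a\,\nu_\beta(n)(x)=a\,f\,\nu_\beta(n)(x)=(af)\,\nu_\beta(n)(x)=\nu_\beta(n)\bigl((af)x\bigr)=\nu_\beta(n)(ax).
\]
The first equality holds because \(\nu_\beta(n)\) preserves the row index \(i\): indeed \(f\,\nu_\beta(n)(x)=(i,1,1)(i,ng,\lambda)=(i,ng,\lambda)\) by normalization.

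The right-hand identity is dual. Use the right identity \(f'=(1,1,\lambda)\) of \(x\), and the fact that \(f'a\in S_\beta\).

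The only real obstacle is this absorption trick. We must check that \(f\) is a left identity both for \(x\) and for its translate \(\nu_\beta(n)(x)\), so that \(a\) can be moved into \(S_\beta\) before applying \textup{(CT)}. Normalization together with row and column invariance of \(\nu_\beta(n)\) gives this directly.
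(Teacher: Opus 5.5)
Your proof is correct and follows the paper's argument. The action laws, \textup{(CT)} and freeness are the same Rees-coordinate computations using $N_\beta\le Z(G_\beta)$. For \textup{(CE)} the paper likewise absorbs $a$ into $S_\beta$ by writing $x=x_1x_2$ with $x_1,x_2\in S_\beta$ (possible since $S_\beta=S_\beta^2$) and applying \textup{(CT)} to $ax_1$; your choice $x_1=f=(i,1,1)$, $x_2=x$ is simply an explicit instance of that factorization.
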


\begin{proof}
The composition law is immediate from the definition.  Write
\(x=(i,g,\lambda)\) and \(y=(j,h,\mu)\).  Since
\(N_\beta\le Z(G_\beta)\) by Lemma~\ref{cr:lem:local-central},
\textup{(CR2)} gives
\[
\nu_\beta(n)(x)y
=(i,ngq^\beta_{\lambda j}h,\mu)
=x\nu_\beta(n)(y)
=\nu_\beta(n)(xy),
\]
which proves \textup{(CT)}.  If \(\nu_\beta(n)(i,g,\lambda)=(i,g,\lambda)\),
then \(ng=g\), hence \(n=1\); freeness and the final uniqueness statement
follow.

For \textup{(CE)}, write \(x=x_1x_2\) with \(x_1,x_2\in S_\beta\), possible
because the completely simple semigroup \(S_\beta\) satisfies
\(S_\beta=S_\beta^2\).  Since \(ax_1\in S_\beta\), \textup{(CT)} yields
\[
a\nu_\beta(n)(x)
=a\bigl(x_1\nu_\beta(n)(x_2)\bigr)
=(ax_1)\nu_\beta(n)(x_2)
=\nu_\beta(n)(ax).
\]
The right-hand identity is dual.
\end{proof}

\begin{corollary}[Local central-translation form]
\label{cr:cor:local-translation}
Assume \(\tau\) is \(2\)-safe.  If \(x\tau_\beta y\), then there is a unique
\(n\in N_\beta\) such that \(y=\nu_\beta(n)(x)\).
\end{corollary}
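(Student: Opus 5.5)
This corollary should follow directly from Lemma~\ref{cr:lem:index-rigidity} together with the freeness clause of Lemma~\ref{cr:lem:central-translations}. Write \(x=(i,g,\lambda)\) and \(y=(j,h,\mu)\) in the fixed normalized Rees representation of the accepting component \(S_\beta\).

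\emph{Existence.} Since \(\tau\) is \(2\)-safe and \(x\tau_\beta y\), Lemma~\ref{cr:lem:index-rigidity} gives \(i=j\), \(\lambda=\mu\), and \(gh^{-1}\in N_\beta\). The relation is symmetric, so \(y\tau_\beta x\) also holds. Applying the same lemma to the pair \((y,x)\) gives \(hg^{-1}\in N_\beta\). Alternatively, one can use that \(N_\beta\) is a group by Lemma~\ref{cr:lem:local-kernel} and take \(n:=(gh^{-1})^{-1}\). In either case, put \(n:=hg^{-1}\in N_\beta\). Then
\(\nu_\beta(n)(x)=(i,hg^{-1}g,\lambda)=(i,h,\lambda)=y\).
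This uses the left-multiplication convention in the definition of \(\nu_\beta\). Centrality from Lemma~\ref{cr:lem:local-central} means the side of multiplication is immaterial anyway.

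\emph{Uniqueness.} Suppose \(y=\nu_\beta(m)(x)=\nu_\beta(n)(x)\) for some \(m,n\in N_\beta\). The final statement of Lemma~\ref{cr:lem:central-translations} then gives \(m=n\). More concretely, \(mg=ng\) in the group \(G_\beta\) forces \(m=n\).

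I expect no real obstacle here. The only point needing care is the orientation of the defect, \(hg^{-1}\) versus \(gh^{-1}\), relative to the left action \(n\cdot g\). This is resolved either by the symmetry of \(\tau\) or by the subgroup property of \(N_\beta\), and since \(N_\beta\le Z(G_\beta)\) the two orientations differ only by inversion inside \(N_\beta\).
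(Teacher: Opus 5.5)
Your proposal is correct and takes essentially the same approach as the paper. Both proofs read off \(i=j\), \(\lambda=\mu\) and \(gh^{-1}\in N_\beta\) from Lemma~\ref{cr:lem:index-rigidity}, set \(n:=hg^{-1}\in N_\beta\) (justified by the subgroup property of \(N_\beta\), or by symmetry as you note) so that \(y=\nu_\beta(n)(x)\), and obtain uniqueness from the freeness clause of Lemma~\ref{cr:lem:central-translations}.
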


\begin{proof}
Write \(x=(i,g,\lambda)\) and \(y=(j,h,\mu)\).  By
Lemma~\ref{cr:lem:index-rigidity}, \(i=j\), \(\lambda=\mu\), and
\(gh^{-1}\in N_\beta\).  Hence \(n:=hg^{-1}\in N_\beta\) and
\(y=\nu_\beta(n)(x)\).  Uniqueness follows from
Lemma~\ref{cr:lem:central-translations}.
\end{proof}

\begin{proposition}[Central-defect homomorphism]
\label{cr:lem:uniform-defect}
Assume \(\tau\) is \(2\)-safe and put
\(\Gamma_\beta:=\{(a,b)\in\Omega_\beta\times\Omega_\beta:a\tau b\}.\)
Then \(\Gamma_\beta\) is a submonoid of \(S\times S\), and there is a unique
map \(z_\beta:\Gamma_\beta\to N_\beta\) such that, for every
\((a,b)\in\Gamma_\beta\) and \(x\in S_\beta\),
\[
bx=\nu_\beta\bigl(z_\beta(a,b)\bigr)(ax),
\qquad
xb=\nu_\beta\bigl(z_\beta(a,b)\bigr)(xa).
\tag{UD}
\]
Moreover \(z_\beta\) is a monoid homomorphism; thus, for all
$\,(a,b),(a',b')\in\Gamma_\beta$,
\begin{equation}
z_\beta(aa',bb')=z_\beta(a,b)z_\beta(a',b').
\label{eq:cr-defect-homomorphism}
\end{equation}
\end{proposition}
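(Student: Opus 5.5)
The plan is to build \(z_\beta\) pointwise at a single test element and then propagate the identity to all of \(S_\beta\) using the equalizing-context principle and the central-translation calculus of Lemma~\ref{cr:lem:central-translations}.

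\emph{Submonoid.} That \(\Gamma_\beta\) is a submonoid is immediate. \(\Omega_\beta\) is a submonoid of \(S\). Reflexivity of \(\tau\) gives \((1,1)\in\Gamma_\beta\), and compatibility of \(\tau\) gives closure under coordinatewise products.

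\emph{Local definition at \(e\).} Fix \((a,b)\in\Gamma_\beta\). By Lemma~\ref{cr:lem:sandwich-ideal}, \(ae\) and \(be\) lie in \(S_\beta\). Compatibility with \(e\tau e\) gives \(ae\,\tau_\beta\, be\). Corollary~\ref{cr:cor:local-translation} then yields a unique \(n\in N_\beta\) with \(be=\nu_\beta(n)(ae)\). Define \(z_\beta(a,b):=n\).

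\emph{Left half of \textup{(UD)}.} For arbitrary \(x\in S_\beta\), write \(x=ex'\) with \(x'\in S_\beta\); for instance take \(x'=x\) when the row index of \(x\) is \(1\). Here I expect the genuine obstacle: the row index of \(x\) need not be \(1\), so this factorization is not generally available. Instead I would compare \(ax\) and \(bx\) directly via the equalizing-context principle.

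The pairs \((a,\overline{n^{-1}})\tau\)-relate to \((b,e)\)? Not directly. The cleaner route is to apply Corollary~\ref{cr:cor:local-translation} to \(ax\,\tau_\beta\, bx\), obtaining some \(n_x\in N_\beta\) with \(bx=\nu_\beta(n_x)(ax)\), and then show that \(n_x\) does not depend on \(x\). For \(x,y\in S_\beta\), multiply on the right: by \textup{(CE)}/\textup{(CT)}, \(bxy=\nu_\beta(n_x)(axy)\) and also \(bxy=\nu_\beta(n_{xy})(axy)\). Freeness then gives \(n_x=n_{xy}\).

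Similarly I need a left-multiplication relation \(n_{yx}=n_x\). This does not follow from plain multiplication, since \(b(yx)\ne y(bx)\). So I would use a binary tuple argument instead. The tuples \((a,\overline{m})\) and \((b,e)\), with \(m\) chosen so that the context \(\blank_1x\blank_2\), or similar, equalizes their values, are coordinatewise related whenever \(e\tau\bar m\), which holds since \(m\in N_\beta\). The equalizing-context principle then transfers the equality to every binary context, in particular to \(\blank_1 y\blank_2\) for arbitrary \(y\). This forces \(n_y=n_x\).

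Since \(S_\beta\) is generated by such moves (\(x\mapsto xy\) reaches every \(\mathcal R\)-class representative, and the transfer handles the rest), \(n_x\) is constant. Its common value is \(z_\beta(a,b)\).

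\emph{Right half and uniqueness.} The right-hand identity \(xb=\nu_\beta(z)(xa)\) follows by the same contextual transfer, using the context \(\blank_1\blank_2 x\) in place of \(x\blank_1\blank_2\). The key point is that \(z\) itself, rather than merely some defect, appears, because the binary transfer carries the identical pair \((\overline m,e)\) across contexts. Uniqueness of \(z_\beta\) is freeness of the action.

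\emph{Homomorphism property.} This is routine. Using \textup{(UD)} twice and \textup{(CE)},
\[
bb'x=\nu_\beta(z)(ab'x)=\nu_\beta(z)\nu_\beta(z')(aa'x),
\]
where \(z=z_\beta(a,b)\) and \(z'=z_\beta(a',b')\). The action law and uniqueness then give \eqref{eq:cr-defect-homomorphism}, and \(z_\beta(1,1)=1\) by freeness. The main obstacle is the independence of \(n_x\) from \(x\), which is where binary safety, via Lemma~\ref{cr:lem:equalizing-context}, is essential.
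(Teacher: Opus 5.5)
Your argument works, but it takes a different route from the paper's, and two steps need to be stated precisely.

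\textbf{Left half.} In the binary transfer, take \(m=n_x\). By \textup{(CT)}, \(ax\overline{m}=\nu_\beta(m)(axe)\) and \(bxe=\nu_\beta(n_x)(axe)\). So \(\blank_1x\blank_2\) equalizes \((a,\overline{n_x})\) and \((b,e)\) at a value of the accepting component \(S_\beta\). Lemma~\ref{cr:lem:equalizing-context} then gives \(F[a,\overline{n_x}]=F[b,e]\) for every binary context \(F\). With \(F=\blank_1y\blank_2\) this reads \(\nu_\beta(n_x)(aye)=\nu_\beta(n_y)(aye)\), hence \(n_y=n_x\) for every \(y\) outright. The step \(n_x=n_{xy}\) and the remark about moves generating \(S_\beta\) are therefore unnecessary.

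\textbf{Right half.} With these same tuples you must specialize to \(F=y\blank_1\blank_2\). Then \(ya\overline{z}=ybe\) says exactly that the right defect at \(y\) is \(z\). The context \(\blank_1\blank_2x\) you name, applied to \((a,\overline{z})\) and \((b,e)\), only reproduces the left-defect identity \(n_{ex}=z\). It does work if you mirror the tuples to \((\overline{z},a)\) and \((e,b)\), use \(\blank_1\blank_2x\) as the equalizer, and then transfer to \(\blank_1y\blank_2\).

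\textbf{Comparison with the paper.} The paper makes no second appeal to safety. Let \(n_y\) be the left defect at \(y\) and \(m_x\) the right defect at \(x\). The paper evaluates \((xb)y=x(by)\) by \textup{(CT)} as \(\nu_\beta(m_x)(xay)=\nu_\beta(n_y)(xay)\). Freeness gives \(m_x=n_y\) for independent \(x,y\), so all left and right defects coincide at once. You are right that left defects alone cannot be propagated by multiplication, but coupling them to right defects through associativity does the job. After Corollary~\ref{cr:cor:local-translation}, the paper's proof is pure Rees-coordinate algebra. Your route costs one more use of \(2\)-safety and of acceptance of \(\beta\), via Lemma~\ref{cr:lem:equalizing-context}. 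In exchange it packages both halves of \textup{(UD)} into the single identity \(F[a,\overline{z}]=F[b,e]\) over all binary contexts. Your submonoid and homomorphism steps agree with the paper's.
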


\begin{proof}
The set \(\Omega_\beta\) is a submonoid, while compatibility and reflexivity
of \(\tau\) make \(\Gamma_\beta\) a submonoid of \(S\times S\).
Fix \((a,b)\in\Gamma_\beta\).  For \(x\in S_\beta\),
Lemma~\ref{cr:lem:sandwich-ideal} puts \(ax,bx,xa,xb\) in \(S_\beta\), and
compatibility gives \(ax\tau bx\) and \(xa\tau xb\).  Thus
Corollary~\ref{cr:cor:local-translation} gives unique \(n_x,m_x\in N_\beta\)
with
\(bx=\nu_\beta(n_x)(ax)\) and \(xb=\nu_\beta(m_x)(xa)\).

For \(x,y\in S_\beta\), associativity gives \((xb)y=x(by)\).  By
\textup{(CT)}, its two sides are
\(\nu_\beta(m_x)(xay)\) and \(\nu_\beta(n_y)(xay)\), respectively.
Freeness yields \(m_x=n_y\).  Since \(x\) and \(y\) vary independently,
all these elements are one common value, denoted \(z_\beta(a,b)\).  This
proves \textup{(UD)} and uniqueness.

For multiplicativity, put \(z=z_\beta(a,b)\) and
\(z'=z_\beta(a',b')\).  Since \(a'x\in S_\beta\),
\[
\begin{aligned}
(bb')x
&=b\,\nu_\beta(z')(a'x)\\
&=\nu_\beta(z')\bigl(b(a'x)\bigr)\\
&=\nu_\beta(z')\nu_\beta(z)\bigl(a(a'x)\bigr)\\
&=\nu_\beta(zz')\bigl((aa')x\bigr).
\end{aligned}
\]
The second equality is \textup{(CE)}, and the last uses
\(N_\beta\le Z(G_\beta)\).  Uniqueness in \textup{(UD)} gives
\eqref{eq:cr-defect-homomorphism}.  Finally \(z_\beta(1,1)=1\), so \(z_\beta\) is a monoid
homomorphism.
\end{proof}

\begin{corollary}[Trivial defect]
\label{cr:cor:trivial-defect}
Under the hypotheses of Proposition~\ref{cr:lem:uniform-defect},
\(z_\beta(a,b)=1\) if and only if \(ax=bx\) and \(xa=xb\) for every
\(x\in S_\beta\).
\end{corollary}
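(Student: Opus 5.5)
The statement is an immediate consequence of the defining identity (UD) together with the freeness of the central-translation action in Lemma~\ref{cr:lem:central-translations}. I would prove the two implications separately.

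\emph{Forward direction.} Suppose \(z_\beta(a,b)=1\). Lemma~\ref{cr:lem:central-translations} gives \(\nu_\beta(1)=\operatorname{id}_{S_\beta}\). Substituting this into (UD) yields \(bx=ax\) and \(xb=xa\) for every \(x\in S_\beta\). This uses only that \(ax\) and \(xa\) lie in \(S_\beta\), which Lemma~\ref{cr:lem:sandwich-ideal} guarantees for \(a\in\Omega_\beta\).

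\emph{Converse.} Assume \(ax=bx\) and \(xa=xb\) for all \(x\in S_\beta\). Since \(S_\beta\) is nonempty (it is an accepting component), fix some \(x\in S_\beta\). Write \(z=z_\beta(a,b)\). Then (UD) gives
\[
\nu_\beta(z)(ax)=bx=ax=\nu_\beta(1)(ax),
\]
and \(ax\in S_\beta\). The final uniqueness clause of Lemma~\ref{cr:lem:central-translations} (freeness of the action) then forces \(z=1\).

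Strictly, only the left-multiplication condition is needed for the converse. Because \(z_\beta(a,b)\) is uniquely determined by either half of (UD), the right-multiplication condition follows from the left one once \(z=1\) is known. There is no real obstacle; the only point to check is that the element fed into the freeness statement genuinely lies in \(S_\beta\), and Lemma~\ref{cr:lem:sandwich-ideal} ensures this.
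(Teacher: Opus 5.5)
Your proof is correct and is the same argument the paper has in mind. The forward direction is (UD) with \(\nu_\beta(1)=\operatorname{id}_{S_\beta}\). The converse is freeness of the central-translation action applied at the element \(ax\in S_\beta\), and you are right that the left-multiplication hypothesis alone already forces \(z_\beta(a,b)=1\).
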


\begin{proof}
This is immediate from \textup{(UD)} and freeness of the central-translation
action.
\end{proof}

\begin{lemma}[Local faithfulness of the trivial defect]
\label{cr:lem:local-faithfulness}
Assume \(\tau\) is \(2\)-safe, let \(\beta\) be accepting, and let
\(a,b\in S_\beta\) satisfy \(a\tau b\).  If \(z_\beta(a,b)=1\), then
\(a=b\).
\end{lemma}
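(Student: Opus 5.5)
Since \(a,b\in S_\beta\subseteq\Omega_\beta\) and \(a\tau b\), the pair \((a,b)\) lies in \(\Gamma_\beta\), so \(z_\beta(a,b)\) is defined. By Corollary~\ref{cr:cor:trivial-defect}, the hypothesis \(z_\beta(a,b)=1\) gives \(ax=bx\) and \(xa=xb\) for every \(x\in S_\beta\). Equality is then extracted by choosing \(x\) as a suitable idempotent of \(S_\beta\), so that multiplication by \(x\) fixes \(a\).

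Concretely, we take \(x\) to be an idempotent in the \(\mathcal H\)-class of \(a\). By Lemma~\ref{cr:lem:index-rigidity} (or directly by (LR)), \(a\tau_\beta b\) forces \(a=(i,g,\lambda)\) and \(b=(i,h,\lambda)\) with the same row and column. In the normalized Rees representation, the \(\mathcal H\)-class \(H_{i\lambda}\) is a group with identity \(x=(i,(q^\beta_{\lambda i})^{-1},\lambda)\). Using (CR2), one checks that \(xa=a\) and \(xb=b\), since \(b\) lies in the same \(\mathcal H\)-class. The identity \(xa=xb\) then yields \(a=b\) immediately.

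An alternative route avoids explicit coordinates. Completely simple semigroups are completely regular, so \(a\) lies in a subgroup with identity \(a^0\), and \(a^0a=a\). Because \(b\) shares the \(\mathcal H\)-class of \(a\), it shares the same group identity, so \(a^0b=b\). Putting \(x=a^0\in S_\beta\) into \(xa=xb\) gives \(a=b\).

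The only point requiring care is that \(b\) lies in the same \(\mathcal H\)-class as \(a\). This is exactly the row/column rigidity established in Lemma~\ref{cr:lem:index-rigidity}, which is where \(2\)-safety enters. Everything else is immediate from Corollary~\ref{cr:cor:trivial-defect}.
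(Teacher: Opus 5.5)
Your proof is correct and matches the paper's argument. Lemma~\ref{cr:lem:index-rigidity} places \(a,b\) in one \(\mathcal H\)-class, and Corollary~\ref{cr:cor:trivial-defect} applied to the identity idempotent of that group then gives \(a=b\); the paper uses the right-hand identity \(ae_H=be_H\) where you use the left-hand one, which is immaterial.
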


\begin{proof}
By Lemma~\ref{cr:lem:index-rigidity}, \(a\) and \(b\) lie in the same
\(\mathcal H\)-class.  In a completely simple semigroup this class is a group;
let \(e_H\) be its identity.  Corollary~\ref{cr:cor:trivial-defect} gives
\(be_H=ae_H\), hence \(b=a\).
\end{proof}

\begin{lemma}[Multiplicativity of defects]
\label{cr:lem:defect-product}
Suppose \(a_1\tau b_1,\ldots,a_m\tau b_m\) and \(\beta\) lies below the
support of every displayed element.  Then
\[
z_\beta(a_1\cdots a_m,b_1\cdots b_m)
=\prod_{j=1}^m z_\beta(a_j,b_j).
\]
\end{lemma}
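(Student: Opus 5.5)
This is a direct consequence of the homomorphism property \eqref{eq:cr-defect-homomorphism} in Proposition~\ref{cr:lem:uniform-defect}, applied by induction on \(m\).

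First we check that every displayed pair lies in \(\Gamma_\beta\). The hypothesis \(\beta\le\operatorname{supp}(a_j)\) and \(\beta\le\operatorname{supp}(b_j)\) says exactly that \(a_j,b_j\in\Omega_\beta\). Together with \(a_j\tau b_j\), this gives \((a_j,b_j)\in\Gamma_\beta\), so each \(z_\beta(a_j,b_j)\) is defined. The proposition states that \(\Gamma_\beta\) is a submonoid of \(S\times S\). Hence the partial products \((a_1\cdots a_k,\,b_1\cdots b_k)\) also lie in \(\Gamma_\beta\). We can also see this directly: \(\Omega_\beta\) is a submonoid because \(\beta\le\alpha\) and \(\beta\le\alpha'\) give \(\beta\le\alpha\alpha'\) in the semilattice, and compatibility of \(\tau\) gives \(a_1\cdots a_k\,\tau\,b_1\cdots b_k\). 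So the left-hand side is defined.

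The induction runs as follows. The case \(m=0\) (empty product \((1,1)\)) uses \(z_\beta(1,1)=1\); the case \(m=1\) is trivial. For the step we set \(a=a_1\cdots a_{m-1}\), \(b=b_1\cdots b_{m-1}\), \(a'=a_m\), \(b'=b_m\). Then \eqref{eq:cr-defect-homomorphism} gives \(z_\beta(aa',bb')=z_\beta(a,b)\,z_\beta(a_m,b_m)\), and the induction hypothesis expands \(z_\beta(a,b)\) into the remaining factors.

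The product on the right is taken in the fixed order \(j=1,\ldots,m\). Since \(N_\beta\le Z(G_\beta)\) by Lemma~\ref{cr:lem:local-central}, the order is in fact irrelevant.

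The only point needing care is the membership bookkeeping in the first paragraph, i.e.\ that the partial products stay inside \(\Gamma_\beta\). The rest is immediate from Proposition~\ref{cr:lem:uniform-defect}.
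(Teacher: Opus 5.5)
Your proof is correct and matches the paper's: the paper simply notes that each \((a_j,b_j)\) lies in \(\Gamma_\beta\) and applies the homomorphism property of \(z_\beta\) from Proposition~\ref{cr:lem:uniform-defect}, which your induction spells out. Your check that the partial products stay in \(\Gamma_\beta\) is right. So is your remark that centrality of \(N_\beta\) makes the order of the factors irrelevant, though the argument does not need it.
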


\begin{proof}
Every \((a_j,b_j)\) lies in \(\Gamma_\beta\), so this is the homomorphism
property of \(z_\beta\).
\end{proof}

\begin{corollary}[Contextual defect factorization]
\label{cr:cor:context-defect}
Let \(E=u_0\blank_1u_1\cdots\blank_du_d\) be a tuple context over \(S\),
suppose \(s_i\tau t_i\) for \(1\le i\le d\), and let \(\beta\) lie below
the support of every \(u_j,s_i,t_i\).  Then
\[
z_\beta(E[\mathbf s],E[\mathbf t])
=\prod_{i=1}^d z_\beta(s_i,t_i).
\tag{CD}
\]
\end{corollary}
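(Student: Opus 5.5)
The plan is to realize \(E[\mathbf s]\) and \(E[\mathbf t]\) as products of \(2d+1\) coordinatewise \(\tau\)-related pairs, and then apply Lemma~\ref{cr:lem:defect-product}.

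First, pair each fixed factor with itself. Reflexivity of \(\tau\) gives \(u_j\tau u_j\) for \(0\le j\le d\). Together with the hypotheses \(s_i\tau t_i\), the sequence of pairs
\[
(u_0,u_0),\ (s_1,t_1),\ (u_1,u_1),\ \ldots,\ (s_d,t_d),\ (u_d,u_d)
\]
consists of \(\tau\)-related pairs. By assumption, \(\beta\) lies below the support of every displayed element. Hence every pair lies in \(\Omega_\beta\times\Omega_\beta\), and so in \(\Gamma_\beta\). The coordinatewise products of this sequence are exactly \(E[\mathbf s]=u_0s_1u_1\cdots s_du_d\) and \(E[\mathbf t]=u_0t_1u_1\cdots t_du_d\).

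Second, Lemma~\ref{cr:lem:defect-product}, which is just the homomorphism property \eqref{eq:cr-defect-homomorphism} of Proposition~\ref{cr:lem:uniform-defect}, then gives
\[
z_\beta(E[\mathbf s],E[\mathbf t])=z_\beta(u_0,u_0)\prod_{i=1}^d z_\beta(s_i,t_i)\,z_\beta(u_i,u_i).
\]
The order of the factors does not matter, because \(N_\beta\le Z(G_\beta)\) is abelian by Lemma~\ref{cr:lem:local-central}.

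Third, discard the diagonal factors by showing \(z_\beta(u,u)=1\) for every \(u\in\Omega_\beta\). Corollary~\ref{cr:cor:trivial-defect} settles this directly, since \(ux=ux\) and \(xu=xu\) trivially. One can also argue from uniqueness in \textup{(UD)}, which forces \(\nu_\beta(z)\) to fix \(ux\), combined with freeness of the central-translation action. Either way the right-hand side collapses to \(\prod_i z_\beta(s_i,t_i)\), which is \textup{(CD)}.

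There is no genuine obstacle here. The only point needing care is the support bookkeeping: every factor, including each \(u_j\), must lie in \(\Omega_\beta\) so that the pair belongs to \(\Gamma_\beta\). This is exactly what the hypothesis on \(\beta\) provides. The degenerate case of empty context factors is covered because \(1_S\in\Omega_\beta\), since \(1_Y\) is the top element of \(Y\).
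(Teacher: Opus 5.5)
Your proposal is correct and follows the paper's proof essentially verbatim: you factor \((E[\mathbf s],E[\mathbf t])\) in \(\Gamma_\beta\) as the product of the diagonal pairs \((u_j,u_j)\) and the coordinate pairs \((s_i,t_i)\), apply the homomorphism \(z_\beta\), and discard the diagonal defects via Corollary~\ref{cr:cor:trivial-defect}. Your appeal to centrality of \(N_\beta\) to reorder the factors is harmless but unnecessary, because deleting the trivial diagonal factors already leaves the coordinate defects in their original order.
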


\begin{proof}
In \(\Gamma_\beta\),
\[
(E[\mathbf s],E[\mathbf t])
=(u_0,u_0)(s_1,t_1)(u_1,u_1)\cdots(s_d,t_d)(u_d,u_d).
\]
Apply the homomorphism \(z_\beta\).  Each fixed pair has defect
\(z_\beta(u_j,u_j)=1\) by Corollary~\ref{cr:cor:trivial-defect}, so only the
coordinate defects remain.
\end{proof}

\subsection{Binary compression of an arbitrary tuple}

\begin{lemma}[Support bookkeeping for tuple contexts]
\label{cr:lem:support-bookkeeping}
Let \(E\) be a tuple context in \(S\).  Write
\(E=u_0\blank_1u_1\cdots\blank_du_d\) and put \(r:=E[s_1,\ldots,s_d]\). Then
\[
\operatorname{supp}(r)
=\operatorname{supp}(u_0)\operatorname{supp}(s_1)\operatorname{supp}(u_1)\cdots
\operatorname{supp}(s_d)\operatorname{supp}(u_d)
\]
in the semilattice \(Y\). In particular,
\(\operatorname{supp}(r)\) lies below the support of every factor occurring
in the filled context. If two tuples have the same product of coordinate
supports, then filling them into the same context produces elements in the
same completely simple component.
\end{lemma}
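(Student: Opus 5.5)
The plan is to compute supports through the map \(\operatorname{supp}:S\to Y\) and show it is a monoid homomorphism onto the semilattice \(Y=S/\mathcal D\); the displayed product formula is then just this homomorphism applied to \(r=u_0s_1u_1\cdots s_du_d\). The homomorphism property is exactly \textup{(CR1)}: if \(x\in S_\alpha\) and \(y\in S_\gamma\), then \(xy\in S_\alpha S_\gamma\subseteq S_{\alpha\gamma}\), so \(\operatorname{supp}(xy)=\operatorname{supp}(x)\operatorname{supp}(y)\). Since \(S=\bigsqcup_\alpha S_\alpha\), every element has a well-defined support. The identity is sent to \(1_Y\), although this is not even needed. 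Induction on the number of factors then gives the product formula for any finite product, and in particular for the filled context \(r\). Here the \(u_j\) are elements of \(S\), since in this section contexts carry fixed factors in \(S\).

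For the "in particular" clause, I would use that in a semilattice a product lies below each of its factors. If \(\pi=\alpha_1\cdots\alpha_k\), then commutativity and idempotence give \(\alpha_j\pi=\pi\), which by the stated natural order \(\beta\le\alpha\iff\alpha\beta=\beta\) means \(\pi\le\alpha_j\). Applying this to the product formula shows that \(\operatorname{supp}(r)\) lies below the support of every \(u_j\) and every \(s_i\).

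For the final clause, I would use commutativity of \(Y\) to rearrange the product formula as \(\operatorname{supp}(r)=\bigl(\prod_j\operatorname{supp}(u_j)\bigr)\bigl(\prod_i\operatorname{supp}(s_i)\bigr)\). For two tuples \(\mathbf s,\mathbf t\) with \(\prod_i\operatorname{supp}(s_i)=\prod_i\operatorname{supp}(t_i)\), the fillings \(E[\mathbf s]\) and \(E[\mathbf t]\) therefore have equal supports. Since supports index the components, the two fillings lie in the same completely simple component.

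There is no real obstacle here. The only point needing care is to make sure that commutativity of \(Y\) is invoked before the context factors and the coordinate supports are separated; the original ordering of the factors interleaves them, and only the semilattice law allows the regrouping.
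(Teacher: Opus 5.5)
Your proposal is correct and follows essentially the same route as the paper: repeated application of \textup{(CR1)} makes $\operatorname{supp}$ multiplicative, the semilattice laws place the total product below each factor support, and commutativity splits off the fixed-factor contribution $\prod_j\operatorname{supp}(u_j)$, which is common to both fillings. Your explicit check that $\alpha_j\pi=\pi$ spells out a step the paper states in one line.
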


\begin{proof}
This is repeated application of \textup{(CR1)}: multiplication of elements
from components \(S_\alpha\) and \(S_\delta\) lands in
\(S_{\alpha\delta}\). Since \(Y\) is a semilattice, the product of all
factor supports is below each factor support. The final statement follows
because the fixed-factor contribution is common to both fillings.
\end{proof}

The following lemma uses only the defect calculus already proved, the support
bookkeeping of Lemma~\ref{cr:lem:support-bookkeeping}, and binary separation.
No external completely-regular-semigroup theorem enters from this point on.

\begin{lemma}[Product compression]
\label{cr:lem:product-compression}
Assume $\tau$ is $2$-safe.  Let $d\ge2$ and let
\(\mathbf s=(s_1,\ldots,s_d), \qquad \mathbf t=(t_1,\ldots,t_d)\)
satisfy
\(s_i\tau t_i \qquad(1\le i\le d).\)
If the two tuples share an accepting tuple context, then
\begin{equation}
s_1s_2\cdots s_d
=
t_1t_2\cdots t_d.
\label{eq:cr-product-compression}
\end{equation}
\end{lemma}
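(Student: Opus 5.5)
The plan is to push everything into the accepting component determined by the shared context, use the central-defect calculus there, and finally lift the resulting local equality back to $S$ through the unary case of the equalizing-context principle.

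\emph{Step 1: locate an accepting component.}
Let $E=u_0\blank_1u_1\cdots\blank_du_d$ be a shared accepting tuple context, so $E[\mathbf s],E[\mathbf t]\in P$. By compatibility of $\tau$ (reflexivity on the fixed factors $u_j$), $E[\mathbf s]\,\tau\,E[\mathbf t]$. These two elements share the accepting unary context $\blank_1$. Hence the unary case ($d=1\le 2$) of Lemma~\ref{cr:lem:syntactic-lifting} gives
\[
E[\mathbf s]=E[\mathbf t]=:c .
\]
Put $\beta:=\operatorname{supp}(c)$; since $c\in P$, the component $S_\beta$ is accepting. By Lemma~\ref{cr:lem:support-bookkeeping}, $\beta$ lies below the support of every $u_j$ and every $s_i$. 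Because $c$ is also $E[\mathbf t]$, the same lemma puts $\beta$ below the support of every $t_i$.

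\emph{Step 2: defect calculus.}
All hypotheses of Corollary~\ref{cr:cor:context-defect} now hold, so
\[
z_\beta(c,c)=z_\beta(E[\mathbf s],E[\mathbf t])=\prod_{i=1}^d z_\beta(s_i,t_i).
\]
The left side equals $1$ by Corollary~\ref{cr:cor:trivial-defect}, since $c\,\tau\,c$. Put $X:=s_1\cdots s_d$ and $Y:=t_1\cdots t_d$. Then $X\,\tau\,Y$ by compatibility, and $\beta$ lies below the support of every $s_i,t_i$. So Lemma~\ref{cr:lem:defect-product} gives
\[
z_\beta(X,Y)=\prod_i z_\beta(s_i,t_i)=1.
\]
By Corollary~\ref{cr:cor:trivial-defect}, $Xx=Yx$ for every $x\in S_\beta$. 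This is where $2$-safety enters: it is needed for $N_\beta$ to be central and for $z_\beta$ to exist.

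\emph{Step 3: globalize.}
Fix any $x\in S_\beta$. Then $Xx=Yx$, and $\operatorname{supp}(Xx)=\beta$ because $\beta\le\operatorname{supp}(X)$. Apply Lemma~\ref{cr:lem:equalizing-context} with $d=1$ to the $\tau$-related pair $(X,Y)$, using the context $E_0=\blank_1x$ and the accepting component $S_\beta$. This yields $F[X]=F[Y]$ for every unary context $F$. Taking $F=\blank_1$ gives $X=Y$, which is \eqref{eq:cr-product-compression}.

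\emph{Main obstacle.}
The delicate point is Step 1, because $\mathbf s$ and $\mathbf t$ could a priori fill $E$ into different components. The unary lifting argument, which forces $E[\mathbf s]=E[\mathbf t]$, is what places everything in one common accepting component below all supports. Without it, the defect homomorphism could not be applied simultaneously to all pairs $(s_i,t_i)$.
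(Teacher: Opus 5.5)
Your proof is correct and follows the paper's route. Unary lifting forces $E[\mathbf s]=E[\mathbf t]=p$; contextual defect factorization at $\beta=\operatorname{supp}(p)$ makes the defect of $(s_1\cdots s_d,\,t_1\cdots t_d)$ trivial; and the equalizing-context principle (Lemma~\ref{cr:lem:equalizing-context}) turns the resulting equality inside $S_\beta$ into equality in $S$. The only difference is cosmetic: the paper splits the product as the binary tuple $(s_1\cdots s_{d-1},s_d)$ and equalizes it at arity two with a sandwich context $x\blank_1\blank_2y$, whereas you apply the principle directly at arity one to $(X,Y)$ with $\blank_1x$. Your version is equally valid, since the sandwich step already happens inside that lemma, and slightly shorter.
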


\begin{proof}
Let $E$ be a shared accepting tuple context. Compatibility gives \(E[\mathbf s]\tau E[\mathbf t].\) Both values belong to $P$, and $2$-safety includes unary safety.  Applying
Lemma~\ref{cr:lem:syntactic-lifting} at arity one to the related pair
\(E[\mathbf s],E[\mathbf t]\), with the empty unary context, gives
\begin{equation}
E[\mathbf s]
=
E[\mathbf t]
=
p\in P.
\label{eq:cr-shared-accepting-value}
\end{equation}

Put \(\beta:=\operatorname{supp}(p).\) By
Lemma~\ref{cr:lem:support-bookkeeping}, $\beta$ lies below the support of every
tuple component $s_i,t_i$ and every fixed factor occurring in $E$.

Since $p\in P\cap S_\beta$ and $\beta$ lies below the support of every
$s_i,t_i$, Proposition~\ref{cr:lem:uniform-defect} applies to each coordinate pair;
put \(z_i:=z_\beta(s_i,t_i).\)  The two filled products in
\eqref{eq:cr-shared-accepting-value} are the same element $p$.  Because
$\beta=\operatorname{supp}(p)$, the defect $z_\beta(p,p)$ is defined, and
uniqueness in Proposition~\ref{cr:lem:uniform-defect} gives
$z_\beta(p,p)=1$.  Corollary~\ref{cr:cor:context-defect} therefore gives
\begin{equation}
z_1z_2\cdots z_d
=
1_{G_\beta}.
\label{eq:cr-defect-product-one}
\end{equation}
The order here is exactly the fixed Clark--Wurm tuple order; the contextual
factorization corollary has already accounted for all intervening fixed
factors.

Put
\(A:=s_1\cdots s_{d-1}, \qquad B:=s_d, \qquad A':=t_1\cdots t_{d-1}, \qquad B':=t_d.\)
Compatibility gives
\(A\tau A', \qquad B\tau B'.\)
By multiplicativity and \eqref{eq:cr-defect-product-one},
\(z_\beta(AB,A'B')=1.\)
Corollary~\ref{cr:cor:trivial-defect} therefore gives
\(xAB=xA'B'\) for every $x\in S_\beta$.  Hence, for all
$x,y\in S_\beta$,
\begin{equation}
xABy=xA'B'y.
\label{eq:cr-sandwich-equality}
\end{equation}

Moreover, support multiplication is the semilattice product, so
\(\operatorname{supp}(AB)=\prod_{i=1}^d\operatorname{supp}(s_i)\).
The support $\beta=\operatorname{supp}(p)$ is the product of this element
with the supports of the fixed factors occurring in the shared accepting context,
so \(\beta\le\operatorname{supp}(AB).\)
Applying Lemma~\ref{cr:lem:sandwich-ideal} to \(c=AB\) gives
\(S_\beta(AB)S_\beta=S_\beta.\)
Since \(p\in S_\beta\), choose \(x,y\in S_\beta\) such that
\(xABy=p.\)
Then \eqref{eq:cr-sandwich-equality} gives \(xA'B'y=p\) as well.  Thus the coordinatewise
$\tau$-related binary tuples $(A,B)$ and $(A',B')$ are equalized by the
context $E_0=x\blank_1\blank_2y$ at the accepting value $p$.
Lemma~\ref{cr:lem:equalizing-context}, evaluated at the empty-fixed context
$\blank_1\blank_2$, gives \(AB=A'B'.\)  This is exactly \eqref{eq:cr-product-compression}.
\end{proof}

\begin{theorem}[Compatible-tolerance arity-two saturation]
\label{cr:thm:tolerance-saturation}
Let $(S,P)$ be a finite syntactic pointed monoid with $S$ completely regular.
Every $2$-safe compatible tolerance on $S$ is safe at every finite tuple
arity.
\end{theorem}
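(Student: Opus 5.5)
We fix a $2$-safe compatible tolerance $\tau$ on $S$, an arity $d\ge 3$, and coordinatewise $\tau$-related tuples $\mathbf s,\mathbf t\in S^d$ (so $s_i\tau t_i$ for all $i$) that share an accepting tuple context. The goal is to show that their finite profiles coincide, $\Phi_d(\mathbf s)=\Phi_d(\mathbf t)$. Then $(\mathbf s,\mathbf t)\notin\mathcal U_d(S,P)$, and $\tau$ is $d$-safe; since $d$ is arbitrary and arities $1,2$ are given, $\tau$ is safe at every finite arity. Equivalently, we must show that $F[\mathbf s]\in P\iff F[\mathbf t]\in P$ for every $d$-ary tuple context $F=v_0\blank_1v_1\cdots\blank_dv_d$ over $S$.

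The plan is to reduce an arbitrary test context $F$ to a binary situation to which Lemma~\ref{cr:lem:equalizing-context} at arity two (or Lemma~\ref{cr:lem:syntactic-lifting}) applies. We take the shared accepting context $E$ and repeat the proof of Lemma~\ref{cr:lem:product-compression}. First, $E[\mathbf s]=E[\mathbf t]=p\in P$ by unary lifting. Put $\beta=\operatorname{supp}(p)$. By (CD), the coordinate defects satisfy $z_1\cdots z_d=1$ in $N_\beta$, where $z_i=z_\beta(s_i,t_i)$. The key observation is that this product identity depends only on the tuple, not on $E$. By Corollary~\ref{cr:cor:context-defect}, for any $F$ whose fixed factors have support above $\beta$, we get $z_\beta(F[\mathbf s],F[\mathbf t])=\prod_i z_i=1$. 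Corollary~\ref{cr:cor:trivial-defect} then gives $xF[\mathbf s]=xF[\mathbf t]$ for $x\in S_\beta$.

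To obtain equality of $F[\mathbf s]$ and $F[\mathbf t]$ themselves, we use a binary-equalizer trick. Regard $(F[\mathbf s],1)$ and $(F[\mathbf t],1)$ as a binary $\tau$-related pair; they are related by compatibility, since a context built from related entries is related. By Lemma~\ref{cr:lem:sandwich-ideal} we choose $x,y\in S_\beta$ with $xF[\mathbf s]y=p$, using $\beta\le\operatorname{supp}(F[\mathbf s])$. Then the two-sided identity yields $xF[\mathbf t]y=p$ as well. So $x\blank_1\blank_2y$ is an equalizing context landing in an accepting component, and Lemma~\ref{cr:lem:equalizing-context} gives $F[\mathbf s]=F[\mathbf t]$. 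In particular, membership in $P$ agrees for such $F$. Unary safety would even suffice here, but the binary version mirrors the earlier product-compression argument.

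The main obstacle is test contexts $F$ whose fixed factors $v_j$ have support \emph{not} above $\beta$, because then the defect calculus in $S_\beta$ does not apply directly. We plan to handle these by a support comparison. Set $\gamma=\operatorname{supp}(F[\mathbf s])$. Since $s_i\tau t_i$ and Lemma~\ref{cr:lem:index-rigidity} only controls accepting components, we first argue, via syntactic lifting at arity one, that $\operatorname{supp}(s_i)$ and $\operatorname{supp}(t_i)$ agree whenever both lie above an accepting $\beta$. By Lemma~\ref{cr:lem:support-bookkeeping}, $F[\mathbf s]$ and $F[\mathbf t]$ then lie in the same component $S_\gamma$, with $\gamma\le\beta$ and $\gamma$ below every $\operatorname{supp}(v_j)$.

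Suppose $F[\mathbf s]\in P$. Then $\gamma$ is itself accepting and lies below every factor support. Rerunning the argument of the first paragraph with $\gamma$ in place of $\beta$ requires the identity $\prod_i z_\gamma(s_i,t_i)=1$ in $N_\gamma$. We obtain it from the $\beta$-identity via $E$: the filled value $E[\mathbf s]=p$ has support $\beta\ge\gamma$, so multiplicativity (CD) for $z_\gamma$ applied to $E$ gives $\prod_i z_\gamma(s_i,t_i)=z_\gamma(p,p)=1$. The condition for (CD) is that $\gamma$ lies below every factor of $E$, which holds because $\gamma\le\beta$. Consequently $F[\mathbf s]=F[\mathbf t]$, and the symmetric case is identical. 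This uniform transfer of the trivial defect from $\beta$ down to every accepting $\gamma\le\beta$ is the step we expect to need the most care.
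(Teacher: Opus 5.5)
Your first two steps are sound. From the shared context $E$ you correctly get $E[\mathbf s]=E[\mathbf t]=p$ and $\prod_i z_\beta(s_i,t_i)=1$ at $\beta=\operatorname{supp}(p)$. From this you get $F[\mathbf s]=F[\mathbf t]$ for every test context $F$ whose fixed factors all have support above $\beta$.

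The gap is in the step you flagged. You assert $\gamma=\operatorname{supp}(F[\mathbf s])\le\beta$, and then transfer the trivial defect from $\beta$ to $\gamma$ through $E$, using $z_\gamma(p,p)=1$ and \textup{(CD)} for $E$ at $\gamma$. Write $\alpha=\prod_i\operatorname{supp}(s_i)$, and let $\kappa_E,\kappa_F$ be the products of the fixed-factor supports of $E$ and $F$. Then $\beta=\kappa_E\alpha$ and $\gamma=\kappa_F\alpha$, and nothing relates $\kappa_F$ to $\kappa_E$. In exactly your obstacle case they can be incomparable. Take the syntactic semilattice monoid $\{1,e,f,ef\}$ with $P=\{e,f\}$: the contexts $E=e\blank_1\cdots\blank_d$ and $F=f\blank_1\cdots\blank_d$ both accept $(1,\ldots,1)$, with $\beta=e$ and $\gamma=f$.

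When $\gamma\not\le\beta$, $p$ (and some fixed factor of $E$) lies outside $\Omega_\gamma$. So $z_\gamma(p,p)$ is undefined, \textup{(CD)} cannot be applied to $E$ at $\gamma$, and the transfer fails.

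Your coordinatewise support-equality claim is also unsupported as stated. Unary lifting cannot be applied to $(s_i,t_i)$ directly, because with a nontrivial defect they need not share an accepting context. The claim is in fact true, but it needs an extra step you do not give: pass to a power $n$ with $z_\beta(s_i,t_i)^n=1$ (e.g.\ $n=|S|!$). Then $z_\beta(s_i^n,t_i^n)=1$ forces $s_i^n=t_i^n$ by sandwiching and unary lifting, and hence equal supports.

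The missing idea is to transfer the trivial defect through the empty-fixed context rather than through $E$. Your own second step, applied to $F=\blank_1\cdots\blank_d$ (no fixed factors, so admissible), already gives product compression $c:=s_1\cdots s_d=t_1\cdots t_d$. Its support $\alpha$ lies above $\gamma=\kappa_F\alpha$ for every test context $F$. So $z_\gamma(c,c)=1$, and multiplicativity gives $\prod_i z_\gamma(s_i,t_i)=1$ at every accepting $\gamma$ that can arise. Moreover $\prod_i\operatorname{supp}(t_i)=\operatorname{supp}(c)=\alpha$ puts $F[\mathbf t]$ in $S_\gamma$ as well, so no coordinatewise support comparison is needed. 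Then \textup{(CD)} for $F$ at $\gamma$ and local faithfulness (or your sandwich-plus-unary-lifting argument) give $F[\mathbf s]=F[\mathbf t]$ whenever $F[\mathbf s]\in P$. This is exactly the paper's proof, and it treats all $F$ uniformly without your case split.
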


\begin{proof}
Suppose for a contradiction that $\tau$ fails to be safe at some finite tuple arity.  Since $\tau$ is $2$-safe, such a failure must occur at some arity \(d\ge3\).  Hence there are coordinatewise $\tau$-related tuples
\(\mathbf s=(s_1,\ldots,s_d), \qquad \mathbf t=(t_1,\ldots,t_d)\)
that share an accepting tuple context but have distinct tuple distributions.

By the central-defect homomorphism and its multiplicative consequences,
Lemma~\ref{cr:lem:product-compression} applies and gives
\begin{equation}
c
:=
s_1\cdots s_d
=
t_1\cdots t_d.
\label{eq:cr-saturation-product}
\end{equation}
Consequently their support products agree:
\begin{equation}
\alpha
:=
\operatorname{supp}(s_1)\cdots\operatorname{supp}(s_d)
=
\operatorname{supp}(t_1)\cdots\operatorname{supp}(t_d).
\label{eq:cr-saturation-support}
\end{equation}

Choose a distinguishing tuple context $F$, exchanging the two tuples if
necessary, such that
\begin{equation}
r:=F[\mathbf s]\in P,
\qquad
r':=F[\mathbf t]\notin P.
\label{eq:cr-distinguishing-context}
\end{equation}
Let $\kappa$ be the product in the support semilattice $Y$ of the supports of
the fixed factors occurring in $F$; if there are no fixed factors, take
$\kappa=1_Y=\operatorname{supp}(1_S)$, the empty-product value fixed above. By \eqref{eq:cr-saturation-support} and
Lemma~\ref{cr:lem:support-bookkeeping}, both filled products
lie in the same completely simple component
\begin{equation}
S_\gamma,
\qquad
\gamma:=\kappa\alpha.
\label{eq:cr-gamma-component}
\end{equation}
Moreover
\(\gamma \le \operatorname{supp}(s_i), \operatorname{supp}(t_i) \qquad(1\le i\le d).\)
Since \(r\in P\cap S_\gamma,\) the local rigidity lemmas and the central-defect proposition apply at $\gamma$.

Put \(z_i:=z_\gamma(s_i,t_i).\)  By \eqref{eq:cr-saturation-product} the two ordered
coordinate products are the same element $c$.  Moreover
\(\operatorname{supp}(c)=\alpha, \qquad \gamma=\kappa\alpha\le\alpha,\)
so $c$ acts on the lower component $S_\gamma$ and the defect
$z_\gamma(c,c)$ is defined.  Uniqueness in
Proposition~\ref{cr:lem:uniform-defect} gives $z_\gamma(c,c)=1$.
By multiplicativity of defects,
\begin{equation}
z_1z_2\cdots z_d
=
1_{G_\gamma}.
\label{eq:cr-saturation-defect-one}
\end{equation}

Now evaluate the tuple context $F$.  By \eqref{eq:cr-gamma-component}, $\gamma$ lies below
the support of every fixed factor of $F$ and every tuple coordinate, while
$r\in P\cap S_\gamma$.  Hence
Corollary~\ref{cr:cor:context-defect} applies at the component $S_\gamma$.
Common fixed factors contribute defect $1$, and \eqref{eq:cr-saturation-defect-one} gives
\begin{equation}
z_\gamma(r,r')
=
\prod_{i=1}^d z_i
=
1.
\label{eq:cr-context-defect-one}
\end{equation}
Thus the aggregate coordinate discrepancy is trivial not only for the empty
fixed context used in \eqref{eq:cr-saturation-product}, but also after inserting the fixed factors
of the distinguishing context $F$.

Compatibility gives \(r\tau r'.\)  Since both elements lie in the accepting
component $S_\gamma$, equation~\eqref{eq:cr-context-defect-one} and
Lemma~\ref{cr:lem:local-faithfulness} give $r=r'$, contradicting
\eqref{eq:cr-distinguishing-context}.

Thus $\tau$ is safe at every finite tuple arity. \end{proof}

\begin{theorem}[Completely-regular arity-two saturation]
\label{cr:thm:arity-two-saturation}
Let $L$ be regular and suppose that its syntactic monoid $S$ is completely
regular. Then
\(\cmp_2(L)=\cmp_3(L)=\cmp_4(L)=\cdots.\)
More strongly, every binary-separating relational morphism
$\rho:S\relto M$ is separating at every finite tuple arity.
\end{theorem}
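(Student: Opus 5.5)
The plan is to deduce the statement from Theorem~\ref{cr:thm:tolerance-saturation} through the relational interface of Section~\ref{sec:algebraic-conflict}. First, the ``more strongly'' clause. Let \(\rho:S\relto M\) be \(2\)-separating for the pointed syntactic monoid \((S,P)\). By Lemma~\ref{alg:lem:relational-overlap}, the overlap relation \(\tau_\rho\) of \textup{(T)} is a compatible tolerance on \(S\), and by \textup{(RT)} it is \(2\)-safe for \((S,P)\). Because \(S\) is completely regular and \((S,P)\) is a syntactic pointed monoid, Theorem~\ref{cr:thm:tolerance-saturation} shows that \(\tau_\rho\) is \(f\)-safe for every finite \(f\). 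Applying \textup{(RT)} once more, this time in the reverse direction, shows that \(\rho\) is \(f\)-separating for every \(f\).

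Next comes the numerical equality. The chain \(\cmp_2(L)\le\cmp_3(L)\le\cdots\) is already in hand, since \(\mathsf{Sep}_{f+1}(L)\subseteq\mathsf{Sep}_f(L)\). For the reverse inequality I would use Theorem~\ref{alg:thm:relational-characterization}. Choose a finite monoid \(M\) and a \(2\)-separating relational morphism \(\rho:S\relto M\) with \(|M|=\cmp_2(L)\); such a minimizer exists because the feasible set contains the identity of \(S\) and sizes are positive integers. By the first step, \(\rho\) is \(f\)-separating for each \(f\ge2\). Theorem~\ref{alg:thm:relational-characterization} then gives \(\cmp_f(L)\le|M|=\cmp_2(L)\), and this closes the chain.

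There is essentially no obstacle at this level, because the work was done in Theorem~\ref{cr:thm:tolerance-saturation}. The only thing to check is that its hypotheses are met. The pointed monoid fed into Lemma~\ref{alg:lem:relational-overlap} must be exactly the syntactic pair \((S,P)\) with \(P=\eta(L)\), so that the separation property \textup{(Syn)} used in Section~\ref{sec:cr-saturation} holds. The unsafe sets \(\mathcal U_d(S,P)\) must also agree with those appearing in Definition~\ref{alg:def:rel-separating}. Both points hold by construction. It is also worth noting that the first step is a statement about an arbitrary relational witness, not only canonical ones. For compression maps \(h\), the same conclusion follows by combining the result with Theorem~\ref{alg:thm:safety-interface}\textup{(v)}.
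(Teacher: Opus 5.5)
Your proposal is correct and is essentially the paper's own proof. You pass from a binary-separating $\rho$ to the $2$-safe compatible tolerance $\tau_\rho$ via Lemma~\ref{alg:lem:relational-overlap}, apply Theorem~\ref{cr:thm:tolerance-saturation}, and convert back through \textup{(RT)}; your explicit choice of a minimizing $\rho$, combined with the monotone chain $\cmp_2(L)\le\cmp_3(L)\le\cdots$, simply spells out the one-line appeal to Theorem~\ref{alg:thm:relational-characterization} that the paper makes for $\cmp_f(L)=\cmp_2(L)$.
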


\begin{proof}
Let \(\rho:S\relto M\) be binary-separating.  By
Lemma~\ref{alg:lem:relational-overlap}, its overlap relation \(\tau_\rho\) is a
\(2\)-safe compatible tolerance.  Theorem~\ref{cr:thm:tolerance-saturation}
makes \(\tau_\rho\) safe at every finite arity, and the same lemma converts
this back to all-arity separation of \(\rho\).  The equality
\(\cmp_f(L)=\cmp_2(L)\) for \(f\ge2\) now follows from
Theorem~\ref{alg:thm:relational-characterization}.
\end{proof}

\begin{corollary}[Uniform height bound for completely regular monoids]
\label{cr:cor:height-upper}
For the pseudovariety \(\mathbf{CR}\) of all finite completely regular monoids,
\(\operatorname{ch}(\mathbf{CR})\le 2.\)
\end{corollary}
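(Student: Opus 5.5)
The corollary follows by unwinding Definition~\ref{ah:def:height} at \(r=2\) and invoking Theorem~\ref{cr:thm:arity-two-saturation}. Since \(\operatorname{ch}(\mathbf{CR})\) is the least admissible \(r\), it suffices to show that \(r=2\) belongs to the defining set. That is, for every regular language \(L\) over any finite alphabet with \(T_L\in\mathbf{CR}\), and every \(f\ge2\), we need \(\cmp_f(L)=\cmp_2(L)\).

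Fix such an \(L\). Its syntactic monoid \(S=T_L\) is finite and completely regular, and \((S,\eta(L))\) is its finite syntactic pointed monoid. Theorem~\ref{cr:thm:arity-two-saturation} requires exactly this hypothesis and gives \(\cmp_2(L)=\cmp_3(L)=\cdots\). Hence \(\cmp_f(L)=\cmp_2(L)\) for all \(f\ge2\).

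Since the alphabet of \(L\) was arbitrary, the uniform condition of Definition~\ref{ah:def:height} holds with \(r=2\). Therefore the minimum defining \(\operatorname{ch}(\mathbf{CR})\) is at most \(2\).

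The only point to check is that membership \(T_L\in\mathbf{CR}\) means precisely that the syntactic monoid is completely regular. This holds because \(\mathbf{CR}\) is defined as the class of all finite completely regular monoids, and complete regularity of a monoid is an intrinsic property, independent of the presentation. I expect no real obstacle: all the substantive work is contained in Theorem~\ref{cr:thm:tolerance-saturation}, and the corollary is its uniformization over the pseudovariety.
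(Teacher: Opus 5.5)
Your proposal is correct and matches the paper's argument. The paper simply records the corollary as the uniform form of Theorem~\ref{cr:thm:arity-two-saturation}, which is exactly your observation: that theorem applies to every regular \(L\) with \(T_L\in\mathbf{CR}\), over any finite alphabet, so \(r=2\) lies in the set defining \(\operatorname{ch}(\mathbf{CR})\).
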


\begin{proof}
This is the uniform form of
Theorem~\ref{cr:thm:arity-two-saturation}.
\end{proof}

\begin{remark}[Why the proof works beyond orthogroups]
No closure assumption on the idempotents is used. The orthogroup argument
can localize tuple components through products of idempotent supports. In a
general completely regular monoid that shortcut is unavailable. Here it is
replaced by the uniform central translation defect on a lower accepting
completely simple component. Product compression converts one shared
higher-arity acceptance into the exact identity
\(s_1\cdots s_d=t_1\cdots t_d,\)
and the central defect calculus then synchronizes every later tuple context.
This removes the obstruction created by nonorthodox interaction among the
completely simple components.
\end{remark}

\section{Global Compression-Height Trichotomy and Sharp Regimes}
\label{sec:global-arity-trichotomy}

Retain the notation
\(
 M_{ab}=\operatorname{Synt}(\{ab\})\cong\Obj_1(\{ab\}),
 \qquad \mathbf V_{ab}=\langle M_{ab}\rangle,
\)
from Section~\ref{sec:vab-arity}. Let \(U=\{1,u,0\},\qquad u^2=0,\) with \(0\) absorbing.

\begin{lemma}[Every finite non-completely-regular monoid has \(U\) as a divisor]
\label{global:lem:U-divisor}
If a finite monoid \(M\) is not completely regular, then \(U\prec M.\) \end{lemma}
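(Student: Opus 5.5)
Since \(M\) is finite, every element \(x\) has an idempotent power \(x^\omega\). Complete regularity of a finite monoid means every element lies in a subgroup, i.e.\ \(x\,\mathcal H\,x^\omega\) for all \(x\); equivalently \(x^{\omega+1}=x\). So if \(M\) is not completely regular, we can fix an element \(x\) with \(x^{\omega+1}\ne x\). The aim is to find a submonoid of \(M\) with an ideal whose Rees quotient is exactly \(U=\{1,u,0\}\). This exhibits \(U\) as a quotient of a submonoid, hence as a divisor.

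\textbf{Construction.} Let \(N=\{1\}\cup\{x^k:k\ge1\}\), the cyclic submonoid generated by \(x\) (with the identity adjoined if it is not already a power of \(x\)). Let \(I=\{x^k:k\ge2\}\). This is an ideal of \(N\), because multiplying a power \(x^k\) with \(k\ge2\) by \(1\) or by any \(x^j\) again gives a power with exponent at least \(2\). The Rees quotient \(N/I\) has underlying set \(\{1,\bar x,0\}\), and it satisfies \(\bar x^2=0\). For this to be isomorphic to \(U\), we need the three elements \(1\), \(x\) and the ideal \(I\) to be pairwise distinct. Concretely, we need \(x\notin I\) and \(x\ne1\), and we need \(1\notin I\) so that the ideal class does not swallow the identity.

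\textbf{Verifying the distinctness conditions.}
\begin{itemize}
\item \(x\ne 1\): the identity satisfies \(1^{\omega+1}=1\), so \(x\) cannot be \(1\).
\item \(x\notin I\): suppose \(x=x^k\) for some \(k\ge2\). Then \(x\) lies in the cyclic group part of \(\langle x\rangle\). That would give \(x=x^{\omega+1}\), contradicting the choice of \(x\).
\item \(1\notin I\): suppose \(1=x^k\) for some \(k\ge2\). Then \(x\) is invertible with inverse \(x^{k-1}\), so \(\langle x\rangle\) is a group. Again \(x^{\omega+1}=x\) follows, a contradiction.
\end{itemize}
So \(N/I\cong U\). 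Since \(U\) is a quotient of the submonoid \(N\le M\), we conclude \(U\prec M\).

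\textbf{Expected difficulty.} The only point needing care is the characterization used at the start: that a finite monoid is completely regular exactly when \(x^{\omega+1}=x\) for every \(x\). This is standard, since each element lies in the maximal subgroup \(H_{x^\omega}\) precisely when that identity holds. The rest is routine cyclic-monoid bookkeeping on the index and period of \(x\).
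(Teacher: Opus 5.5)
Your proposal is correct and follows the paper's argument: the same cyclic submonoid \(\langle x\rangle^1\), the same ideal \(I=\{x^n:n\ge2\}\), and the same check that neither \(1\) nor \(x\) lies in \(I\), which makes the Rees quotient \(U\). The only difference is phrasing. You state the non-subgroup condition as \(x^{\omega+1}\ne x\), while the paper treats the case \(x^n=x\) directly by exhibiting \(x^{n-2}\) as an inverse of \(x\) relative to the idempotent \(x^{n-1}\).
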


\begin{proof}
Choose \(x\in M\) which belongs to no subgroup and put \(C=\langle x\rangle^1,\qquad I=\{x^n:n\ge2\}.\) Then \(I\) is an ideal of \(C\). Neither \(1\) nor \(x\) lies in \(I\).
Indeed, \(x^n=1\) for some \(n\ge2\) would make \(x\) a unit. If
\(x^n=x\) for some \(n\ge2\), then for \(n=2\) the element \(x\) is
idempotent and hence lies in the trivial subgroup \(\{x\}\).  For
\(n\ge3\), put \(e=x^{n-1}\).  The relation \(x^n=x\) implies
\(e^2=e\), \(ex=xe=x\), and
\(e x^{n-2}=x^{n-2}e=x^{n-2}\), while
\(x x^{n-2}=x^{n-2}x=e\).  Thus \(x^{n-2}\) is an inverse of \(x\)
relative to \(e\), so \(x\) lies in the subgroup of \(C\) with
identity \(e\). Either case contradicts the choice of \(x\). Therefore the Rees quotient \(C/I\) has the three elements \(1,\quad\bar x,\quad0\) with \(\bar x^2=0\), and hence is isomorphic to \(U\).
\end{proof}

\begin{lemma}[Nilpotence plus noncommutativity produces \(M_{ab}\)]
\label{global:lem:mixing}
If \(M\) is a finite noncommutative monoid, then \(M_{ab}\prec U^2\times M.\) \end{lemma}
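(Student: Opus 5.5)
The plan is to build an explicit copy of \(M_{ab}\) as a Rees quotient of a submonoid of \(U^2\times M\). Since \(M\) is noncommutative, fix \(x,y\in M\) with \(xy\ne yx\). In \(U\times U\times M\) put
\[
\alpha:=(u,1,x),\qquad \beta:=(1,u,y),
\]
and let \(N\) be the submonoid generated by \(\alpha\) and \(\beta\). The two \(U\)-coordinates count occurrences of \(\alpha\) and of \(\beta\), saturating at \(0\) after two occurrences. The \(M\)-coordinate is meant to remember the order of the letters in the only relevant length-two product.

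First we enumerate the elements of \(N\) having no \(0\) in either \(U\)-coordinate. A word in \(\alpha,\beta\) with at least two occurrences of \(\alpha\) has first coordinate \(u^2=0\), and similarly for \(\beta\). So the only such elements are
\[
1=(1,1,1),\quad \alpha,\quad \beta,\quad \alpha\beta=(u,u,xy),\quad \beta\alpha=(u,u,yx).
\]
These five are pairwise distinct. The first four are already separated by their \(U\)-coordinates, and \(\alpha\beta\ne\beta\alpha\) because \(xy\ne yx\).

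Next let \(I\) consist of all elements of \(N\) with a \(0\) in some \(U\)-coordinate, together with \(\beta\alpha\). We check that \(I\) is a two-sided ideal of \(N\). The first part is an ideal because \(0\) is absorbing in \(U\). For \(\beta\alpha\), multiplying on either side by \(\alpha\) or \(\beta\) creates a second occurrence of one letter, hence a \(0\) coordinate, so any nonidentity multiple lands in the first part. We then check that \(1,\alpha,\beta,\alpha\beta\notin I\): none has a \(0\) coordinate, and none equals \(\beta\alpha\).

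Therefore the Rees quotient \(N/I\) has exactly the five elements \(1,\alpha,\beta,\alpha\beta,0\), with
\[
\alpha^2=\beta^2=\beta\alpha=0,\qquad \alpha\beta\ne0.
\]
These are precisely the multiplication rules of \(M_{ab}=\{1,a,b,ab,0\}\) under \(a\mapsto\alpha\), \(b\mapsto\beta\). Hence \(M_{ab}\) is a quotient of a submonoid of \(U^2\times M\), that is, \(M_{ab}\prec U^2\times M\).

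The only step requiring care is showing that adjoining \(\beta\alpha\) to the zero-coordinate ideal still gives an ideal while \(\alpha\beta\) stays outside it. This is exactly where noncommutativity of \(M\) is used: if \(xy=yx\), then \(\alpha\beta=\beta\alpha\) and the construction would collapse.
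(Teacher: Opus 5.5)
Your proposal is correct and follows essentially the same route as the paper: it uses the same generators \(((u,1),g)\), \(((1,u),h)\) with \(gh\ne hg\), the same ideal, and the same identification of the five-element Rees quotient with \(M_{ab}\). The paper writes the ideal as \(S\setminus\{1,\alpha,\beta,\alpha\beta\}\), which is the same set as your elements with a zero coordinate together with \(\beta\alpha\). The length-three products you leave implicit follow from your listed relations by associativity, so nothing is missing.
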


\begin{proof}
Choose \(g,h\in M\) with \(gh\ne hg\).  In \(U^2\times M\), put
\(\alpha:=((u,1),g)\) and \(\beta:=((1,u),h)\), and let
\(S:=\langle\alpha,\beta\rangle^1\). The projection of a word in
\(\alpha,\beta\) to the first copy of \(U\) is \(1,u,0\) according as the
number of \(\alpha\)'s is \(0,1,\) or at least \(2\); the second copy records
the same trichotomy for the number of \(\beta\)'s. Hence the only elements
represented by words with no zero marker coordinate have occurrence counts
\((0,0),(1,0),(0,1),(1,1)\). They are represented respectively by \(1\), \(\alpha\), \(\beta\), and
\(\alpha\beta\) or \(\beta\alpha\).
The first three are separated by their marker coordinates. The two words of
count \((1,1)\) have the same marker coordinates but distinct third
coordinates \(gh\) and \(hg\), so \(\alpha\beta\ne\beta\alpha\).

Set
\(
I=S\setminus\{1,\alpha,\beta,\alpha\beta\}.
\)
We verify explicitly that \(I\) is a two-sided ideal. Every element of
\(I\setminus\{\beta\alpha\}\) has at least one zero marker coordinate, and
once a coordinate in \(U^2\) is zero it remains zero after multiplication on
either side. The exceptional element \(\beta\alpha\) has one occurrence of
each generator. Multiplying it on either side by any nonidentity element of
\(S\) adds at least one further occurrence of \(\alpha\) or \(\beta\), and
therefore creates a zero in the corresponding marker coordinate.
Multiplication by the identity leaves \(\beta\alpha\) itself in \(I\). Thus
\(SI\cup IS\subseteq I\).

The Rees quotient \(S/I\) therefore has exactly the five elements
\(1\), \(\bar\alpha\), \(\bar\beta\), \(\bar\alpha\bar\beta\), and \(0\).
The marker count shows
\(\bar\alpha^2=\bar\beta^2=0\), while
\(\bar\beta\bar\alpha=0\) because \(\beta\alpha\in I\). Every product of
length at least three repeats at least one generator and hence is \(0\). The
remaining nonzero product is \(\bar\alpha\bar\beta\). This is precisely the
five-element syntactic monoid
\(M_{ab}=\operatorname{Synt}(\{ab\})\).
Thus \(M_{ab}\) is a quotient of the submonoid
\(S\le U^2\times M\), and therefore \(M_{ab}\prec U^2\times M\).
\end{proof}

\begin{theorem}[The \(\mathbf V_{ab}\) structural dichotomy]
\label{global:thm:structural}
Let \(\mathbf V\) be a pseudovariety of finite monoids. If
\(\mathbf V_{ab}\not\subseteq\mathbf V\), then either
\(\mathbf V\subseteq\mathbf{Com}\) or \(\mathbf V\subseteq\mathbf{CR}\). Equivalently, every pseudovariety containing both a noncommutative monoid and
a non-completely-regular monoid contains \(\mathbf V_{ab}\).
\end{theorem}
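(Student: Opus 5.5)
We prove the contrapositive in its second form: if a pseudovariety \(\mathbf V\) contains a noncommutative monoid \(M\) and a non-completely-regular monoid \(N\), then \(M_{ab}\in\mathbf V\). Since \(\mathbf V_{ab}=\langle M_{ab}\rangle\) is the least pseudovariety containing \(M_{ab}\), this is equivalent to \(\mathbf V_{ab}\subseteq\mathbf V\). The two formulations in the statement are then logically the same sentence. Indeed, \(\mathbf V\not\subseteq\mathbf{Com}\) means \(\mathbf V\) has a noncommutative member, and \(\mathbf V\not\subseteq\mathbf{CR}\) means it has a non-completely-regular member.

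The argument uses only the two preceding lemmas together with the closure properties of pseudovarieties: closure under division and finite direct products. First, by Lemma~\ref{global:lem:U-divisor}, the non-completely-regular monoid \(N\in\mathbf V\) has \(U\prec N\), so \(U\in\mathbf V\). Second, closure under finite products gives \(U^2\times M\in\mathbf V\). Third, Lemma~\ref{global:lem:mixing}, applied to the noncommutative \(M\), gives \(M_{ab}\prec U^2\times M\). Closure under division then yields \(M_{ab}\in\mathbf V\), and hence \(\langle M_{ab}\rangle\subseteq\mathbf V\).

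For the first formulation, suppose \(\mathbf V_{ab}\not\subseteq\mathbf V\). If \(\mathbf V\) were contained in neither \(\mathbf{Com}\) nor \(\mathbf{CR}\), the previous paragraph would give \(\mathbf V_{ab}\subseteq\mathbf V\), a contradiction.

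There is essentially no obstacle, since the substantive work is in the two lemmas. The only points to state carefully are the following.
\begin{itemize}
\item A divisor here means a quotient of a submonoid.
\item Pseudovarieties of monoids are closed under this notion of division.
\item \(\mathbf{Com}\) and \(\mathbf{CR}\) are taken as the pseudovarieties of all finite commutative, respectively completely regular, monoids.
\end{itemize}
With these in place, membership of a single witness monoid is exactly the negation of the corresponding inclusion.
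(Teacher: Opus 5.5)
Your proof is correct and follows the paper's own argument essentially verbatim: it obtains \(U\in\mathbf V\) from Lemma~\ref{global:lem:U-divisor}, forms \(U^2\times M\in\mathbf V\) by product closure, and then applies Lemma~\ref{global:lem:mixing} with divisor closure to conclude \(M_{ab}\in\mathbf V\), hence \(\mathbf V_{ab}\subseteq\mathbf V\). Your remark that the two formulations are contrapositives of each other is also exactly how the paper frames it.
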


\begin{proof}
If \(\mathbf V\) is contained in neither class, choose a noncommutative
\(M\in\mathbf V\) and a non-completely-regular \(R\in\mathbf V\).
Lemma~\ref{global:lem:U-divisor} and divisor closure give \(U\in\mathbf V\).
Closure under finite products, submonoids and quotients, together with
Lemma~\ref{global:lem:mixing}, gives \(M_{ab}\in\mathbf V\), and hence
\(\mathbf V_{ab}\subseteq\mathbf V\).
\end{proof}

\begin{remark}[Relation to pseudovariety classifications]
The preceding divisor argument isolates the obstruction needed for SCL
compression height and sits alongside classical classifications of small
noncommutative pseudovarieties. Margolis and Pin classify the
minimal noncommutative varieties, among which the variety generated by
\(\operatorname{Synt}(\{ab\})\) is one canonical aperiodic case
\cite{MargolisPin1984}. A 2025 preprint by Thumm classifies pseudovarieties
satisfying product identities of the form
\(x_1\cdots x_n\approx\rho(x_1,\ldots,x_n)\)~\cite{Thumm2025}; the
published STACS 2026 work of Thumm--Wei\ss concerns straight-line-program
compression~\cite{ThummWeiss2026}.  The classification used here concerns instead the stabilization spectrum of
the Clark--Wurm SCL compression hierarchy.
\end{remark}

\begin{theorem}[Global SCL compression-height trichotomy]
\label{global:thm:trichotomy}
For every pseudovariety \(\mathbf V\) of finite monoids,
\(
\operatorname{ch}(\mathbf V)\in\{1,2,\infty\}.
\)
More precisely,
\[
\operatorname{ch}(\mathbf V)=\infty
\quad\Longleftrightarrow\quad
M_{ab}\in\mathbf V
\quad\Longleftrightarrow\quad
\mathbf V_{ab}\subseteq\mathbf V.
\tag{G}
\]
Consequently no pseudovariety has finite compression height \(3,4,5,\ldots\).
\end{theorem}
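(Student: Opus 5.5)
The theorem follows by assembling results already proved; no new construction is needed. I will prove (G) first and then read off the trichotomy.

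For (G), the second equivalence \(M_{ab}\in\mathbf V\Leftrightarrow\mathbf V_{ab}\subseteq\mathbf V\) is immediate, since \(\mathbf V_{ab}\) is the least pseudovariety containing \(M_{ab}\). If \(\mathbf V_{ab}\subseteq\mathbf V\), Lemma~\ref{ah:lem:monotone} and Corollary~\ref{height:cor:core-bounds} give \(\infty=\operatorname{ch}(\mathbf V_{ab})\le\operatorname{ch}(\mathbf V)\). The underlying reason is that, for each finite \(r\), Theorem~\ref{vab:thm:amplification} with \(d=r\) supplies a language \(L\) with \(T_L\in\mathbf V_{ab}\subseteq\mathbf V\) and \(\cmp_{r+1}(L)>\cmp_r(L)\). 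Hence no finite \(r\) can serve as a uniform stabilization arity.

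For the converse, suppose \(M_{ab}\notin\mathbf V\). Theorem~\ref{global:thm:structural} then gives \(\mathbf V\subseteq\mathbf{Com}\) or \(\mathbf V\subseteq\mathbf{CR}\).
\begin{itemize}
\item In the first case, Lemma~\ref{ah:lem:monotone} and Corollary~\ref{height:cor:core-bounds} give \(\operatorname{ch}(\mathbf V)\le\operatorname{ch}(\mathbf{Com})=1\).
\item In the second case, Corollary~\ref{cr:cor:height-upper} gives \(\operatorname{ch}(\mathbf V)\le\operatorname{ch}(\mathbf{CR})\le2\).
\end{itemize}
Either way \(\operatorname{ch}(\mathbf V)\) is finite, which proves the contrapositive of the remaining implication in (G).

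For the trichotomy, the argument above shows that \(\operatorname{ch}(\mathbf V)\) is either \(\infty\) or at most \(2\). By Definition~\ref{ah:def:height} the height is a minimum over \(\mathbb N_{\ge1}\), so a finite height is \(1\) or \(2\). Therefore \(\operatorname{ch}(\mathbf V)\in\{1,2,\infty\}\), and finite heights \(3,4,\ldots\) are excluded.

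Two points need care. First, the definition quantifies over languages on arbitrary finite alphabets. The witnesses \(\widehat L_{d,R}\) live over growing alphabets, which is allowed. The upper bounds from Theorems~\ref{group:thm:commutative-collapse} and \ref{cr:thm:arity-two-saturation} are stated for every regular language with the relevant syntactic monoid, so they are uniform as required. Second, the trivial pseudovariety must be handled. It is contained in \(\mathbf{Com}\), and every language with trivial syntactic monoid has \(\cmp_f=1\) for all \(f\), so its height is \(1\). This agrees with the case split and adds no new value.

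The main obstacle is none of these formal steps; the substantive inputs are the completely regular saturation theorem and the amplification family, both already established. The only step needing a careful check is the monotonicity reduction. It requires that a language with \(T_L\in\mathbf V\) also has \(T_L\in\mathbf W\) whenever \(\mathbf V\subseteq\mathbf W\), and that the stabilization condition is inherited downward. Both hold directly from Definition~\ref{ah:def:height}.
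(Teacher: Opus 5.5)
Your proposal is correct and follows the paper's proof essentially verbatim. The infinite case comes from monotonicity together with \(\operatorname{ch}(\mathbf V_{ab})=\infty\), which rests on every-boundary amplification. The finite case comes from the structural dichotomy \(\mathbf V\subseteq\mathbf{Com}\) or \(\mathbf V\subseteq\mathbf{CR}\), combined with commutative collapse and the completely regular arity-two saturation bound. Your extra remarks on the trivial pseudovariety and on arbitrary alphabets are consistent with that case split but add nothing beyond it.
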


\begin{proof}
Suppose first that \(M_{ab}\in\mathbf V\).  Since \(\mathbf V\) is a
pseudovariety, this is equivalent to
\(\mathbf V_{ab}\subseteq\mathbf V\).  By
Corollary~\ref{height:cor:core-bounds},
\(\operatorname{ch}(\mathbf V_{ab})=\infty\), and monotonicity therefore
forces
\(\operatorname{ch}(\mathbf V)=\infty.\)

Conversely, suppose \(M_{ab}\notin\mathbf V\), equivalently
\(\mathbf V_{ab}\not\subseteq\mathbf V\).
Theorem~\ref{global:thm:structural} gives
\(\mathbf V\subseteq\mathbf{Com} \qquad\text{or}\qquad \mathbf V\subseteq\mathbf{CR}.\)
In the first case
Theorem~\ref{group:thm:commutative-collapse} gives
\(\operatorname{ch}(\mathbf V)=1\).
In the second,
Corollary~\ref{cr:cor:height-upper} gives
\(\operatorname{ch}(\mathbf V)\le2\).
Hence every finite value is \(1\) or \(2\), and the infinite value occurs
exactly when \(M_{ab}\in\mathbf V\).
\end{proof}

\begin{remark}[Scope of the trichotomy]
The theorem is a trichotomy of the \emph{possible values} of
\(\operatorname{ch}\), together with a complete characterization of the
infinite value.  It does not assert a complete structural classification of
the two finite cases.  We prove
\(\mathbf V\subseteq\mathbf{Com} \quad\Longrightarrow\quad \operatorname{ch}(\mathbf V)=1\)
and
\(\mathbf V\subseteq\mathbf{CR} \quad\Longrightarrow\quad \operatorname{ch}(\mathbf V)\le2,\)
but the converse characterization of height one, and hence the general
boundary between heights one and two, remains open; see
Section~\ref{sec:discussion}.
\end{remark}

\subsection{Group syntactic monoids: relational compressions collapse to quotients}

\begin{theorem}[Group quotient-collapse]
\label{group:thm:quotient-collapse}
Let \(L\) be regular and suppose that its syntactic monoid is a finite group
\(G\). Then, for every \(f\ge1\), \(\cmp_f(L)=\qcmp_f(L).\) More precisely, from any finite witnessing compression map with \(k\) values one
can extract a witnessing quotient group of \(G\) with at most \(k\) elements.
\end{theorem}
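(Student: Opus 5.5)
We prove the inequality \(\qcmp_f(L)\le\cmp_f(L)\); the reverse inequality holds because quotient compressions are a special case of compression maps. By Theorem~\ref{alg:thm:relational-characterization} and the Safety Interface Theorem, take a witnessing map \(h:\Sigma^*\to M\) with \(|\operatorname{im}(h)|=k\), and consider its canonical relational morphism \(\rho_h:G\relto M\), which is \(f\)-separating. By Lemma~\ref{alg:lem:relational-overlap}, the overlap relation \(\tau=\tau_{\rho_h}\) is an \(f\)-safe compatible tolerance on \(G\). Since \(G\) is finite, the graph \(\operatorname{Gr}(\rho_h)\) is a submonoid of the finite monoid \(G\times M'\), where \(M'=\operatorname{im}(h)\).

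The key step is to show that the fiber \(N:=\rho_h^{-1}\)-preimage of the identity, that is \(N=\{g\in G: 1_M\in\rho_h(g)\}\), is a normal subgroup of \(G\). Once this is established, we prove that \(\tau\) is exactly the congruence modulo \(N\), i.e.\ \(x\tau y\iff x^{-1}y\in N\).

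\emph{Step 1 (\(N\) is a subgroup).} \(N\) contains \(1_G\). If \(1\in\rho(g)\) and \(1\in\rho(g')\), then \(1=1\cdot1\in\rho(gg')\), so \(N\) is closed under multiplication. A finite submonoid of a group is a subgroup, so \(N\) is closed under inverses as well.

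\emph{Step 2 (normality).} Normality requires more care, since \(M'\) need not be a group. Pick \(m\in\rho(x)\). Then \(m\) lies in the finite monoid \(M'\), so some power \(m^{\omega}\) is idempotent, and \(m^{\omega}\in\rho(x^{\omega})=\rho(1_G)\) once \(\omega\) is chosen as a multiple of the order of \(x\). The product \(m^{\omega-1}\cdot1\cdot m\) lies in \(\rho(x^{-1}nx)\). The gap is that it equals \(m^\omega\), not \(1\).

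A cleaner route is therefore to work entirely inside \(G\). Let \(K=\{(g,g'): g\tau g'\}\). This \(K\) is a subsemigroup of \(G\times G\) containing the diagonal, and hence a subgroup, by the same argument as in Lemma~\ref{cr:lem:local-kernel}. Set \(N=\{n: 1\tau n\}\). Conjugation by diagonal pairs \((g,g)\) shows that \(N\) is normal. Moreover, \(x\tau y\) implies \((1,x^{-1}y)=(x^{-1},x^{-1})(x,y)\in K\), so \(x^{-1}y\in N\); conversely, \(x^{-1}y\in N\) implies \(x\tau y\). Hence \(\tau\) is exactly the congruence of \(G/N\), which is transitive here.

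\emph{Step 3 (safety of the quotient).} Because \(\tau\) coincides with \(\ker(G\to G/N)\), \(f\)-safety of \(\tau\) is literally \(f\)-separation of the quotient morphism \(G\to G/N\). By Theorem~\ref{alg:thm:safety-interface}, the composite \(\Sigma^*\to G\to G/N\) is therefore a quotient witness.

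\emph{Step 4 (the size bound) — the main obstacle.} It remains to show \(|G/N|\le k\). The fibers \(\rho(g)\) for \(g\) in distinct cosets are pairwise disjoint, because cosets that are not \(\tau\)-related have disjoint fibers. Each fiber is nonempty. Choosing one element of \(M'\) from the fiber of a representative of each coset therefore gives an injection \(G/N\hookrightarrow M'\), so \(|G/N|\le|M'|=k\).

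This establishes the extraction claim and the equality \(\cmp_f(L)=\qcmp_f(L)\). The crux of the whole argument is the group-theoretic closure in Step 2: it is what makes \(\tau\) transitive in the group case.
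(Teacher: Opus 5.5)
Your argument is correct once the abandoned material is removed, and it takes a genuinely different, more elementary route than the paper.

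\textbf{The paper's route.} The paper works with the graph $R=\operatorname{im}(\eta,h)\le G\times M$. It passes to the minimal ideal $I$ of $R$ and to the group $H=eIe$, which still maps onto $G$. It then sets $N=\pi_1(\ker\pi_2|_H)$ and obtains $|G/N|\le|M|$ from a surjective homomorphism $\pi_2(H)\twoheadrightarrow G/N$. Safety is verified by lifting a coordinatewise-collapsed unsafe pair into $H$.

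\textbf{Your route.} You stay inside $G$. The overlap relation $\tau_{\rho_h}$ is a subsemigroup of the finite group $G\times G$ containing the diagonal, hence a subgroup (the device of Lemma~\ref{cr:lem:local-kernel}). It is therefore exactly the congruence modulo the normal subgroup $N=\{n:1\,\tau\,n\}$. Safety of $G\to G/N$ is then literally \textup{(RT)} of Lemma~\ref{alg:lem:relational-overlap}. Disjointness of the fibers over distinct cosets gives an injection $G/N\hookrightarrow\operatorname{im}(h)$.

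\textbf{What each buys.} Your route avoids minimal-ideal theory and identifies the quotient intrinsically: its kernel is $\tau_{\rho_h}$ itself, since on a finite group every compatible tolerance is a congruence. The paper's route yields the extra structural fact that $G/N$ is a quotient of a subgroup of $M$, i.e.\ a divisor of $M$, rather than merely a set with at most $|M|$ elements. The two constructions in fact produce the same $N$: if $m\in\rho_h(1)\cap\rho_h(n)$, then $e(1,m)e$ and $e(n,m)e$ are elements of $H$ lying over $1$ and $n$ with equal second coordinate.

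\textbf{A correction to the write-up.} Delete the opening ``key step'', Step~1, and the first paragraph of Step~2. The set $\{g:1_M\in\rho_h(g)\}$ is a subgroup but need not be normal. For example, take the kernel language of $\eta:\{a,b\}^*\twoheadrightarrow S_3$ with $\eta(a)=(12)$ and $\eta(b)$ a $3$-cycle. This language has no unary unsafe pairs, so the map $h$ into the two-element monoid $\{1,0\}$ ($0$ absorbing) with $h(a)=1$, $h(b)=0$ is a unary witness. For this $h$ that set is $\{1,(12)\}$, which is not normal. Your actual proof uses only the overlap kernel $\{n:1\,\tau\,n\}$, which is the right object.
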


\begin{proof}
Let \(\eta:\Sigma^*\twoheadrightarrow G\) be the syntactic morphism, and let
\(h:\Sigma^*\to M\) be any finite witnessing compression map, with
\(M=\operatorname{im}(h)\). Form the combined image
\(R:=\operatorname{im}(\eta,h)\le G\times M.\)
The first projection \(\pi_1:R\twoheadrightarrow G\) is surjective.

Let \(I\) be the minimal ideal of the finite monoid \(R\). The image
\(\pi_1(I)\) is a nonempty ideal of the group \(G\), hence
\(\pi_1(I)=G\). Choose an idempotent \(e\in I\). Since a group has only
one idempotent, \(\pi_1(e)=1_G\). By the standard structure of minimal
ideals of finite semigroups, \(H:=eIe\) is a group with identity
\(e\)~\cite{Howie1995}. Moreover \(\pi_1(H)=G\): if \(g\in G\), choose
\(x\in I\) with \(\pi_1(x)=g\); then \(exe\in H\) and
\(\pi_1(exe)=g\).

Put
\(\psi:=\pi_2|_H, \qquad H_M:=\psi(H), \qquad K:=\ker\psi.\)
Then \(H_M\) is a finite group contained in \(M\). Since
\(\pi_1|_H:H\twoheadrightarrow G\) is surjective, \(N:=\pi_1(K)\trianglelefteq G.\) There is a well-defined surjective homomorphism
\(H_M\twoheadrightarrow G/N, \qquad \psi(x)\longmapsto \pi_1(x)N.\)
Indeed, if \(\psi(x)=\psi(y)\), then \(x^{-1}y\in K\), so
\(\pi_1(x)^{-1}\pi_1(y)\in N\). Consequently \(|G/N|\le |H_M|\le |M|.\) It remains to show that \(G\to G/N\) separates every unsafe tuple. Suppose
instead that \((\mathbf s,\mathbf t)\in\mathcal U_d(G,P)\), \(d\le f\), is
collapsed in every coordinate, so \(s_i^{-1}t_i\in N\) for all \(i\).
Choose \(k_i\in K\) with \(\pi_1(k_i)=s_i^{-1}t_i\), and choose
\(a_i\in H\) with \(\pi_1(a_i)=s_i\). Then \(\pi_1(a_ik_i)=t_i, \qquad \psi(a_ik_i)=\psi(a_i).\) Because \(H\subseteq R=\operatorname{im}(\eta,h)\), the common second
coordinate \(\psi(a_i)\) belongs to
\(\rho_h(s_i)\cap\rho_h(t_i)\). This happens in every coordinate,
contradicting the fact that \(\rho_h\) is \(f\)-separating by
Theorem~\ref{alg:thm:safety-interface}.

Thus \(G/N\) is a witnessing quotient of size at most \(|M|\). Hence
\(\qcmp_f(L)\le\cmp_f(L)\), and the reverse inequality holds by definition.
\end{proof}

\subsection{A second height-one mechanism: bands}

Commutativity is not necessary for unary collapse.  The entire pseudovariety
of finite bands---idempotent monoids, including highly noncommutative
ones---already saturates at arity one.  We use the standard decomposition of
a band into a semilattice of rectangular \(\mathcal D\)-classes.

\begin{theorem}[Band unary saturation]
\label{band:thm:unary-saturation}
Let \(L\) be regular and suppose that its syntactic monoid \(S\) is a finite
band.  Then every unary-separating relational morphism from \(S\) is
separating at every finite arity.  Consequently
\(\cmp_1(L)=\cmp_2(L)=\cmp_3(L)=\cdots.\)
\end{theorem}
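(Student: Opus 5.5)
The plan is to rerun the completely regular argument of Section~\ref{sec:cr-saturation} with unary safety in place of binary safety. The point is that in a band every maximal subgroup is trivial, so the central-defect calculus collapses: every $G_\beta$ is trivial, hence $N_\beta$ is trivial and $z_\beta\equiv1$. By Lemma~\ref{alg:lem:relational-overlap} and Theorem~\ref{alg:thm:relational-characterization}, it suffices to prove that every $1$-safe compatible tolerance $\tau$ on a finite band $S$, with $(S,P)$ syntactic, is safe at every finite arity. I will use the Clifford decomposition $S=\bigsqcup_{\alpha\in Y}S_\alpha$, in which each $S_\alpha$ is a rectangular band, together with the lifting Lemma~\ref{cr:lem:syntactic-lifting}. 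That lemma is stated for $f$-safe tolerances, but at arity one it needs only unary safety.

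\emph{Step 1 (local rigidity, unary version).} Let $S_\gamma$ be an accepting component with $p\in P\cap S_\gamma$, and let $x,y\in S_\gamma$ with $x\tau y$. In a rectangular band $pxp=p=pyp$, so the unary context $p\blank p$ is a shared accepting context. Lemma~\ref{cr:lem:syntactic-lifting} at arity one, applied with the empty context, then gives $x=y$. Hence $\tau$ restricted to any accepting component is equality. This replaces Lemmas~\ref{cr:lem:index-rigidity} and~\ref{cr:lem:local-faithfulness}; no binary tuple is needed, precisely because the group coordinate is trivial. Combining this with compatibility and Lemma~\ref{cr:lem:sandwich-ideal}: if $a\tau b$ with $\operatorname{supp}(a),\operatorname{supp}(b)\ge\gamma$ and $u\in S_\gamma$, then $au\tau bu$ both lie in $S_\gamma$, so $au=bu$, and dually $ua=ub$. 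This is the trivial-defect statement of Corollary~\ref{cr:cor:trivial-defect}.

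\emph{Step 2 (all arities).} Suppose $s_i\tau t_i$ for $1\le i\le d$, the two tuples share an accepting context $E$, and some $F$ has $r:=F[\mathbf s]\in P$. I must show $r':=F[\mathbf t]\in P$. Compatibility gives $E[\mathbf s]\tau E[\mathbf t]$, and unary lifting gives $E[\mathbf s]=E[\mathbf t]$. For $F$, let $\gamma=\operatorname{supp}(r)$, so $S_\gamma$ is accepting. Since $r'\tau r$, compatibility yields $rr'r\tau rrr=r$, and $rr'r\in S_\gamma$ by (CR1). Step 1 then gives $rr'r=r$, hence $\gamma\le\operatorname{supp}(r')$. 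If I also know $\operatorname{supp}(t_i)\ge\gamma$ for every $i$, then I can swap coordinates one at a time inside $F$. Each swap multiplies an element of $S_\gamma$ by the related pair $(s_i,t_i)$, and by the end of Step 1 it leaves the value unchanged. This yields $r'=r$.

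\emph{Main obstacle: support matching.} The genuinely delicate step is showing that the $t$-side lands in the same component, i.e. $\operatorname{supp}(t_i)\ge\gamma$, or equivalently $\operatorname{supp}(r')=\gamma$. The bound $\operatorname{supp}(r')\ge\gamma$ from Step 2 is only one direction. My first attempt will use the symmetric identity $r'rr'\tau r'$ together with the known equality $E[\mathbf s]=E[\mathbf t]$, aiming to show that $x\tau y$, combined with a shared acceptance, forces $xyx=x$ and $yxy=y$, which in a band means $x\mathcal D y$. Applied coordinatewise to $(s_i,t_i)$ via the wrapped shared context $E$, this would give $\operatorname{supp}(s_i)\ge\operatorname{supp}(p)$ and $\operatorname{supp}(t_i)\ge\operatorname{supp}(p)$. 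I would then need to upgrade this from $\operatorname{supp}(p)$ to $\gamma$.

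If that upgrade fails, the fallback is to run the argument with respect to $\gamma\wedge\operatorname{supp}(r')$. Here I would choose an accepting element of that lower component via Lemma~\ref{cr:lem:sandwich-ideal}, and the goal is to derive $r=r'$ from Step 1 applied to $r\tau r'$ after sandwiching by that element.
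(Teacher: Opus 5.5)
\textbf{There is a genuine gap.} Your reduction to $1$-safe compatible tolerances is fine, and so are the unary use of Lemma~\ref{cr:lem:syntactic-lifting}, Step~1, and the equality $E[\mathbf s]=E[\mathbf t]$. But the proof is not finished.

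First, the step you call the main obstacle is never established: both the ``attempt'' and the ``fallback'' remain aims.

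Second, the one support inequality you assert in Step~2 is circular. By \textup{(CR1)}, $rr'r$ lies in $S_{\gamma\delta}$ with $\delta=\operatorname{supp}(r')$, not in $S_\gamma$. These coincide only if $\gamma\le\delta$, which is exactly what you then deduce from it. Step~1 as you stated it also needs both elements in one accepting component, so it cannot be applied to $(rr'r,r)$ without that information.

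Third, the planned ``upgrade from $\operatorname{supp}(p)$ to $\gamma$'' cannot work as framed. The bounds $\operatorname{supp}(s_i),\operatorname{supp}(t_i)\ge\operatorname{supp}(p)$ come for free from support bookkeeping. They say nothing about $\gamma=\operatorname{supp}(F[\mathbf s])$, which need not be comparable with $\operatorname{supp}(p)$.

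\textbf{The missing idea.} You never need to analyse the distinguishing context $F$: a shared accepting context already forces $\mathbf s=\mathbf t$.
\begin{enumerate}
\item Put $p:=E[\mathbf s]=E[\mathbf t]\in P$ and $\beta:=\operatorname{supp}(p)$. Support bookkeeping, applied to both fillings (which are the same element $p$), gives $\beta\le\operatorname{supp}(s_i)$ and $\beta\le\operatorname{supp}(t_i)$ directly. No $\mathcal D$-argument is needed.
\item Use the absorption from Step~1 for every $x$ with $\operatorname{supp}(x)\ge\beta$, not only inside $S_\beta$. For such $x$ we have $pxp\in S_\beta$, so rectangularity and idempotence give $pxp=p(pxp)p=p$.
\item Hence $ps_ip=p=pt_ip$. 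So $p\blank p$ is a shared accepting unary context for each related pair $s_i\tau t_i$, and unary lifting gives $s_i=t_i$.
\end{enumerate}
The two tuples are therefore identical and can never form an unsafe pair at any arity. This is exactly the paper's argument. It needs no coordinate swapping, no comparison of components, and no defect calculus.
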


\begin{proof}
Write
\(S=\bigsqcup_{\alpha\in Y}S_\alpha\)
for the standard semilattice decomposition of the band into rectangular
bands, so
\(S_\alpha S_\beta\subseteq S_{\alpha\beta}\).
We first record the elementary absorption that drives the proof.  If
\(a\in S_\alpha\), \(p\in S_\beta\), and \(\beta\le\alpha\), then
\(pap\in S_\beta\).  Since \(S_\beta\) is a rectangular band,
\(p(pap)p=p.\)
But associativity and idempotence of \(p\) make the left side equal to
\(pap\).  Hence
\[
 pap=p.
\tag{B}
\]

Let \(\rho:S\relto M\) be unary-separating and write
\(x\tau y\) when \(\rho(x)\cap\rho(y)\ne\varnothing\).  Suppose two
coordinatewise \(\tau\)-related \(d\)-tuples share an accepting tuple context.
Compatibility gives two \(\tau\)-related accepted filled values.  Unary
separation and syntactic distinguishability force those two values to be one
and the same element \(p\in P\).  Put
\(\beta=\operatorname{supp}(p)\).  Support multiplication in the semilattice
shows that \(\beta\) lies below the support of every tuple coordinate.  Thus
\textup{(B)} gives, for every coordinate pair \(s_i\tau t_i\),
\(p s_i p=p=p t_i p.\)
So \(s_i\) and \(t_i\) share the accepting unary context
\(p\blank_1p\).  If they were distinct, syntacticity would supply another
unary context distinguishing them; hence they would form an unsafe unary pair
collapsed by \(\rho\), contrary to unary separation.  Therefore
\(s_i=t_i\) for every \(i\).  The two tuples are identical and cannot be
unsafe at any arity.  Thus \(\rho\) is separating at every finite arity.
\end{proof}

\subsection{Kernel group languages: arity two detects inner conjugation}

Let \(\eta:\Sigma^*\twoheadrightarrow G\) be surjective and put \(K_\eta:=\eta^{-1}(\{1_G\}).\) \begin{theorem}[Center-index theorem]
\label{group:thm:center-index}
The syntactic monoid of \(K_\eta\) is \(G\), and
\(
\cmp_1(K_\eta)=1,
\qquad
\cmp_f(K_\eta)=[G:Z(G)]=|\operatorname{Inn}(G)|
\quad(f\ge2).
\)
\end{theorem}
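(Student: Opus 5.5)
The plan is to compute everything on the pointed syntactic monoid and then use Theorem~\ref{group:thm:quotient-collapse} to reduce the higher-arity optimization to a search over quotients \(G/N\).

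\textbf{Syntactic monoid and the unary value.} First I identify the pointed syntactic monoid as \((G,\{1_G\})\). Suppose \(\eta(u)\ne\eta(v)\). The context \(x\blank\) with \(\eta(x)=\eta(u)^{-1}\) accepts \(u\) and rejects \(v\), so \(u\not\equiv_{K_\eta}v\). Conversely, \(\eta\) recognizes \(K_\eta\). Hence \(T=G\) and \(P=\{1\}\).

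At arity one, \(\Phi_1(g)=\{(q_0,q_1):q_0gq_1=1\}\). A common profile for \(g\) and \(h\) would give \(q_0gq_1=q_0hq_1\), and cancellation in the group forces \(g=h\). So \(\mathcal U_1(G,P)=\varnothing\), and the trivial monoid is a unary witness. This gives \(\cmp_1(K_\eta)=1\).

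\textbf{Describing unsafe tuples at arity \(d\).} Write \(\mathbf s=(s_1,\dots,s_d)\) and \(\mathbf t=(t_1,\dots,t_d)\). The key observation I would prove is this. Suppose \(s_i=t_iz_i\) with every \(z_i\in Z(G)\) and \(z_1\cdots z_d=1\). Then \(E[\mathbf s]=E[\mathbf t]\) for every context \(E\), because central factors can be moved to the end of the product. So the two profiles are equal, and such a pair is never unsafe.

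Conversely, suppose that \(s_i=t_iz_i\) with all \(z_i\) central and that the pair shares an accepting context. Then \(E[\mathbf s]=E[\mathbf t]\cdot z_1\cdots z_d\), and both sides equal \(1\), so \(z_1\cdots z_d=1\). By the observation, the profiles are equal.

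\textbf{Upper bound.} By the previous step, the quotient \(G\to G/Z(G)\) separates every unsafe pair at every arity. If an unsafe pair were collapsed in every coordinate, then all \(z_i:=t_i^{-1}s_i\) would be central. Such a pair shares a context by the definition of unsafe, so it would not be unsafe after all. Composing \(\eta\) with this quotient gives a compression map with \([G:Z(G)]\) values. Therefore \(\cmp_f(K_\eta)\le[G:Z(G)]\) for all \(f\).

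\textbf{Lower bound at arity two.} By Theorem~\ref{group:thm:quotient-collapse}, an optimal witness can be taken to be a quotient \(G\to G/N\) with \(N\trianglelefteq G\). It therefore suffices to show that binary safety forces \(N\le Z(G)\).

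Take \(n\in N\) and compare the tuples \((n,n^{-1})\) and \((1,1)\). The context \(\blank_1\blank_2\) accepts both. The middle context \(\blank_1 g\blank_2\) yields \(ngn^{-1}\) and \(g\) respectively. If \(n\notin Z(G)\), some \(g\) makes exactly one of these equal to \(1\). Concretely, take the context \(g^{-1}\blank_1 g\blank_2\) when \(ngn^{-1}\ne g\); it gives \(g^{-1}ngn^{-1}\ne 1\) for the first tuple and \(1\) for the second. So the pair is unsafe, yet it is collapsed coordinatewise modulo \(N\), which contradicts safety. Hence \(N\le Z(G)\), so \(|G/N|\ge[G:Z(G)]\), and monotonicity in \(f\) gives the bound for every \(f\ge2\). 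Finally, \([G:Z(G)]=|\operatorname{Inn}(G)|\) is the standard isomorphism \(G/Z(G)\cong\operatorname{Inn}(G)\).

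\textbf{Main obstacle.} The delicate point is the lower bound: a direct chromatic count only gives roughly \(m^2\ge[G:Z(G)]\), which is too weak. The argument above avoids this by reducing to quotients via Theorem~\ref{group:thm:quotient-collapse}, and I would check carefully that its hypotheses (syntactic monoid a group) apply here.
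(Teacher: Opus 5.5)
Your proposal is correct and follows essentially the same route as the paper: you identify the pointed syntactic monoid as $(G,\{1_G\})$, use cancellation to rule out unary conflicts, obtain the upper bound from $G\twoheadrightarrow G/Z(G)$ by moving central defects past the other factors, and get the lower bound by reducing to quotients $G/N$ via Theorem~\ref{group:thm:quotient-collapse} and testing $(1,1)$ against $(n,n^{-1})$ in a commutator context. The remaining differences are cosmetic: you prepend rather than append the inverse word, and you use the context $g^{-1}\blank_1 g\blank_2$ where the paper uses $\blank_1 u\blank_2 v$.
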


\begin{proof}
Distinct group elements are syntactically distinguishable. If \(g\neq h\),
choose a word \(w_g\in\Sigma^*\) with \(\eta(w_g)=g^{-1}\). Appending
\(w_g\) accepts any word of syntactic value \(g\) and rejects any word of
syntactic value \(h\). Hence the syntactic monoid is \(G\).

At arity one, if \(s,t\in G\) share an accepting context, then for some
\(p,q\in G\), \(psq=ptq=1_G\). Cancellation gives \(s=t\). Thus there
are no distinct unsafe unary pairs and \(\cmp_1(K_\eta)=1\).

For the upper bound at every \(f\ge2\), use \(G\twoheadrightarrow G/Z(G)\).
Suppose \(t_i=s_i z_i\) with \(z_i\in Z(G)\). In any tuple context all
\(z_i\) may be moved past the fixed factors and tuple components, so the group
value of the second filling is the value of the first multiplied by
\(z_1\cdots z_d\). If one context accepts both tuples, then
\(z_1\cdots z_d=1_G\). The two fillings therefore have the same group value
in every tuple context. Hence \(\cmp_f(K_\eta)\le [G:Z(G)].\) For the lower bound, by Theorem~\ref{group:thm:quotient-collapse} it suffices
to consider a quotient \(G\to G/N\) witnessing arity two. If
\(g\in N\setminus Z(G)\), choose \(k\in G\) with \(gk\neq kg\), and choose
words \(u,v\in\Sigma^*\) satisfying
\(\eta(u)=k, \qquad \eta(v)=k^{-1}.\)
The tuples
\((1_G,1_G), \qquad (g,g^{-1})\)
are coordinatewise equal modulo \(N\) and share the accepting adjacent-hole
context. But the tuple context
\(F=\blank_1\,u\,\blank_2\,v\)
has syntactic value \(1_G\) on the first tuple and \(gkg^{-1}k^{-1}\neq1_G\) on the second. Thus the tuples are unsafe, a contradiction. Hence
\(N\le Z(G)\), so every witnessing quotient has size at least
\([G:Z(G)]\). Equality follows.
\end{proof}

\begin{corollary}[Reduction of the height-one boundary]
\label{height:cor:height-one-reduction}
Let \(\mathbf V\) be a pseudovariety of finite monoids.
\begin{enumerate}[label=\textup{(\roman*)},leftmargin=*]
\item If \(\operatorname{ch}(\mathbf V)<\infty\), then
\(
 \mathbf V\subseteq\mathbf{Com}
 \text{ or }
 \mathbf V\subseteq\mathbf{CR}.
\)
\item If \(\mathbf V\subseteq\mathbf{CR}\) and \(\mathbf V\) contains a
nonabelian finite group, then \(\operatorname{ch}(\mathbf V)=2\).
\end{enumerate}
Consequently, apart from the already settled commutative case, the unresolved
height-one classification is confined to pseudovarieties
\(\mathbf V\subseteq\mathbf{CR}\) whose group members are all abelian.
\end{corollary}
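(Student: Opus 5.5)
For part (i) I would argue by contraposition through Theorem~\ref{global:thm:trichotomy}. If \(\operatorname{ch}(\mathbf V)<\infty\), then the characterization (G) gives \(M_{ab}\notin\mathbf V\). Equivalently \(\mathbf V_{ab}\not\subseteq\mathbf V\). The structural dichotomy, Theorem~\ref{global:thm:structural}, then gives \(\mathbf V\subseteq\mathbf{Com}\) or \(\mathbf V\subseteq\mathbf{CR}\). No new ingredient is needed here.

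For part (ii), the upper bound \(\operatorname{ch}(\mathbf V)\le2\) follows from Lemma~\ref{ah:lem:monotone} together with Corollary~\ref{cr:cor:height-upper}. For the lower bound I must show \(\operatorname{ch}(\mathbf V)\ne1\). That means producing one regular language \(L\) with \(T_L\in\mathbf V\) and \(\cmp_2(L)>\cmp_1(L)\).

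Take a nonabelian finite group \(G\in\mathbf V\), a finite alphabet \(\Sigma\) and a surjection \(\eta:\Sigma^*\twoheadrightarrow G\), for instance with one letter per element of \(G\). Form the kernel language \(K_\eta=\eta^{-1}(1_G)\). By Theorem~\ref{group:thm:center-index}, its syntactic monoid is exactly \(G\), so \(T_{K_\eta}\in\mathbf V\). The same theorem gives \(\cmp_1(K_\eta)=1\) and \(\cmp_2(K_\eta)=[G:Z(G)]\). Since \(G\) is nonabelian, \(Z(G)\ne G\), so \([G:Z(G)]\ge2\). Hence arity one does not stabilize uniformly on \(\mathbf V\), and \(\operatorname{ch}(\mathbf V)\ge2\).

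For the final ``consequently'' sentence, part (i) together with the trichotomy leaves only two classes with possibly unsettled height. The first is \(\mathbf V\subseteq\mathbf{Com}\), where height one is already known from Theorem~\ref{group:thm:commutative-collapse}. The second is \(\mathbf V\subseteq\mathbf{CR}\). By part (ii), any \(\mathbf V\subseteq\mathbf{CR}\) containing a nonabelian group has height exactly \(2\). So the only undecided case is \(\mathbf V\subseteq\mathbf{CR}\) with every group member abelian.

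There is no serious obstacle. The one point needing care is in (ii): the alphabet in the definition of \(\operatorname{ch}\) is arbitrary, so a generating alphabet for \(G\) is allowed, and the syntactic monoid of \(K_\eta\) is \(G\) itself rather than a proper divisor. Both facts are supplied by the center-index theorem.
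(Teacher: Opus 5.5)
Your proposal is correct and follows the paper's proof essentially step for step. Part (i) goes from finite height through characterization (G) of Theorem~\ref{global:thm:trichotomy} to the structural dichotomy of Theorem~\ref{global:thm:structural}. Part (ii) combines the completely regular height bound (via monotonicity) with the kernel language $K_\eta$ and the center-index theorem, which gives $\cmp_1(K_\eta)=1<[G:Z(G)]=\cmp_2(K_\eta)$. Your explicit derivation of the closing ``consequently'' sentence, from parts (i), (ii) and commutative collapse, spells out what the paper leaves implicit and needs nothing further.
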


\begin{proof}
For \textup{(i)}, finite height and
Theorem~\ref{global:thm:trichotomy} imply \(M_{ab}\notin\mathbf V\), so
Theorem~\ref{global:thm:structural} gives the stated dichotomy.
For \textup{(ii)}, Corollary~\ref{cr:cor:height-upper} gives the upper bound
\(\operatorname{ch}(\mathbf V)\le2\).  Let \(G\in\mathbf V\) be a
nonabelian finite group and choose a finite alphabet with a surjection
\(\eta:\Sigma^*\twoheadrightarrow G\).  The kernel language \(K_\eta\)
has syntactic monoid \(G\), while
Theorem~\ref{group:thm:center-index} gives
\(
 \cmp_1(K_\eta)=1< [G:Z(G)]=\cmp_2(K_\eta),
\)
because a nonabelian group has proper center.  Thus height one is impossible,
and the upper bound is sharp.
\end{proof}

\subsection{Height sharpness and classical benchmarks}

\begin{corollary}[The pseudovariety of bands has height one]
\label{band:cor:height-one}
Let \(\mathbf B\) be the pseudovariety of all finite bands.  Then
\(\operatorname{ch}(\mathbf B)=1.\)
In particular, height one does not imply commutativity.  The canonical
minimal noncommutative one-sided band pseudovarieties
\(\langle U_2\rangle\) and \(\langle U_2^{\mathrm{op}}\rangle\) are
contained in \(\mathbf B\) and therefore also have height one.
\end{corollary}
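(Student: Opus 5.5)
The plan is to read $\operatorname{ch}(\mathbf B)=1$ directly off Definition~\ref{ah:def:height}. Take any regular language $L$ with $T_L\in\mathbf B$; then $T_L$ is a finite band. Theorem~\ref{band:thm:unary-saturation} gives $\cmp_f(L)=\cmp_1(L)$ for every $f\ge1$. So $r=1$ lies in the set whose minimum defines $\operatorname{ch}(\mathbf B)$. Since the minimum is taken over $\mathbb N_{\ge1}$, we conclude $\operatorname{ch}(\mathbf B)=1$. No uniformity problem arises: the band theorem holds for every band syntactic monoid, with no dependence on the alphabet or the size of $T_L$.

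For the claim that height one does not imply commutativity, it suffices to exhibit one noncommutative finite band in $\mathbf B$. I will use $U_2$, the two-element left-zero semigroup $\{x,y\}$ with $xy=x$ and $yx=y$, with an identity adjoined. It is idempotent and satisfies $xy\ne yx$. Hence $\mathbf B\not\subseteq\mathbf{Com}$, while $\operatorname{ch}(\mathbf B)=1$.

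For the last sentence, note that $U_2$ and $U_2^{\mathrm{op}}$ are finite bands, so $\langle U_2\rangle,\langle U_2^{\mathrm{op}}\rangle\subseteq\mathbf B$. Lemma~\ref{ah:lem:monotone} then gives both heights at most $\operatorname{ch}(\mathbf B)=1$, and heights are at least $1$ by definition. Both pseudovarieties are noncommutative, so they are further examples of height one without commutativity. The expected obstacle is only bookkeeping, namely checking that $U_2$ is genuinely a band with identity. All the substantive work is already contained in Theorem~\ref{band:thm:unary-saturation}.
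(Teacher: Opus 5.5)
Your proof is correct and follows the paper's argument. As there, the height is read off Theorem~\ref{band:thm:unary-saturation} through Definition~\ref{ah:def:height}, a noncommutative band such as $U_2$ shows that height one does not force commutativity, and containment in $\mathbf B$ (your appeal to Lemma~\ref{ah:lem:monotone}) settles $\langle U_2\rangle$ and $\langle U_2^{\mathrm{op}}\rangle$. One harmless difference: the paper takes $U_2=\operatorname{Synt}(A^*a)$, which is the right-zero monoid with $xy=y$, so your left-zero monoid is what the paper calls $U_2^{\mathrm{op}}$; this changes nothing because you treat both symmetrically.
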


\begin{proof}
The height statement is Theorem~\ref{band:thm:unary-saturation}.  The
three-element monoid \(U_2=\operatorname{Synt}(A^*a)\) is a noncommutative
band, and its opposite is the syntactic monoid of \(aA^*\)
\cite{MargolisPin1984}.
\end{proof}

\begin{corollary}[Sharpness of the completely-regular bound]
\label{cr:cor:height}
For the pseudovariety $\mathbf{CR}$ of all finite completely regular monoids,
\(
\operatorname{ch}(\mathbf{CR})=2.
\)
\end{corollary}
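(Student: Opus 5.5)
The statement follows by combining two earlier results: an upper bound from the saturation theorem and a lower bound from a nonabelian group kernel. For the upper bound, Corollary~\ref{cr:cor:height-upper} already gives \(\operatorname{ch}(\mathbf{CR})\le2\). Unwinding it, Theorem~\ref{cr:thm:arity-two-saturation} shows that every regular language \(L\) with completely regular syntactic monoid satisfies \(\cmp_f(L)=\cmp_2(L)\) for all \(f\ge2\). Hence the arity \(r=2\) lies in the set whose minimum defines \(\operatorname{ch}(\mathbf{CR})\).

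For the lower bound it suffices to exhibit one regular language \(L\) with \(T_L\in\mathbf{CR}\) and \(\cmp_1(L)<\cmp_2(L)\); this rules out \(r=1\). Every finite group is completely regular, since each element lies in a subgroup. So I would take the smallest nonabelian group \(G=S_3\) and a surjection \(\eta:\Sigma^*\twoheadrightarrow G\), for instance with \(\Sigma\) consisting of two letters mapped to a transposition and a \(3\)-cycle. Consider the kernel language \(K_\eta=\eta^{-1}(1_G)\). By Theorem~\ref{group:thm:center-index}, its syntactic monoid is \(G\in\mathbf{CR}\), and
\[
\cmp_1(K_\eta)=1<6=[S_3:Z(S_3)]=\cmp_2(K_\eta),
\]
because \(Z(S_3)\) is trivial. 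Consequently no uniform stabilization at arity one is possible, and \(\operatorname{ch}(\mathbf{CR})\ge2\).

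Alternatively, the whole statement is an immediate instance of Corollary~\ref{height:cor:height-one-reduction}(ii): \(\mathbf{CR}\subseteq\mathbf{CR}\) trivially, and \(\mathbf{CR}\) contains the nonabelian group \(S_3\). There is no real obstacle here. The only point to check is that the pseudovariety of all finite completely regular monoids indeed contains finite groups, which holds by definition. All the substantive work was already done in the saturation theorem and in the center-index computation.
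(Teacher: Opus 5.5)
Your proposal is correct and follows the paper's proof. The upper bound comes from Theorem~\ref{cr:thm:arity-two-saturation} (via Corollary~\ref{cr:cor:height-upper}). The lower bound comes from the $S_3$ kernel language, where Theorem~\ref{group:thm:center-index} gives $\cmp_1=1<6=\cmp_2$ because $Z(S_3)$ is trivial.

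Your alternative via Corollary~\ref{height:cor:height-one-reduction}(ii) is also valid and not circular. That corollary is proved earlier in the paper from exactly these two ingredients.
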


\begin{proof}
Theorem~\ref{cr:thm:arity-two-saturation} gives the upper bound. The
pseudovariety $\mathbf{CR}$ contains nonabelian finite groups.  For example,
$Z(S_3)=1$, so the kernel language for the syntactic group $S_3$ has
\(\cmp_1=1, \qquad \cmp_f=6\quad(f\ge2)\)
by the center-index theorem.  Thus a strict arity-$1$ to arity-$2$ jump occurs,
and the height is exactly two.
\end{proof}

\subsection{Benchmark pseudovarieties and minimal obstructions}

\begin{corollary}[Benchmark pseudovarieties]
\label{global:cor:benchmarks}
With the standard notation \(\mathbf G\) for all finite groups,
\(\mathbf A\) for all finite aperiodic monoids, and
\(\mathbf J,\mathbf R,\mathbf L\) for the \(\mathcal J\)-,
\(\mathcal R\)-, and \(\mathcal L\)-trivial pseudovarieties, the following
values hold:
\[
\begin{array}{c|cccccccc}
\mathbf V
&\mathbf{Com}&\mathbf B&\mathbf G&\mathbf{CR}
&\mathbf A&\mathbf J&\mathbf R&\mathbf L\\
\hline
\operatorname{ch}(\mathbf V)
&1&1&2&2&\infty&\infty&\infty&\infty
\end{array}
\]
\end{corollary}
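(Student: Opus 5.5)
The table is assembled column by column from results already proved; no new construction is needed. For the finite values I would invoke the pseudovariety-level corollaries directly. For \(\mathbf{Com}\), Corollary~\ref{height:cor:core-bounds} gives height \(1\). For \(\mathbf B\), Corollary~\ref{band:cor:height-one} gives height \(1\). For \(\mathbf{CR}\), Corollary~\ref{cr:cor:height} gives height \(2\).

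For \(\mathbf G\), I would first note \(\mathbf G\subseteq\mathbf{CR}\), since every finite group is completely regular. Monotonicity (Lemma~\ref{ah:lem:monotone}) together with Corollary~\ref{cr:cor:height-upper} then gives \(\operatorname{ch}(\mathbf G)\le2\). Because \(\mathbf G\) contains the nonabelian group \(S_3\), Corollary~\ref{height:cor:height-one-reduction}(ii) gives exactly \(2\). Equivalently, one can use the kernel language of \(S_3\) with Theorem~\ref{group:thm:center-index}, which gives \(\cmp_1=1<6=\cmp_2\).

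For the four infinite entries I would apply the criterion (G) of Theorem~\ref{global:thm:trichotomy}: it suffices to check \(M_{ab}\in\mathbf V\) for \(\mathbf V\in\{\mathbf A,\mathbf J,\mathbf R,\mathbf L\}\). Since \(\mathbf J\subseteq\mathbf R\cap\mathbf L\subseteq\mathbf A\), monotonicity reduces everything to showing \(M_{ab}\in\mathbf J\). This in turn requires checking that \(M_{ab}=\{1,a,b,ab,0\}\) is \(\mathcal J\)-trivial, which I would do by computing principal two-sided ideals:
\[
M_{ab}1M_{ab}=M_{ab},\quad M_{ab}aM_{ab}=\{a,ab,0\},\quad M_{ab}bM_{ab}=\{b,ab,0\},\quad M_{ab}abM_{ab}=\{ab,0\},\quad \{0\}.
\]
These are pairwise distinct, so \(\mathcal J\) is trivial. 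Aperiodicity is also directly visible: \(a^2=b^2=0\) and \(ab\cdot ab=0\).

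The only point requiring care is this membership check, specifically getting the inclusion chain \(\mathbf J\subseteq\mathbf R,\mathbf L\subseteq\mathbf A\) right so that one computation covers all four columns. Everything else is a direct citation.
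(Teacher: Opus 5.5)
Your proposal is correct and follows the paper's proof. The finite entries are cited from commutative collapse, band saturation, completely regular saturation, and the center-index theorem applied to a nonabelian group kernel such as \(S_3\). The four infinite entries follow from criterion (G) once \(M_{ab}\) is shown to be \(\mathcal J\)-trivial, hence in \(\mathbf J\subseteq\mathbf R\cap\mathbf L\) and in \(\mathbf A\); your explicit list of principal ideals spells out the justification the paper gives only parenthetically.
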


\begin{proof}
The first value is the commutative collapse and the second is
Corollary~\ref{band:cor:height-one}.  Since groups and completely regular
monoids have height at most two, while any nonabelian finite group kernel has
\(\cmp_1=1<\cmp_2\) by Theorem~\ref{group:thm:center-index}, both
\(\mathbf G\) and \(\mathbf{CR}\) have height exactly two.

The monoid \(M_{ab}\) is aperiodic and \(\mathcal J\)-trivial (its principal
two-sided ideals are all distinct).  Hence
\(M_{ab}\in\mathbf A\cap\mathbf J\), and
\(\mathbf J\subseteq\mathbf R\cap\mathbf L\).  The infinite values therefore
follow from Theorem~\ref{global:thm:trichotomy}.
\end{proof}

The benchmark table also locates the classical Margolis--Pin minimal
noncommutative frontier~\cite{MargolisPin1984}: the two one-sided band types
have height one, the noncommutative group type has height two, and
\(\langle M_{ab}\rangle\) has infinite height.  Thus all three values already
occur on that minimal noncommutative frontier.

\section{Graph Realization and Computational Complexity}
\label{sec:graph-universality}

The arity-one framework began with the contextual-conflict graph \(\Gamma_L\).
We now show that this graph geometry is universal: every nonempty finite simple graph is
realized as an induced subgraph on designated vertex symbols of a finite
length-three language, while every other conflict-graph vertex is isolated.
The resulting equality with chromatic number is therefore a representation
theorem first and an NP-hardness reduction second.

\subsection{Graph-language construction}

Let \(G=(V,E)\) be a finite simple graph.

For every vertex \(v\in V\), introduce a terminal \(x_v.\) For every edge \(e\in E\), introduce two terminals \(\ell_e,\qquad r_e.\) For every vertex \(v\in V\), introduce two additional private terminals \(p_v,\qquad q_v.\) All these terminals are pairwise distinct.

For an edge \(e=\{u,v\},\) put the two edge words
\(\ell_e x_u r_e, \qquad \ell_e x_v r_e\)
into the language.
For every vertex \(v\), also put the private word \(p_vx_vq_v\) into the language.

Thus
\(
L_G
:=
\{
\ell_ex_ur_e,\ell_ex_vr_e:
e=\{u,v\}\in E
\}
\cup
\{
p_vx_vq_v:v\in V
\}.
\)
Every word of \(L_G\) has length exactly three, and \(|L_G|=2|E|+|V|.\) 

\subsection{Classification of unsafe factors}

\begin{lemma}[Classification of shared unary contexts]
\label{graph:lem:factor-classification}
If two factors $s,t$ share an accepting unary context for $L_G$, then
$|s|=|t|\le3$.  Moreover:
\begin{enumerate}[label=\textup{(\roman*)},leftmargin=*]
\item lengths $0$ and $3$ create no distinct unsafe pair;
\item distinct length-two factors that share an accepting context have equal
unary distributions;
\item the length-one distributions are
\[
\D_{L_G}(x_v)=\{(\ell_e,r_e):e\ni v\}\cup\{(p_v,q_v)\},
\]
\[
\D_{L_G}(\ell_e)=\{(\varepsilon,x_ur_e),(\varepsilon,x_vr_e)\},
\qquad
\D_{L_G}(r_e)=\{(\ell_ex_u,\varepsilon),(\ell_ex_v,\varepsilon)\}
\]
for $e=\{u,v\}$, while
\(
\D_{L_G}(p_v)=\{(\varepsilon,x_vq_v)\}
\)
and
\(
\D_{L_G}(q_v)=\{(p_vx_v,\varepsilon)\}.
\)
\end{enumerate}
\end{lemma}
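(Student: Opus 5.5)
The plan is to exploit two facts about \(L_G\). Every accepted word has length exactly three, and every terminal occurs in a very restricted set of positions. For a factor \(s\) with an accepting context \((\alpha,\beta)\) we have \(|\alpha|+|s|+|\beta|=3\). A shared accepting context \((\alpha,\beta)\) of \(s\) and \(t\) therefore forces \(|s|=|t|=3-|\alpha|-|\beta|\le3\). So the first claim is immediate, and nonfactors, whose distributions are empty, never enter.

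For (i) I will treat the two lengths separately.

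\emph{Length zero.} The only length-zero factor is \(\varepsilon\), so no distinct pair exists.

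\emph{Length three.} The only accepting context of a length-three word is \((\varepsilon,\varepsilon)\). Hence any two length-three factors sharing a context both lie in \(L_G\), and both have distribution exactly \(\{(\varepsilon,\varepsilon)\}\). Their distributions are equal, so the pair is not unsafe.

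For (ii), a length-two factor is a prefix or a suffix of an accepted word. I will list them:
\begin{itemize}
\item prefixes \(\ell_ex_u\) and \(p_vx_v\);
\item suffixes \(x_ur_e\) and \(x_vq_v\).
\end{itemize}
A prefix admits only contexts of the form \((\varepsilon,c)\) with \(c\) a single letter, and a suffix only contexts \((c,\varepsilon)\). A shared accepting context therefore forces both factors to be prefixes, or both to be suffixes.

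Take two distinct prefixes sharing \((\varepsilon,c)\). Each must complete by \(c\) to a word of \(L_G\). Since \(c\) is \(r_e\) or \(q_v\), \(c\) determines the first letter: \(\ell_e\) in the first case, \(p_v\) in the second. In the \(p_v\) case the word \(p_vx_vq_v\) is unique, so the two prefixes would coincide. In the \(\ell_e\) case both prefixes are of the form \(\ell_ex_w\) with \(w\in e\), and each has distribution exactly \(\{(\varepsilon,r_e)\}\). The distributions agree.

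The suffix case is the mirror image: the context letter \(\ell_e\) or \(p_v\) pins down the last letter \(r_e\) or \(q_v\).

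The step I expect to need the most care is making this enumeration exhaustive. In particular I must confirm that \(x_u\) followed by \(r_e\) forces \(u\in e\), and that no prefix has two different completion letters. Both follow because each word of \(L_G\) is determined by its outer letters.

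For (iii), a length-one factor needs a context of total length two, and the context is determined by the letter's position in the accepted words.
\begin{itemize}
\item \(x_v\) appears only in the middle, inside the words \(\ell_ex_vr_e\) for \(e\ni v\) and in \(p_vx_vq_v\).
\item \(\ell_e\) appears only first, followed by \(x_u\) or \(x_v\) and then \(r_e\).
\item \(r_e\) appears only last, in the symmetric way.
\item \(p_v\) and \(q_v\) each occur in a single word.
\end{itemize}
Reading off the contexts from these occurrences gives the displayed sets. This is a direct inspection of \(L_G\), using that all terminals are pairwise distinct.
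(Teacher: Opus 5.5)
Your proof is correct and follows essentially the same route as the paper. Both arguments use the length count $|\alpha|+|s|+|\beta|=3$, dispose of the trivial length-$0$ and length-$3$ cases, split length two into prefixes and suffixes with the completing marker pinning down the gadget, and settle length one by direct inspection. One correction: your justification ``each word of $L_G$ is determined by its outer letters'' is false, since $\ell_e x_u r_e$ and $\ell_e x_v r_e$ share their outer letters. What you actually need, and what holds by inspection, is that each outer marker determines the opposite outer marker and the admissible middle letters; this is exactly the fact you correctly invoke when you say that $c$ determines the first letter, so nothing in the argument breaks.
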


\begin{proof}
If $psq,ptq\in L_G$, then both completed words have length three, so
$|s|=|t|\le3$ and both $s,t$ are factors of accepted words.
For length $0$ there is only $\varepsilon$, while every accepted length-three
word has distribution $\{(\varepsilon,\varepsilon)\}$, proving \textup{(i)}.

For \textup{(ii)}, the only nonprivate shared length-two pairs are, for
$e=\{u,v\}$,
\[
\D_{L_G}(\ell_ex_u)=\D_{L_G}(\ell_ex_v)=\{(\varepsilon,r_e)\},
\qquad
\D_{L_G}(x_ur_e)=\D_{L_G}(x_vr_e)=\{(\ell_e,\varepsilon)\}.
\]
The private factors $p_vx_v$ and $x_vq_v$ have the singleton distributions
$\{(\varepsilon,q_v)\}$ and $\{(p_v,\varepsilon)\}$, respectively.
Distinct gadgets cannot share their unique completing marker, and a prefix
factor cannot share a context with a suffix factor because the hole position
is fixed.  This proves \textup{(ii)}.  Finally, \textup{(iii)} follows by
listing the occurrences of each terminal in the defining words of $L_G$.
\end{proof}

\begin{lemma}[Unsafe-pair graph is exactly \(G\)]
\label{graph:lem:unsafe-exact}
The distinct unsafe pairs of \(L_G\) are exactly
\(\{x_u,x_v\} \quad\text{with}\quad \{u,v\}\in E.\)
More precisely, for distinct vertices \(u,v\), \(x_u,x_v\text{ are unsafe} \iff \{u,v\}\in E,\) and no pair involving a non-vertex terminal is unsafe.
\end{lemma}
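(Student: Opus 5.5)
The plan is to read everything off Lemma~\ref{graph:lem:factor-classification}. A distinct unsafe pair \(\{s,t\}\) is, by definition, a pair of unequal words that share an accepting unary context but have unequal distributions. Words that are not factors of accepted words have empty distribution, so they cannot share an accepting context. By the first assertion of the lemma, any two factors sharing a context have equal length at most three. Part~(i) excludes lengths \(0\) and \(3\), and part~(ii) excludes length two, since there a shared context forces equal distributions. So every distinct unsafe pair consists of two single terminals, and it only remains to compare the length-one distributions listed in part~(iii).

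The core step is a case analysis on the kinds of terminals, using the fact that a unary context \((\alpha,\beta)\) for a single letter is determined by the accepted word together with the hole position.

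\textbf{Two vertex terminals.} Take \(x_u,x_v\) with \(u\ne v\). Their distributions share an element exactly when some context \((\ell_e,r_e)\) lies in both, i.e.\ when \(e\ni u\) and \(e\ni v\), i.e.\ \(e=\{u,v\}\in E\). The private contexts \((p_u,q_u)\ne(p_v,q_v)\) are never shared. Moreover \((p_u,q_u)\in\D_{L_G}(x_u)\setminus\D_{L_G}(x_v)\), so the distributions are always unequal. Hence \(x_u,x_v\) are unsafe iff \(\{u,v\}\in E\).

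\textbf{Mixed or non-vertex terminals.} For a vertex terminal \(x_v\), every context has both sides nonempty. For \(\ell_e\) and \(p_v\), every context has empty left side; for \(r_e\) and \(q_v\), every context has empty right side. So terminals of different positional types share no context.

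Within the left-marker type, a context \((\varepsilon,x_wr_e)\) or \((\varepsilon,x_vq_v)\) records its marker in the last letter of the right side. Distinct left-marker terminals \(\ell_e,\ell_{e'}\) (\(e\ne e'\)), \(\ell_e,p_v\), or \(p_u,p_v\) therefore have disjoint distributions. The right-marker case is symmetric.

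Hence no pair involving a non-vertex terminal shares a context, so none is unsafe. Combining the cases gives exactly the edge set \(\{x_u,x_v\}\), \(\{u,v\}\in E\).

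I expect the main subtlety to be the reduction to single letters: it must be checked that words which are not factors, and factors of different lengths, cannot contribute. The first sentence of Lemma~\ref{graph:lem:factor-classification} covers this, since a shared accepting context forces both completed words to have length three.
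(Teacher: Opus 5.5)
Your proposal is correct and follows essentially the same route as the paper. Both arguments do the following:
\begin{itemize}
\item use Lemma~\ref{graph:lem:factor-classification} to dispose of lengths other than one;
\item settle vertex pairs by the shared edge context $(\ell_e,r_e)$ together with the private context $(p_u,q_u)$;
\item exclude pairs involving markers by the fixed hole position and by the private completing marker in the length-two suffix (respectively prefix).
\end{itemize}
The only cosmetic difference is ordering: you do the length reduction up front, while the paper invokes parts (i)--(ii) of that lemma at the end to exhaust the remaining lengths.
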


\begin{proof}
Suppose first that \(e=\{u,v\}\in E.\) Then the context \(\ell_e\blank_1r_e\) accepts both \(x_u\) and \(x_v\), so their distributions intersect.
However, \((p_u,q_u)\in\D_{L_G}(x_u)\) and, by privacy of the markers, \((p_u,q_u)\notin\D_{L_G}(x_v).\) Thus the distributions are different, and \(x_u,x_v\) are unsafe.

Conversely, if \(u\neq v\) and
\(\D_{L_G}(x_u)\cap\D_{L_G}(x_v)\neq\varnothing,\)
Lemma~\ref{graph:lem:factor-classification}\textup{(iii)} shows that a common context must be \((\ell_e,r_e)\) for an edge \(e\) incident with both \(u\) and \(v\). Since the graph is
simple, \(e=\{u,v\}.\) It remains to exclude pairs involving marker terminals.
Every \(x_v\) occurs only in the middle position of accepted words.
The terminals \(\ell_e,p_v\) occur only in the first position, while \(r_e,q_v\) occur only in the third position.
A shared unary context fixes the position of the hole, so no \(x_v\) can share
an accepting context with a marker terminal.

Two first-position markers share an accepting context only if the same
length-two suffix follows both. The suffixes following \(\ell_e\) end in the
private edge marker \(r_e\), whereas the suffix following \(p_v\) ends in the
private marker \(q_v\); markers belonging to distinct gadgets are pairwise
different. Hence distinct first-position markers share no accepting context.
The same argument applies to third-position markers by using their length-two
prefixes.

Together with Lemma~\ref{graph:lem:factor-classification}\textup{(i)--(ii)}, this exhausts all possible factor lengths.
\end{proof}

\begin{corollary}[Every finite graph occurs as an induced contextual-conflict geometry]
\label{graph:cor:conflict-realization}
For the language \(L_G\), the subgraph of \(\Gamma_{L_G}\) induced by the
designated vertex terminals \(\{x_v:v\in V(G)\}\) is isomorphic to
\(G\), and every other vertex of \(\Gamma_{L_G}\) is isolated.  Equivalently,
\(\Gamma_{L_G}\) is the disjoint union of this induced copy of \(G\) and an
independent set of isolated vertices.
\end{corollary}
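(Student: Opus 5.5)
This corollary follows almost directly from Lemma~\ref{graph:lem:unsafe-exact}, together with the definition of \(\Gamma_L\), whose edges are exactly the distinct unsafe unary pairs of words. The plan is to identify the full edge set of \(\Gamma_{L_G}\) and then read off both the induced copy of \(G\) and the isolation of every other vertex.

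\textbf{Step 1: edges come from factors.} If two words \(s,t\) are adjacent in \(\Gamma_{L_G}\), they share an accepting unary context. So both are factors of accepted words, and by Lemma~\ref{graph:lem:factor-classification} they have a common length at most three. Words that are not factors of accepted words have empty distribution, so they meet no other distribution and are isolated.

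\textbf{Step 2: edges at each length.} Among factors, parts (i) and (ii) of Lemma~\ref{graph:lem:factor-classification} show that lengths \(0\), \(2\) and \(3\) produce no edges: any two such factors sharing a context have equal distributions. At length one, Lemma~\ref{graph:lem:unsafe-exact} says the distinct unsafe pairs are precisely \(\{x_u,x_v\}\) with \(\{u,v\}\in E\), and no pair involving a marker terminal is unsafe.

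\textbf{Step 3: read off the graph.} Hence the edge set of \(\Gamma_{L_G}\) is exactly \(\{\{x_u,x_v\}:\{u,v\}\in E\}\). The map \(v\mapsto x_v\) is injective because the terminals are pairwise distinct. It therefore gives an isomorphism from \(G\) onto the subgraph induced on \(\{x_v:v\in V\}\), since adjacency among these vertices holds iff \(\{u,v\}\in E\). Every other vertex (marker terminals, longer factors, non-factors) lies on no edge, so it is isolated. This gives the disjoint-union description, since no edge joins the designated set to its complement.

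\textbf{Main obstacle.} The only delicate point is Step 1's claim that every word is covered by the case analysis. Words of length \(\ge4\), and non-factors generally, have empty \(\D_{L_G}\); mixed-length pairs cannot share a context because completed words all have length three. I expect both of these to be immediate from the first sentence of Lemma~\ref{graph:lem:factor-classification}.
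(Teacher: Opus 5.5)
Your proposal is correct and matches the paper's proof: both read the edge set of \(\Gamma_{L_G}\) directly off Lemma~\ref{graph:lem:unsafe-exact}, with non-factors isolated because their distributions are empty. The paper also notes that the \(x_v\) lie in pairwise distinct syntactic classes, via the private contexts \((p_v,q_v)\), so \(G\) embeds as an induced subgraph of \(\Om_1(L_G)\) as well; this goes beyond the stated claim and is not needed for it.
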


\begin{proof}
Lemma~\ref{graph:lem:unsafe-exact} identifies all unsafe pairs between factors
of accepted words.  Words that are not factors of accepted words have empty
distribution and hence participate in no unsafe pair.  The claim follows from
the definition of \(\Gamma_L\) in Section~\ref{sec:conflict}.

Moreover, the designated words \(x_v\) lie in pairwise distinct syntactic
classes, since the private context \((p_v,q_v)\) belongs to
\(\D_{L_G}(x_v)\) and to no \(\D_{L_G}(x_u)\) with \(u\ne v\).
Hence Corollary~\ref{conf:cor:omega-syntactic} also realizes \(G\) as an
induced subgraph of the unary principal-overlap graph \(\Om_1(L_G)\).
\end{proof}

\subsection{Exact realization of chromatic number}

The significance of the construction is the exact identity with chromatic
number, not merely the resulting hardness statement. In particular, the
compression classes of the distinguished length-one factors are exactly graph
color classes; the remaining factors are arranged so that they create no
additional unsafe identifications. The alphabet grows with the graph. The
corresponding fixed-binary-alphabet problem is left open.

\begin{lemma}[Every compression map induces a proper coloring]
\label{graph:lem:lower-coloring}
If \(h:\Sigma_G^*\to M\) witnesses unary substitutability of \(L_G\), then \(c_h:V\to\operatorname{im}(h), \qquad c_h(v):=h(x_v),\) is a proper coloring of \(G\).
Consequently
\(
\chi(G)\le|\operatorname{im}(h)|.
\)
\end{lemma}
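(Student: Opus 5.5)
The plan is to verify directly that \(c_h\) is a proper coloring and then count colors.

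First, fix an edge \(\{u,v\}\in E\). By Lemma~\ref{graph:lem:unsafe-exact}, the pair \(x_u,x_v\) is a distinct unsafe unary pair of \(L_G\); that is, \((x_u,x_v)\in\mathfrak U_1(L_G)\). Concretely, the context \(\ell_e\blank_1r_e\) is shared, while the private context \((p_u,q_u)\) accepts only \(x_u\).

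Next, since \(h\) witnesses unary substitutability, the semantic-clause interface \textup{(iii)} of the Safety Interface Theorem~\ref{alg:thm:safety-interface} applies at \(d=1\). Equivalently, we may use the proper-coloring reformulation in Theorem~\ref{conf:thm:coloring}. Either gives \(h(x_u)\ne h(x_v)\). Hence \(c_h(u)\ne c_h(v)\) for every edge, so \(c_h\) is a proper coloring of \(G\).

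Finally, \(c_h\) takes values in \(\operatorname{im}(h)\), so \(\chi(G)\le|c_h(V)|\le|\operatorname{im}(h)|\).

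No step is a real obstacle. The only substantive input is that edges of \(G\) yield unsafe pairs, and this has already been established in Lemma~\ref{graph:lem:unsafe-exact}. The converse direction of that lemma, that non-edges are not unsafe, is not needed here; it will matter only for the matching upper bound.
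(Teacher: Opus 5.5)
Your proof is correct and follows the paper's argument exactly: each edge \(\{u,v\}\) yields the unsafe pair \(x_u,x_v\) by Lemma~\ref{graph:lem:unsafe-exact}, and a unary witness must separate every unsafe pair, so \(h(x_u)\ne h(x_v)\). The paper leaves the final count \(\chi(G)\le|c_h(V)|\le|\operatorname{im}(h)|\) implicit, and you state it explicitly.
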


\begin{proof}
For every edge \(\{u,v\}\in E,\) Lemma~\ref{graph:lem:unsafe-exact} says that \(x_u,x_v\) are unsafe.
A witnessing compression map must assign different values to every unsafe pair,
so \(h(x_u)\neq h(x_v).\) Thus \(c_h\) is proper.
\end{proof}

\begin{lemma}[Every proper coloring gives a compression map]
\label{graph:lem:upper-coloring}
If \(G\) has a proper coloring with \(q\ge2\) colors, then there is a
\(q\)-element monoid compression map witnessing unary substitutability of \(L_G\).
If \(G\) is edgeless, the one-element compression is a witness.
\end{lemma}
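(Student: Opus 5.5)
The plan is to build the compression map explicitly and to let Lemma~\ref{graph:lem:unsafe-exact} do all of the safety verification. By that lemma, together with the observation in Corollary~\ref{graph:cor:conflict-realization} that non-factors have empty distribution, the only distinct unsafe unary pairs of \(L_G\) are the pairs \(\{x_u,x_v\}\) with \(\{u,v\}\in E\). By the semantic-clause interface (iii) of Theorem~\ref{alg:thm:safety-interface}, a homomorphism \(h:\Sigma_G^*\to M\) therefore witnesses unary substitutability of \(L_G\) as soon as \(h(x_u)\ne h(x_v)\) for every edge. No constraint at all is placed on the marker terminals or on longer words.

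For the edgeless case there are no unsafe pairs, so the constant map to the trivial monoid is a witness.

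For \(q\ge2\), I would fix a proper coloring \(c:V\to\{0,1,\ldots,q-1\}\) and take the cyclic group \(M=\mathbb Z/q\mathbb Z\), viewed as a monoid. The homomorphism \(h\) is defined on generators by
\[
h(x_v)=c(v)\bmod q,\qquad h(\ell_e)=h(r_e)=h(p_v)=h(q_v)=0,
\]
and extended uniquely by freeness of \(\Sigma_G^*\). Properness of \(c\) gives \(h(x_u)\ne h(x_v)\) on every edge, because distinct residues in \(\{0,\ldots,q-1\}\) are distinct elements of \(\mathbb Z/q\mathbb Z\). Hence \(h\) is a witness, and \(|\operatorname{im}(h)|\le q\).

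If an image of exactly \(q\) elements is wanted, for instance because the colors used generate a proper subgroup, I would instead send one marker letter, say some \(\ell_e\) or \(p_v\) (one always exists since \(V\ne\varnothing\)), to the generator \(1\). This is harmless because markers lie in no unsafe pair, and it makes \(h\) surjective onto \(\mathbb Z/q\mathbb Z\).

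The only point needing care is that safety quantifies over all words, not only single letters. That completeness is exactly the content already established in Lemma~\ref{graph:lem:factor-classification} and Lemma~\ref{graph:lem:unsafe-exact}, so I expect no real obstacle here; the step to double-check is only that invoking those lemmas covers factors of length two and the marker terminals.
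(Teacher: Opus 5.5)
Your proposal is correct and follows the paper's proof: in both, the homomorphism is defined on generators by sending each \(x_v\) to its color in a \(q\)-element monoid and every marker to one fixed element, and Lemma~\ref{graph:lem:unsafe-exact} reduces safety to separating \(x_u,x_v\) across edges. The only difference is the target monoid. The paper uses \(M_q=\{1,0,c_1,\ldots,c_{q-2}\}\), in which all non-identity products equal the absorbing \(0\), and sends the markers to \(0\). Your cyclic group \(\mathbb Z/q\mathbb Z\) with markers sent to the identity works equally well, because the only constraints are on the single letters \(x_u,x_v\).
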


\begin{proof}
Assume first \(q\ge2\).
Let \(M_q=\{1,0,c_1,\ldots,c_{q-2}\}.\) Define multiplication by declaring \(1\) to be the identity, \(0\) to be
absorbing, and \(mn=0\) whenever \(m,n\neq1\).
This is an associative monoid with exactly \(q\) elements.

Choose a proper coloring \(\gamma:V\to M_q\) using the \(q\) monoid elements as color names, and define the compression map on
generators by \(h(x_v):=\gamma(v),\) while every marker terminal \(\ell_e,r_e,p_v,q_v\) is sent to \(0\).
By freeness of \(\Sigma_G^*\), this extends uniquely to a monoid homomorphism \(h:\Sigma_G^*\to M_q.\) By Lemma~\ref{graph:lem:unsafe-exact}, every unsafe pair has the form \(x_u,x_v\) for an edge \(\{u,v\}\).
Properness of \(\gamma\) gives \(h(x_u)\neq h(x_v).\) Hence no unsafe pair receives the same compression value, so \(h\) witnesses
unary substitutability.

If \(G\) is edgeless, Lemma~\ref{graph:lem:unsafe-exact} says that \(L_G\) has no
unsafe pair at all. Therefore the unique homomorphism to the one-element
monoid is a witness.
\end{proof}

\begin{theorem}[Exact graph realization]
\label{graph:thm:graph-realization}
For every nonempty finite simple graph \(G\), \(\cmp_1(L_G)=\chi(G).\) Moreover, the construction \(G\mapsto L_G\) is polynomial time, all words of
\(L_G\) have length three, and \(|L_G|=2|E|+|V|.\) \end{theorem}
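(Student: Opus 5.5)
The equality will be proved by two inequalities, each already set up by the two preceding lemmas, followed by a separate check of the complexity claims.

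For the lower bound, take any finite compression map \(h\) witnessing unary substitutability of \(L_G\). Lemma~\ref{graph:lem:lower-coloring} shows that \(v\mapsto h(x_v)\) is a proper coloring of \(G\), so \(\chi(G)\le|\operatorname{im}(h)|\). Minimizing over witnesses gives \(\chi(G)\le\cmp_1(L_G)\).

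For the upper bound, split by whether \(G\) has an edge.
\begin{itemize}
\item If \(G\) is edgeless, then \(\chi(G)=1\) because \(G\) is nonempty. The one-element compression is a witness by the second clause of Lemma~\ref{graph:lem:upper-coloring}, so \(\cmp_1(L_G)\le1\).
\item If \(G\) has an edge, then \(q:=\chi(G)\ge2\). Lemma~\ref{graph:lem:upper-coloring} supplies a witness whose codomain is the \(q\)-element monoid \(M_q\).
\end{itemize}
In the second case one must confirm that \(|\operatorname{im}(h)|\le q\), which is immediate since the image lies in \(M_q\). That suffices for the inequality \(\cmp_1(L_G)\le q\); exact surjectivity is not needed. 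Together with the lower bound this gives \(\cmp_1(L_G)=\chi(G)\).

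The remaining assertions follow directly from the construction. \(L_G\) uses \(|V|+2|E|+2|V|\) terminals. It lists \(2|E|+|V|\) words, each of length three, and each word is written down in constant time from \(G\). Hence \(G\mapsto L_G\) is polynomial time, and the counts are those recorded after the definition of \(L_G\).

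I expect no step to be a real obstacle, since the substantive work sits in Lemma~\ref{graph:lem:unsafe-exact} and the two coloring lemmas. The only points needing care are the hypothesis \(q\ge2\) in Lemma~\ref{graph:lem:upper-coloring}, which forces the separate edgeless case, and the use of nonemptiness of \(G\) to ensure \(\chi(G)\ge1\).
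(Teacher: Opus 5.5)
Your proposal is correct and matches the paper's proof. The lower bound comes from Lemma~\ref{graph:lem:lower-coloring}, and the upper bound from applying Lemma~\ref{graph:lem:upper-coloring} to a minimum proper coloring; the size and length claims are read off the construction. Your split into the edgeless case (\(\chi(G)=1\)) and the case \(\chi(G)\ge2\) just makes explicit which clause of Lemma~\ref{graph:lem:upper-coloring} the paper's one-line argument uses.
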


\begin{proof}
Lemma~\ref{graph:lem:lower-coloring} gives \(\chi(G)\le\cmp_1(L_G).\) A minimum proper coloring, together with
Lemma~\ref{graph:lem:upper-coloring}, gives \(\cmp_1(L_G)\le\chi(G).\) The construction size and word-length assertions are immediate from the
definition of \(L_G\).
\end{proof}

\begin{remark}[Sharpness of the geometric lower bound]
Theorem~\ref{graph:thm:graph-realization} shows that the general geometric
lower bound of Proposition~\ref{scl:prop:chromatic-lower} is already sharp
at unary arity.  Indeed, under the canonical unary identification,
\(\chi(\Om_1(L_G))=\chi(G)=\cmp_1(L_G).\)
Thus arbitrary finite conflict geometry can attain the compositional
compression lower bound exactly.
\end{remark}

\begin{remark}
The alphabet \(\Sigma_G\) is part of the explicit finite-language input. The
corresponding variable-threshold problem over a fixed binary alphabet remains
open; the exact realization theorem concerns the natural variable-alphabet
representation of the explicit finite language.
\end{remark}

\subsection{The finite-language decision problem}

\begin{definition}
Let \({\normalfont\textsc{Length-3-Obs}}_1(3)\) be the following decision problem.

\medskip
\noindent
\textbf{Input:} An explicitly listed finite language \(L\subseteq\Sigma^3,\) where the finite alphabet \(\Sigma\) is part of the input.

\noindent
\textbf{Question:} \(\cmp_1(L)\le3?\) \end{definition}

\begin{lemma}[Polynomial verification for fixed threshold]
\label{graph:lem:np}
The problem
\(
{\normalfont\textsc{Length-3-Obs}}_1(3)
\)
belongs to NP.
\end{lemma}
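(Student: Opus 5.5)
The certificate I would use is a homomorphism into a monoid with at most three elements. Up to isomorphism there are only finitely many such monoids, so I fix one of them, \(M\), together with values \(h(a)\in M\) for every letter \(a\in\Sigma\). By freeness, Lemma~\ref{alg:lem:selector}, these letter values determine the homomorphism \(h:\Sigma^*\to M\). The certificate therefore has size \(O(|\Sigma|)\), which is polynomial in the input. The lower bound \(1\le\cmp_1(L)\) is automatic, so \(\cmp_1(L)\le3\) holds exactly when some such certificate is a unary witness. Theorem~\ref{conf:thm:coloring} makes this condition concrete: \(h\) must be a proper coloring of \(\Gamma_L\).

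The verifier cannot check \(\Gamma_L\) directly, since it is infinite. So the key step is to reduce unsafety to a finite, polynomial-size set of candidate pairs, exactly as in the proof of Lemma~\ref{graph:lem:factor-classification}. If two words \(s,t\) share an accepting context \((p,q)\), then \(psq,ptq\in L\subseteq\Sigma^3\). Hence \(s\) and \(t\) are factors of accepted words, and \(|s|=|t|\le3\). Every other word has empty distribution and lies in no unsafe pair. There are \(O(|L|)\) factors, since each accepted word has at most ten factor occurrences. For each factor \(s\), its unary distribution \(\D_L(s)\) consists of pairs \((p,q)\) with \(|p|+|s|+|q|=3\) and \(psq\in L\). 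Each such pair comes from an occurrence of \(s\) inside a listed word, so \(|\D_L(s)|\le 4|L|\), and \(\D_L(s)\) can be computed by scanning \(L\).

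The verifier then runs through all pairs of factors \(s\ne t\). For each pair it tests whether \(\D_L(s)\cap\D_L(t)\ne\varnothing\) and \(\D_L(s)\ne\D_L(t)\). If both hold, it computes \(h(s)\) and \(h(t)\) using at most three multiplications in the fixed table of \(M\), and rejects when the two values are equal. Every step is polynomial in \(|L|+|\Sigma|\).

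The main obstacle is correctness of the finite reduction: the unsafe pairs among all words must be exactly the unsafe pairs among these factors. The length argument above settles this, because every unsafe pair consists of factors with their full distributions as computed. Note that image size at most three is not affected by choosing \(M\) with \(|M|\le3\), since \(|\operatorname{im}(h)|\le|M|\le3\). Conversely, any witness with image of size at most three can be presented by restricting its codomain to the image, which is then a monoid of size at most three. Hence the certificate exists if and only if \(\cmp_1(L)\le3\).
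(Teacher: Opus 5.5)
Your proposal is correct and follows essentially the paper's argument. The shared steps are a certificate consisting of a monoid of size at most three with a value for each letter, the observation that only the polynomially many factors of listed words have nonempty distributions (each computable by scanning $L$), and a pairwise check that no overlapping-but-unequal pair receives equal $h$-values. The only cosmetic difference is that you pick $M$ from the finite list of isomorphism types instead of putting its constant-size multiplication table into the certificate, which amounts to the same thing.
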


\begin{proof}
A certificate gives a multiplication table for a monoid \(M\) with at most
three elements, its identity, and the value \(h(a)\in M\) of every terminal.
The table is constant size, the monoid laws are checkable directly, and the
generator assignment determines a homomorphism \(h:\Sigma^*\to M\).

Only factors of listed words have nonempty unary distributions. Since all
accepted words have length three, there are polynomially many such factors.
For each factor \(x\), enumerate its complete distribution
\(\D_L(x)=\{(p,q):pxq\in L\}.\)
Then check every factor pair \(x,y\): whenever \(h(x)=h(y)\) and the two
distributions intersect, they must be equal. This is exactly unary
substitutability and is polynomial in the explicit input size.
\end{proof}

\begin{remark}[Why the NP argument is stated at a fixed threshold]
The preceding certificate argument uses that the threshold is fixed: for
\(k=3\), a multiplication table for the witness monoid has constant size.
If \(k\) is itself part of the input, an explicit multiplication table has
\(\Theta(k^2\log k)\) bits.  Without an a priori polynomial bound on the
relevant witness size in terms of the language encoding, the same argument
does not by itself place the variable-threshold problem in NP.  This issue is
included in the regular-input complexity problem below.
\end{remark}

\begin{theorem}[NP-completeness for explicitly listed length-three languages]
\label{graph:thm:npcomplete}
When the finite alphabet is part of the input, the problem
\({\normalfont\textsc{Length-3-Obs}}_1(3)\) is NP-complete.
\end{theorem}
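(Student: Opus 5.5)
Membership in NP is Lemma~\ref{graph:lem:np}, so the work is NP-hardness. The plan is a polynomial-time many-one reduction from graph $3$-colorability, which is NP-complete.

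Given a finite simple graph $G$, we output the explicitly listed language $L_G$ together with its alphabet
\[
\Sigma_G=\{x_v,p_v,q_v:v\in V\}\cup\{\ell_e,r_e:e\in E\}.
\]
By construction, every word of $L_G$ has length exactly three. The encoding has size $O(|V|+|E|)$ symbols, up to logarithmic factors for naming the terminals.

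The inputs with $V=\varnothing$ are trivial yes-instances of $3$-colorability. They can be sent to any fixed yes-instance, for example $L_{K_1}=\{p x q\}$, for which $\cmp_1=1$.

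For nonempty $G$, Theorem~\ref{graph:thm:graph-realization} gives $\cmp_1(L_G)=\chi(G)$. Therefore $\cmp_1(L_G)\le 3$ if and only if $\chi(G)\le3$, that is, if and only if $G$ is $3$-colorable. Since the map $G\mapsto L_G$ is polynomial time, this is a valid Karp reduction.

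The main point is that the direction $\cmp_1\le 3\Rightarrow 3$-colorable needs no monoid structure beyond Lemma~\ref{graph:lem:lower-coloring}. The converse direction uses the explicit $q$-element null-type monoid of Lemma~\ref{graph:lem:upper-coloring}. When $\chi(G)=1$ or $2$, the bound $\le 3$ still holds, since $\cmp_1(L_G)=\chi(G)$ exactly. There is no real obstacle remaining; the substance lies in the already proved exact realization. Combining NP membership with NP-hardness gives NP-completeness.
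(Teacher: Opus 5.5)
Your proposal is correct and follows essentially the same route as the paper. NP membership comes from Lemma~\ref{graph:lem:np}, and hardness is a Karp reduction from \textsc{3-Colorability} via $G\mapsto L_G$ together with the exact identity $\cmp_1(L_G)=\chi(G)$ of Theorem~\ref{graph:thm:graph-realization}. Your treatment of the empty graph, sending it to $L_{K_1}$, is exactly the paper's device of first adding one isolated vertex.
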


\begin{proof}
Membership in NP is Lemma~\ref{graph:lem:np}.

For NP-hardness, reduce from {\normalfont\textsc{3-Colorability}} \cite{GareyJohnson1979}.
Given a finite simple graph \(G\), if \(V(G)=\varnothing\), first add one isolated
vertex; this preserves 3-colorability.  Thus we may assume that \(G\) is nonempty
and construct \(L_G\) as above.  This construction is polynomial time and yields
\(L_G\subseteq\Sigma_G^3.\) By Theorem~\ref{graph:thm:graph-realization},
\(\cmp_1(L_G)=\chi(G).\) Therefore
\(G\text{ is \(3\)-colorable} \iff \chi(G)\le3 \iff \cmp_1(L_G)\le3.\)
Hence the reduction is correct.
\end{proof}

\subsection{Complexity from an explicit pointed monoid}

The preceding hardness is not an artifact of using a listed finite language as
the input representation.  The finite algebraic optimization itself is already
NP-complete.

\begin{definition}
Let \({\normalfont\textsc{Pointed-Obs}}_1\) be the following decision problem.

\medskip
\noindent
\textbf{Input:} A finite monoid \(T\) by its multiplication table, an accepting
subset \(P\subseteq T\), and a positive integer \(k\).

\noindent
\textbf{Question:} Is there a unary-separating relational morphism
\(\rho:T\relto M\) to some finite monoid \(M\) with \(|M|\le k\)?
\end{definition}

\begin{theorem}[NP-completeness for pointed-monoid input]
\label{graph:thm:pointed-npcomplete}
The problem \({\normalfont\textsc{Pointed-Obs}}_1\) is NP-complete.  NP-hardness already
holds for the fixed threshold \(k=3\), even when the input is promised to be a
syntactic pointed monoid of a finite length-three language.
\end{theorem}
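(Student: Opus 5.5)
The proof has two parts: membership in NP, and hardness by transporting the length-three reduction of Theorem~\ref{graph:thm:npcomplete} to pointed-monoid input.

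\textbf{Membership in NP.} Theorem~\ref{alg:thm:relational-characterization} makes the question ``is there a unary-separating relational morphism with $|M|\le k$'' equivalent to $\cmp_1\le k$ for any language with pointed syntactic monoid $(T,P)$. The identity morphism is always unary-separating, so if $k\ge|T|$ we answer yes at once. Otherwise $k<|T|$, and a certificate of size polynomial in the input is:
\begin{itemize}
\item a multiplication table for $M$ with at most $k$ elements, of size $O(|T|^2\log|T|)$ bits;
\item the graph of $\rho$ as a subset of $T\times M$.
\end{itemize}
Each required property is checkable in polynomial time:
\begin{itemize}
\item associativity and identity of $M$, by direct inspection of the table;
\item $\rho$ is a relational morphism: nonempty fibers, $1_M\in\rho(1_T)$, and closure $\rho(s)\rho(t)\subseteq\rho(st)$, which takes $O(|T|^2k^2)$ checks;
\item unary separation: compute $\mathcal U_1(T,P)$ by evaluating each profile $\Phi_1(t)\subseteq T^2$ in $O(|T|^3)$ time, then check $\rho(s)\cap\rho(t)=\varnothing$ for every unsafe pair.
\end{itemize}

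\textbf{NP-hardness for $k=3$.} Reduce from \textsc{3-Colorability} via $G\mapsto L_G$ as in Theorem~\ref{graph:thm:npcomplete}. Corollary~\ref{alg:cor:syntactic-invariance} and Theorem~\ref{alg:thm:relational-characterization} then give
\[
(T_{L_G},\eta(L_G),3)\in\textsc{Pointed-Obs}_1 \iff \cmp_1(L_G)\le3 \iff \chi(G)\le3 .
\]
This also secures the stated promise, since the produced instance is by construction the syntactic pointed monoid of a length-three language.

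\textbf{Main obstacle: computing $(T_{L_G},P)$ in polynomial time.} Since $L_G$ is finite, $T_{L_G}$ has the following explicit form.
\begin{itemize}
\item Every word that is not a factor of a word of $L_G$ has empty distribution, so all such words form a single zero class.
\item There are $O(|L_G|)$ factors of words in $L_G$, each of length at most $3$, and each has an explicitly computable distribution.
\item The syntactic classes are therefore obtained by grouping factors with equal distributions, together with the zero class, giving $|T|=O(|L_G|)$.
\item Products are computed by concatenating representatives and looking up the class of the result (zero when the result is not a factor). This is well defined because $\equiv_L$ is a congruence.
\item $P$ consists of the classes of the words of $L_G$.
\end{itemize}
All of this takes polynomial time in $|V|+|E|$. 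The step needing the most care is that distinct factors may share a class, for example $\ell_ex_u$ and $\ell_ex_v$; grouping by distribution rather than by factor handles this correctly. Composing this computation with the reduction completes the proof.
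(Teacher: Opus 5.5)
Your proposal is correct and follows the paper's proof essentially step for step. Membership uses the identity-morphism shortcut when $k\ge|T|$, a polynomial-size certificate $(M,\rho)$ otherwise, and enumeration of $\Phi_1$ to find the unsafe pairs. Hardness goes through $L_G$, builds its pointed syntactic monoid in polynomial time from factor distributions plus a zero class, and transfers the answer via Theorem~\ref{alg:thm:relational-characterization} and syntactic invariance. The only detail you leave implicit is the paper's padding of an empty graph with one isolated vertex, which is needed because Theorem~\ref{graph:thm:graph-realization} is stated only for nonempty graphs; the equivalence holds trivially in that case anyway.
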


\begin{proof}
For membership in NP, put \(n=|T|\).  If \(k\ge n\), the identity morphism is
a witness, so assume \(k<n\).  A certificate consists of a multiplication
table for a monoid \(M\) of size at most \(k\), its identity, and the relation
\(\rho\subseteq T\times M\).  Its size is polynomial in the input because
\(k<n\).  The monoid laws, nonempty fibers, identity condition, and inclusions
\(\rho(s)\rho(t)\subseteq\rho(st)\) are polynomial-time checkable.
For each \(s\in T\), enumerate
\(\Phi_1(s)=\{(q_0,q_1)\in T^2:q_0sq_1\in P\}.\)
This identifies all unary unsafe pairs in polynomial time, after which the
condition
\(\rho(s)\cap\rho(t)=\varnothing\) for every unsafe pair is also polynomial.
Thus the problem is in NP.

For NP-hardness, start with a finite simple graph \(G\); if \(V(G)=\varnothing\),
first add one isolated vertex, which preserves 3-colorability.  Hence we may assume
that \(G\) is nonempty and construct the length-three language \(L_G\) of
Theorem~\ref{graph:thm:graph-realization}.  Its pointed
syntactic monoid can be constructed in polynomial time.  Indeed, every word
with nonempty unary distribution is a contiguous factor of a listed
length-three word; there are only polynomially many such factors, while all
nonfactors have empty distribution and form one zero class.  Equality of
syntactic classes is equality of the explicitly enumerable unary
distributions, and multiplication of classes is obtained by concatenating
representatives and looking up the resulting factor class (or the zero class).
Hence the full multiplication table and accepting subset are computable in
polynomial time and have polynomial size.

By Theorem~\ref{alg:thm:relational-characterization} and syntactic invariance,
the least codomain size of a unary-separating relational morphism from this
pointed monoid is exactly \(\cmp_1(L_G)=\chi(G)\).  Therefore \(G\) is
3-colorable exactly when
\({\normalfont\textsc{Pointed-Obs}}_1(T_{L_G},P_G,3)\) is a yes-instance.
The reduction is polynomial, proving NP-hardness.
\end{proof}

\section{Discussion and Open Problems}
\label{sec:discussion}

The central phenomenon is the contrast between local richness and global
rigidity.  Inside \(\mathbf V_{ab}\), individual languages realize arbitrary
finite strict arity prefixes and unbounded successive gaps.  After
uniformizing over a pseudovariety, however,
\(\operatorname{ch}(\mathbf V)\in\{1,2,\infty\}, \qquad \operatorname{ch}(\mathbf V)=\infty \Longleftrightarrow M_{ab}\in\mathbf V.\)
The completely regular saturation theorem explains the finite side: pointed
binary separation reduces discrepancies to multiplicative central defects,
and a shared higher-arity acceptance forces their aggregate defect to vanish.

The SCL formulation also separates geometric complexity from the size of the
least reusable compositional resource.  Nonabelian group kernels can have
rapidly growing principal tuple geometry while the optimum stabilizes at
arity two; at unary arity, graph realization gives arbitrary finite conflict
geometry with \(\cmp_1(L_G)=\chi(G)\).  Independently, the strict unary
examples show that forcing the compression to factor through a syntactic
quotient can have unbounded cost.  The trichotomy therefore concerns the
stabilization of finite compositional separation, not stabilization of the
concept lattices or of quotient complexity.

Two problems seem particularly natural.
\begin{enumerate}[leftmargin=*]
\item \textbf{The remaining height-one boundary inside \(\mathbf{CR}\).}
Corollary~\ref{height:cor:height-one-reduction} reduces the unresolved part of
the classification to completely regular pseudovarieties whose group members
are all abelian.  Is it true that
\(\mathbf V\subseteq\mathbf{CR} \quad\text{and every group in \(\mathbf V\) is abelian} \quad\Longrightarrow\quad \operatorname{ch}(\mathbf V)=1\,?\)
Bands give the extreme case in which all maximal subgroups are trivial, while
commutative monoids give another positive regime.  A concrete first family to
test is provided by finite monoids obtained by adjoining an identity to
completely simple Rees matrix semigroups
\(\mathcal M[A;I,\Lambda;Q]\) over finite abelian groups \(A\), and by the
pseudovarieties they generate.

\item \textbf{Regular-input complexity beyond the pointed-monoid model.}
Theorem~\ref{graph:thm:pointed-npcomplete} settles the unary problem for an
explicit pointed monoid.  Determine the complexity from a DFA, the behavior
when the arity bound varies, and whether explicit syntactic-monoid blow-up can
be avoided.

\end{enumerate}

\section*{Statements and Declarations}

\noindent\textbf{Author contribution.}
The author conceived the study, developed the mathematical results, verified
the proofs, and prepared the manuscript.

\noindent\textbf{Funding.}
The author received no funding for this work.

\noindent\textbf{Competing interests.}
The author declares no competing interests.

\noindent\textbf{Data availability.}
No datasets were generated or analyzed for this theoretical study.

\noindent\textbf{Use of generative AI.}
During manuscript development, the author used OpenAI ChatGPT for assistance
with literature searching, proof auditing, organization, and language editing.
The author remains fully responsible for the mathematical statements, proofs,
citations, and final manuscript.

\end{document}